%% file: main.tex
\documentclass[11pt]{article}
\input{setting}

\title{Deterministic Monotone Min-Plus Product and Convolution}
\author{
Ce Jin\thanks{University of California Berkeley. \textbf{\texttt{cejin@berkeley.edu}}. Supported by the Miller Research Fellowship at the Miller Institute for Basic Research in Science, UC Berkeley.}
\and
Jaewoo Park\thanks{University of California San Diego. \textbf{\texttt{pjaewoo@ucsd.edu}}. Supported by NSF HDR TRIPODS Phase II grant 2217058 (EnCORE Institute).}
\and
Barna Saha\thanks{University of California San Diego. \textbf{\texttt{bsaha@ucsd.edu}}. Supported by NSF HDR TRIPODS Phase II grant 2217058 (EnCORE Institute).}
\and
Yinzhan Xu\thanks{University of California San Diego. \textbf{\texttt{xyzhan@ucsd.edu}}. Supported by NSF HDR TRIPODS Phase II grant 2217058 (EnCORE Institute).}
}

\date{\vspace{-5ex}}

\begin{document}
\pagenumbering{roman}
\maketitle

\begin{abstract}
\input{abstract}
\end{abstract}

\newpage
\pagenumbering{arabic}
\input{introduction}
\input{overview}
\input{preliminaries}
\input{product}

\input{column}
\input{convolution}

\newpage
\bibliographystyle{alphaurl} 
\bibliography{ref}
\appendix
\input{appendix}

\end{document}

%% file: setting.tex
\usepackage[margin=1in]{geometry}
\usepackage[T1]{fontenc}
\usepackage{times}
\usepackage{amsmath,amssymb,amsthm,mathtools}
\usepackage[shortlabels]{enumitem}
\usepackage{array}
\usepackage{tablefootnote}
\usepackage{xurl}
\usepackage[colorlinks,citecolor=blue,linkcolor=blue,urlcolor=red,pagebackref]{hyperref}
\usepackage[capitalise,nameinlink]{cleveref}
\crefname{enumi}{Difficulty}{Difficulties}
\theoremstyle{plain}
\newtheorem{prob}{Problem}
\newtheorem{theorem}{Theorem}[section]
\newtheorem{lemma}[theorem]{Lemma}
\newtheorem{corollary}[theorem]{Corollary}
\newtheorem{claim}[theorem]{Claim}
\newtheorem{fact}[theorem]{Fact}
\theoremstyle{definition}
\newtheorem{definition}[theorem]{Definition}

\newtheorem*{remark*}{Remark}

\newtheorem{innercustomprob}{Problem}
\newenvironment{customprob}[1]
{\renewcommand\theinnercustomprob{#1}\innercustomprob}{\endinnercustomprob}
\newcommand{\eps}{\varepsilon}
\renewcommand{\Pr}{\operatorname*{\mathbf{Pr}}}
\newcommand{\Ex}{\operatorname*{\mathbb{E}}}
\newcommand{\poly}{\operatorname{\mathrm{poly}}}
\newcommand{\polylog}{\poly\log}
\newcommand{\tcO}{\widetilde{\mathcal{O}}}
\newcommand{\hO}{\widehat{\mathcal{O}}}
\newcommand{\dia}{\diamond}
\newcommand{\lm}{{\ell_{\max}}}
\newcommand{\MM}{\operatorname*{MM}}
\newcommand{\high}{\mathrm{high}}
\newcommand{\low}{\mathrm{low}}

\newcommand{\cO}{\mathcal{O}}
\newcommand{\cP}{\mathcal{P}}

\newcommand{\cX}{\mathcal{X}}
\newcommand{\cY}{\mathcal{Y}}

\newcommand{\Z}{\mathbb{Z}}
\newcommand{\Ah}{A^{\high}}
\newcommand{\Bh}{B^{\high}}
\newcommand{\Ch}{C^{\high}}
\newcommand{\Al}{A^{\low}}
\newcommand{\Bl}{B^{\low}}
\newcommand{\Cl}{C^{\low}}

%% file: abstract.tex
The \emph{Monotone Min-Plus Product} problem is a useful primitive that has seen many algorithmic applications over the past decade. It also generalizes various other structured Min-Plus products studied in the literature, such as Bounded Difference Min-Plus Product and Bounded Integer Min-Plus Product. In this problem, we are given two $n\times n$ integer matrices $A$ and $B$, where each row of $B$ is a monotone non-decreasing sequence of integers from $\{1,\dots,n\}$, and the goal is to compute their Min-Plus product, defined as the $n\times n$ matrix $C$ with $C_{i,j} = \min_{k}\{A_{i,k} + B_{k,j}\}$. The fastest known algorithm for this task [Chi, Duan, Xie, and Zhang, STOC'22] runs in $n^{(\omega+3)/2+o(1)} = \cO(n^{2.686})$ time, significantly improving over the brute-force cubic algorithm. However, its main disadvantage is that it requires \emph{randomization}, which is then inherited by all downstream applications.

Our main result is a \emph{deterministic} algorithm for Monotone Min-Plus product with the same time complexity $n^{(\omega+3)/2+o(1)} = \cO(n^{2.686})$ as its randomized counterpart, improving upon the previous deterministic bound $\cO(n^{2.875})$ [Gu, Polak, Vassilevska Williams, and Xu, ICALP'21]. Our derandomization also applies to previously studied extensions and variants (e.g., [D\"urr, IPL'23]), including rectangular matrices, bounded range $[n^\mu]$, and column-monotone matrices. As an immediate consequence, we derandomize state-of-the-art algorithms for multiple problems, including Language Edit Distance, RNA Folding, Optimum Stack Generation, unweighted Tree Edit Distance, Batched Range Mode, and Approximate All-Pairs Shortest Paths.

Our techniques also yield a deterministic algorithm for the \emph{Monotone Min-Plus Convolution} problem that runs in $n^{1.5+o(1)}$ time, nearly matching the best-known randomized time complexity $\tcO(n^{1.5})$ [Chi, Duan, Xie, and Zhang, STOC'22]. This algorithm can be used to derandomize state-of-the-art algorithms for Jumbled Indexing for binary strings and several variants of Knapsack.

%% file: introduction.tex
\section{Introduction}
One of the most fundamental problems in algorithm design is the All-Pairs Shortest Paths problem (APSP), which asks us to compute pairwise distances in an $n$-node weighted graph. APSP is known to be asymptotically equivalent to \emph{Min-Plus product}~\cite{WilliamsW18jacm}, where we are given two $n \times n$ matrices $A$ and $B$ and need to compute an $n \times n$ matrix $A\star B$ such that
\[
    (A\star B)_{i,j} := \min_{1 \le k \le n} \{A_{i,k}+B_{k,j}\}.
\]
The fastest known algorithm for these problems runs in $n^3/2^{\Omega(\sqrt{\log n})}$ time, due to Williams~\cite{Williams18apsp}, and further improving this running time remains a major open problem. In fact, a popular hypothesis in fine-grained complexity is that APSP, or equivalently Min-Plus product, does not admit an $\cO(n^{3-\varepsilon})$-time algorithm for any $\varepsilon > 0$; see~\cite{virgisurvey} for a survey on fine-grained complexity.

Despite the difficulty of designing faster algorithms for Min-Plus product, researchers have identified several structured cases where it admits truly subcubic-time algorithms. For instance, \emph{bounded integer Min-Plus product}, where both matrices have entries from $[M]$\footnote{For a positive integer $m$, $[m]$ denotes $\{1,2,\ldots,m\}$.} for some integer $M$, can be computed in $\hO(Mn^{\omega})$ time~\cite{AlonGM97}.\footnote{In this paper, we use $\hO$ to hide $n^{o(1)}$ factors and $\tcO$ to hide $\polylog n$ factors.}\footnote{$\omega \le 2.372$~\cite{AlmanDWXXZ25} denotes the square matrix multiplication exponent; that is, $\omega$ is the smallest constant such that the product of two $n \times n$ matrices can be computed in $\hO(n^{\omega})$ arithmetic operations. More generally, we use $\omega(a,b,c)$ to denote the smallest constant such that the product of an $n^a \times n^b$ matrix and an $n^b \times n^c$ matrix can be computed in $\hO(n^{\omega(a,b,c)})$ arithmetic operations.} A generalization of bounded integer Min-Plus product is \emph{bounded difference Min-Plus product}, where adjacent entries in the input matrices differ by at most $M$. Bringmann, Grandoni, Saha, and Vassilevska Williams~\cite{DBLP:journals/siamcomp/BringmannGSW19} first studied this setting and gave an $\cO(n^{2.825})$-time randomized algorithm and an $\cO(n^{2.861})$-time deterministic algorithm for the case $M=\cO(1)$. They used bounded difference Min-Plus product as a tool to obtain the first subcubic-time algorithms for Language Edit Distance, RNA Folding, and Optimum Stack Generation.

An important setting that further generalizes bounded difference Min-Plus product, and is the focus of this paper, is \emph{monotone Min-Plus product}; see~\cite{DBLP:conf/icalp/Gu0WX21} for a discussion of this generalization. An $n \times n$ matrix $B$ is called row-monotone if
\begin{itemize}
    \item for every $1 \le i,j \le n$, we have $B_{i,j} \in [n]$;
    \item for every $1 \le i \le n$ and $1 \le j < n$, we have $B_{i,j} \le B_{i,j+1}$.
\end{itemize}
Column-monotone matrices are defined analogously. The task of computing the Min-Plus product between an arbitrary integer matrix $A$ and a row-monotone matrix $B$ is called the monotone Min-Plus product problem. The importance of monotone Min-Plus product is reflected in its wide range of applications, including Language Edit Distance~\cite{DBLP:journals/siamcomp/BringmannGSW19}, RNA Folding~\cite{DBLP:journals/siamcomp/BringmannGSW19}, Optimum Stack Generation~\cite{DBLP:journals/siamcomp/BringmannGSW19}, Batched Range Mode~\cite{DBLP:conf/soda/WilliamsX20, DBLP:conf/icalp/Gu0WX21, DBLP:journals/ipl/Durr23}, unweighted Tree Edit Distance~\cite{Mao21, DBLP:journals/ipl/Durr23, Nogler0SWX025}, Single-Source Replacement Paths~\cite{DBLP:conf/icalp/Gu0WX21, DBLP:journals/ipl/Durr23}, approximate APSP~\cite{DBLP:conf/icalp/DengKRWZ22, DBLP:journals/ipl/Durr23, DBLP:conf/soda/SahaY24}, and $k$-Dyck Edit Distance~\cite{DBLP:journals/talg/FriedGKKPS24, DBLP:journals/ipl/Durr23}.

The fastest algorithm for monotone Min-Plus product is due to Chi, Duan, Xie, and Zhang~\cite{CDXZ22} and runs in time $\hO(n^{(3+\omega)/2}) = \cO(n^{2.686})$. Unfortunately, this running time is achieved by a randomized algorithm, making the fastest algorithms for applications of monotone Min-Plus product randomized as well. More frustratingly, for many of these algorithms, the only source of randomness is the black-box call to monotone Min-Plus product; see \cref{tab:applications} and \cref{tab:2kapsp} for examples. Thus, prior to our work, obtaining deterministic algorithms for these applications required replacing the randomized monotone Min-Plus product algorithm with a slower deterministic one, such as the $\cO(n^{2.875})$-time algorithm of~\cite{DBLP:conf/icalp/Gu0WX21}, or the $\cO(n^{2.861})$-time bounded difference Min-Plus product algorithm of~\cite{DBLP:journals/siamcomp/BringmannGSW19}.

\begin{table}[t]
    \centering
    \begin{tabular}{lcc}
    \hline
     Problem & Randomized Runtime & Deterministic Runtime\\
    \hline
    \hline
       Language Edit Distance
       & $n^{(3+\omega)/2} \le n^{2.686}$~\cite{DBLP:journals/siamcomp/BringmannGSW19,CDXZ22}
       & $n^{2.861}$~\cite{DBLP:journals/siamcomp/BringmannGSW19}\\
        RNA Folding
        & $n^{(3+\omega)/2} \le n^{2.686}$~\cite{DBLP:journals/siamcomp/BringmannGSW19,CDXZ22}
        & $n^{2.861}$~\cite{DBLP:journals/siamcomp/BringmannGSW19}\\
        Optimum Stack Generation
        & $n^{(3+\omega)/2} \le n^{2.686}$~\cite{DBLP:journals/siamcomp/BringmannGSW19,CDXZ22}
        & $n^{2.861}$~\cite{DBLP:journals/siamcomp/BringmannGSW19}\\
        Batched Range Mode
        & $n^{(3+2\omega)/(3+\omega)} \le n^{1.458}$~\cite{DBLP:journals/ipl/Durr23}\tablefootnote{The running time for Batched Range Mode reported in the first version of this paper has been corrected. The exponent originally reported also appeared in \cite{DBLP:journals/ipl/Durr23} and was later revised in the full arXiv version.}
        & $n^{(18+2\omega)/(13+\omega)} \le n^{1.480}$~\cite{Gao022}\\
        Unweighted Tree Edit Distance
        & $n^{(3+\omega)/2} \le n^{2.686}$~\cite{Nogler0SWX025}
        & $n^{2.964}$~\cite{Mao21}\\
        $k$-Dyck Edit Distance
        & $n+k^{4.545}$~\cite{DBLP:journals/talg/FriedGKKPS24,DBLP:journals/ipl/Durr23}
        & $n+k^{4.854}$~\cite{DBLP:journals/talg/FriedGKKPS24}\\
        \hline
    \end{tabular}
    \caption{Applications where the monotone Min-Plus product oracle is the only source of randomization in the fastest known randomized algorithms. Previous best deterministic running times are also listed. The $\hO$-notation is omitted.}
    \label{tab:applications}
\end{table}

\begin{table}[t]
    \centering
    \begin{tabular}{cc}
    \hline
     Additive error $+2k$ & Runtime for $+2k$-approximate APSP \\
    \hline
    \hline
       $+2$  & $n^{2.22548}$  (randomized)~\cite{additiveapsp2026}  \\
       $+4$  & $n^{2.14613}$  (deterministic)~\cite{additiveapsp2026} \\
       $+6$  & $n^{2.10260}$  (deterministic)~\cite{additiveapsp2026} \\
       $+8$  & $n^{2.07727}$  (randomized)~\cite{DBLP:conf/soda/SahaY24} \\
       $+10$ & $n^{2.06194}$  (randomized)~\cite{DBLP:conf/soda/SahaY24} \\
       $\cdots$ & $\cdots$ (randomized)~\cite{DBLP:conf/soda/SahaY24}\\
       \hline
    \end{tabular}
     \caption{Fastest known algorithms for $+2k$-approximate APSP on unweighted undirected graphs. The algorithms of~\cite{DBLP:conf/soda/SahaY24} use the monotone Min-Plus product oracle as their only randomized component. The $\hO$-notation is omitted.}
     \label{tab:2kapsp}
\end{table}

\begin{remark*}
We briefly explain how some of the randomized running times in \cref{tab:applications} and \cref{tab:2kapsp} are obtained. The randomized running times for Language Edit Distance, RNA Folding, and Optimum Stack Generation in \cref{tab:applications} are obtained by replacing the bounded difference Min-Plus product algorithm of~\cite{DBLP:journals/siamcomp/BringmannGSW19} with the monotone Min-Plus product algorithm of~\cite{CDXZ22}. For every positive integer $k$, the algorithm of~\cite{DBLP:conf/soda/SahaY24} for $+2k$-approximate APSP in \cref{tab:2kapsp} runs in $\hO(n^{2+x/(k+1)})$ time, where $x$ is the solution to $1+2x = \omega(1-\frac{k-1}{k+1}x,1-x,1-\frac{k-2}{k+1}x)$, and uses the monotone Min-Plus product oracle as its only source of randomness. For $2k \ge 8$, this is the fastest known algorithm. The work of~\cite{additiveapsp2026} improves the running times of the $+4$- and $+6$-approximation algorithms of~\cite{DBLP:conf/soda/SahaY24} using deterministic algorithms, and further improves the running time for $+2$-approximation using a randomized algorithm whose randomness comes from a source other than the monotone Min-Plus product oracle.
\end{remark*}

A problem closely related to Min-Plus product is Min-Plus convolution. In the \emph{Min-Plus convolution} problem, the input consists of two length-$n$ arrays $A=(A_1,A_2,\ldots,A_n)$ and $B=(B_1,B_2,\ldots,B_n)$, and the output is an array $A\diamond B$ indexed by $k=2,\dots,2n$, where
\[
    (A\diamond B)_k=\min_{1\le i\le k-1}\{A_i+B_{k-i}\},
\]
with out-of-bounds entries set to $\infty$. The fastest algorithm for this problem runs in $n^2/2^{\Theta(\sqrt{\log n})}$ time, which can be obtained by combining Williams' APSP algorithm~\cite{Williams18apsp} with the reduction in~\cite{BremnerCDEHILPT14}. A popular hypothesis in fine-grained complexity is the Min-Plus convolution hypothesis, which postulates that Min-Plus convolution cannot be solved in $\cO(n^{2-\varepsilon})$ time for any $\varepsilon>0$; see~\cite{CyganMWW17,KunnemannPS17}. It is known that the Min-Plus convolution hypothesis implies the Min-Plus product hypothesis~\cite{BremnerCDEHILPT14}.

Similar to Min-Plus product, Min-Plus convolution admits significantly faster algorithms when the input is structured. A length-$n$ array $A$ is called monotone if
\begin{itemize}
    \item for every $1 \le i \le n$, we have $A_i \in [n]$;
    \item for every $1 \le i < n$, we have $A_i \le A_{i+1}$.
\end{itemize}
When both input arrays are monotone, Chan and Lewenstein~\cite{DBLP:conf/stoc/ChanL15} gave an $\cO(n^{1.859})$-time randomized algorithm and an $\cO(n^{1.864})$-time deterministic algorithm. Chi, Duan, Xie, and Zhang~\cite{CDXZ22} improved the randomized running time to $\tcO(n^{1.5})$. Bringmann, D{\"u}rr, and Polak~\cite{DBLP:conf/esa/BringmannD024} generalized this result to a randomized $\tcO(n^{1+\mu/2})$-time algorithm for arrays with entries in $[n^\mu]$ and with only one of the input arrays being monotone.

Monotone Min-Plus convolution also has a wide range of applications. Examples include Jumbled Indexing~\cite{DBLP:conf/stoc/ChanL15}, Knapsack problems with bounded weights and profits~\cite{BringmannC22icalp,DBLP:conf/esa/BringmannD024}, and weak approximation schemes for Knapsack problems~\cite{BringmannC22icalp,chenweakapproxknap}.\footnote{These Knapsack results use Max-Plus convolution instead of Min-Plus convolution, but the two are equivalent up to a sign change.} Due to the gap between the best-known randomized and deterministic algorithms for monotone Min-Plus convolution, the best-known running times for all these applications are achieved by randomized algorithms. For some applications, the monotone Min-Plus convolution subroutine is the only remaining source of randomness; see \cref{tab:applications-convo}.

\begin{table}[t]
    \centering
    \setlength{\tabcolsep}{3pt}
    \begin{tabular}{@{}>{\raggedright\arraybackslash}m{17.8em}cc@{}}
    \hline
     Problem & Randomized Runtime & Deterministic Runtime\\
    \hline
    \hline
       Jumbled Indexing for Binary Strings
       & $n^{1.5}$~\cite{DBLP:conf/stoc/ChanL15, CDXZ22}
       & $n^{1.864}$~\cite{DBLP:conf/stoc/ChanL15} \\
        Unbounded Knapsack
        & $n + w_{\max}\sqrt{p_{\max}}$~\cite{DBLP:conf/esa/BringmannD024}
        & $n + (p_{\max}+w_{\max})^{1.864}$~\cite{BringmannC22icalp} \\
        0-1 Knapsack 
        & $n + W\sqrt{OPT}$~\cite{DBLP:conf/esa/BringmannD024}
        & $n + W\cdot OPT^{0.937}$~\cite{BringmannC22icalp,chan26} \\
        Weakly-Approximate Unbounded Knapsack
        & $n + 1/\eps^{1.5}$~\cite{BringmannC22icalp}
        & $n + 1/\eps^{1.864}$~\cite{BringmannC22icalp} \\
    \end{tabular}
    \caption{Applications of monotone Min-Plus convolution for which, after applying known derandomizations where needed, the only remaining source of randomization is the monotone Min-Plus convolution oracle. The table also lists the previously best-known deterministic running times. For Knapsack problems, $w_{\max}$ and $p_{\max}$ bound item weights and profits, respectively; $W$ is the capacity and $OPT$ is the optimal profit. The $\tcO$-notation is omitted.}
    \label{tab:applications-convo}
\end{table}

\begin{remark*}
We explain how some of the running times in \cref{tab:applications-convo} are obtained. The randomized running time for Jumbled Indexing for Binary Strings follows by replacing the monotone Min-Plus convolution algorithm in~\cite{DBLP:conf/stoc/ChanL15} with the faster randomized algorithm of~\cite{CDXZ22}. For Unbounded Knapsack, the randomized $\tcO(n+(w_{\max}+p_{\max})^{1.5})$-time algorithm of Bringmann and Cassis~\cite{BringmannC22icalp}, combined with the rectangular and $[n^\mu]$-bounded generalization of~\cite{DBLP:conf/esa/BringmannD024}, yields the refined bounds $\tcO(n+w_{\max}\sqrt{p_{\max}})$ and $\tcO(n+p_{\max}\sqrt{w_{\max}})$.

For $0$-$1$ Knapsack, the randomized $\tcO(n+W\sqrt{OPT})$ bound follows from~\cite[Theorem~12]{DBLP:conf/esa/BringmannD024}. We state this bound, rather than their stronger $\tcO(n+W\sqrt{p_{\max}})$ bound, because it is the one to which our derandomization applies. The deterministic 0-1 Knapsack bound shown in the table is not explicitly stated in the literature. The $0$-$1$ Knapsack algorithm of~\cite{BringmannC22icalp} has another source of randomness, namely the random partitioning technique originating from~\cite{Bringmann17,CyganMWW17}. This additional source of randomness was recently derandomized by~\cite{chan26}; see~\cite[Appendix~A]{chan26} for a deterministic reduction from $0$-$1$ Knapsack to Min-Plus convolution in~\cite{CyganMWW17}. The same derandomization also applies to the random partitioning step used in~\cite{BringmannC22icalp}. More specifically, if monotone Min-Plus convolution with $[n^\mu]$-bounded inputs can be solved deterministically in $\tcO(n^{1+\alpha\mu})$ time, then the results of~\cite{chan26,BringmannC22icalp} imply a deterministic $0$-$1$ Knapsack algorithm running in $\tcO(n+W\cdot OPT^\alpha)$ time. The previously known deterministic algorithm for $[n^\mu]$-bounded monotone Min-Plus convolution can be obtained using the deterministic $\cO(n^{1.864})$-time algorithm for the $[n]$-bounded case~\cite{DBLP:conf/stoc/ChanL15} as a black box, together with blocking and balancing (details omitted). This yields a deterministic $\tcO(n^{1 + 2\mu/(4 - 1.864)})=\tcO(n^{1+0.937\mu})$-time algorithm for $[n^\mu]$-bounded monotone Min-Plus convolution, and therefore yields the deterministic $\tcO(n+W\cdot OPT^{0.937})$ bound shown in the table.
\end{remark*}

\subsection{Our Results}
We close the gap between the deterministic and randomized running times for monotone Min-Plus product and monotone Min-Plus convolution. Our work is inspired by the randomized algorithm of Chi, Duan, Xie, and Zhang~\cite{CDXZ22}. However, our presentation is more streamlined due to several reductions. Moreover, our algorithm for monotone Min-Plus product applies to rectangular matrices and to the case where the entries of $B$ lie in an arbitrary range $[n^\mu]$, rather than only in $[n]$. This matches the extensions studied in the randomized setting by~\cite{DBLP:journals/ipl/Durr23,DBLP:journals/talg/FriedGKKPS24,DBLP:conf/soda/SahaY24}, which generalized the framework of~\cite{CDXZ22}.

\begin{theorem}
\label{thm:main}
There is a deterministic algorithm that, given an $n^a\times n^b$ integer matrix $A$ and an $n^b\times n^c$ row-monotone matrix $B$ with entries in $[n^\mu]$, computes their Min-Plus product in time 
\[
    \hO(n^{a+c} + n^{b + c} + n^{(a+b+\mu+\omega(a,b,c))/2}).
\]
\end{theorem}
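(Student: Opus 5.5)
I will follow the scaling-and-refinement framework of Chi, Duan, Xie, and Zhang, replacing its random sampling step with a deterministic one. Set a threshold $\ell$, to be balanced at the end. The computation proceeds over $\cO(\log n)$ levels. At level $t$ we consider $B^{(t)} = \lfloor B/2^{t}\rfloor$, which is still row-monotone, with entries in $[n^\mu/2^t]$. We also use a correspondingly rounded $A$ and compute $C^{(t)} = A\star B^{(t)}$, starting from the coarsest level, where the product is trivial. Passing from level $t+1$ to level $t$ changes every candidate sum $A_{i,k}+B_{k,j}$ by only $\cO(1)$ after rescaling. Hence $C^{(t)}_{i,j}$ is determined up to an additive $\cO(1)$ by $2C^{(t+1)}_{i,j}$. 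The real task is therefore to identify, for each $(i,j)$, a witness $k$ whose sum falls within an $\cO(1)$ window of the known estimate.

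At each level I split the pairs $(k,j)$ using monotonicity. In row $k$ of $B^{(t)}$, the value changes only at breakpoints. Since the row is monotone with range $R=n^\mu/2^t$, there are at most $R$ breakpoints per row. I group the columns $j$ into $n^{c}/\ell$-sized intervals, or more precisely I choose intervals so that the total number of (row, interval) pairs in which $B_{k,\cdot}$ varies by more than $\cO(1)$ is bounded by $\cO(n^{b}\cdot n^\mu\cdot\ell / n^{\mu})$ after balancing.

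On the \emph{flat} pairs, where $B_{k,j}$ is within $\cO(1)$ of a single value across the interval, the contribution is a bounded-range product. After subtracting the estimate from the previous level, the relevant entries of $A$ and $B$ lie in a window of size $\cO(1)$ relative to the estimate. This can be handled by $\hO(\ell)$-bounded-integer min-plus product via polynomial matrix multiplication, in the style of Alon--Galil--Margalit, in time $\hO(\ell\, n^{\omega(a,b,c)})$ per level. For the \emph{non-flat} pairs, brute force enumerates each pair $(k,\text{interval})$ against all $i$ and all $j$ in the interval. This costs roughly $n^{a}\cdot n^{b}\cdot n^{\mu}/\ell$ in total.

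Balancing $\ell n^{\omega(a,b,c)}$ against $n^{a+b+\mu}/\ell$ gives $n^{(a+b+\mu+\omega(a,b,c))/2}$. The terms $n^{a+c}$ and $n^{b+c}$ cover output size and reading $B$, together with the bookkeeping for breakpoints.

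The main obstacle is the step where CDXZ use randomness: ensuring that the witness is \emph{consistent} across levels. The $k$ attaining $C^{(t+1)}_{i,j}$ need not be near-optimal at level $t$, and the bounded-range product only detects sums within the window. Entries of $A$ far above the estimate must be discarded (set to $\infty$) without losing the true minimizer. In the randomized algorithm this is handled by random sampling of a set of $k$'s and hitting-set arguments.

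I would derandomize this as follows. For each $(i,j)$, use the previous-level estimate to define, row by row, the set of $k$ whose $A_{i,k}$ is within $\cO(1)$ of $C^{(t+1)}_{i,j}-B^{(t+1)}_{k,j}$. The monotonicity of $B$ along $j$ means this set changes only at breakpoints. Its characteristic matrix can therefore be represented as a sum of $\cO(\ell)$ structured terms, computed deterministically by prefix sums over intervals. This replaces the random hitting set with an explicit partition into "low" and "high" parts, in the spirit of the $\Ah,\Al$ notation, processed at cost matching the balance above. I expect this structural decomposition to be the delicate part. Specifically, it requires proving that the number of high-range entries per row is bounded by $n^\mu/\ell$ deterministically, while the rest of the argument is routine balancing.
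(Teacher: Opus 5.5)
Your outer framework (recursive halving, after which each entry of $A\star B$ is known up to an additive $\{0,1,2\}$ and only a verification problem remains) matches the paper's \cref{lem:reduction-to-verification}. The core step, however, has a genuine gap.

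\textbf{The bounded-range product is not available.} Your claim that, after subtracting the previous-level estimate, ``the relevant entries of $A$ and $B$ lie in a window of size $\cO(1)$'' is false. The estimate constrains only the combination $A_{i,k}+B_{k,j}-C_{i,j}$. The entries $A_{i,k}$ and $B_{k,j}$ individually still range over $[n^\mu]$. Moreover, $C_{i,j}$ depends on both $i$ and $j$, so it cannot be absorbed into either factor. Hence no Alon--Galil--Margalit style bounded-range product of cost $\hO(\ell\, n^{\omega(a,b,c)})$ applies to the ``flat'' pairs, on which $A$ is still unbounded.

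The brute force on ``non-flat'' pairs also does not cost $n^{a+b+\mu}/\ell$. Each non-flat (row, interval) pair costs $n^a$ times the interval length. With spread-out breakpoints this totals $n^{a+b+\min(c,\mu)}$, with no $1/\ell$ saving.

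\textbf{What the paper does instead.} The entries become effectively bounded through a modulus. First, residues are shifted so that all entries are $\le M/10$ modulo $M$ (\cref{lem:reduction-to-small-modM}). Then, for a modulus $M\le Q\le Mn^{o(1)}$, polynomial matrix multiplication over $\mathbb{F}[x]/(x^Q-1)$ counts, in $\hO(Qn^{\omega(a,b,c)})$ time, the number $s_{i,j}$ of $k$ with $A_{i,k}+B_{k,j}\equiv C_{i,j}\pmod Q$. A true witness is exactly a congruent $k$ whose high parts $\lfloor\cdot/M\rfloor$ also agree. So one subtracts the number $s'_{i,j}$ of spurious congruences. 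These are found by top-down refinement through \emph{active} segments, which are defined by the rounded rows of \emph{both} $B$ and $C$ (you only use $B$). The refinement starts from the $\cO(n^{a+b+\mu}/M)$ level-$\lm$ segments, and that is where the $n^{a+b+\mu}/M$ term actually comes from.

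\textbf{The randomness is misplaced.} CDXZ do not use sampling or hitting sets to keep witnesses consistent across levels. Their only random choice is the prime modulus, which must keep the number of active segments at $\hO(n^{a+b+\mu}/Q)$ at every level simultaneously. The paper derandomizes exactly this step. It builds $Q$ as a product of primes from $[R/2,R]$ with $R=n^{o(1)}$. Each prime is selected using the counts $Y_{\ell,t}(p)$, which are computable by polynomial matrix multiplication with boundary indicators. The key observation is that $Y_{\ell,t}(p)-X_{\ell,t}(p)=Z_\ell$ does not depend on $p$. Therefore minimizing $\max_\ell\{Y_{\ell,t}(p)-\min_{p'}Y_{\ell,t}(p')\}$ controls the uncomputable false-positive counts $X_{\ell,t}(p)$.

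Your proposed replacement comes with no argument: representing the witness indicator for every $(i,j)$ as a sum of $\cO(\ell)$ structured terms, and bounding ``high-range entries per row'' by $n^\mu/\ell$. You yourself leave it as the unproven crux. The proposal is thus missing both the mechanism that produces the $\hO(Mn^{\omega(a,b,c)})$ term and the deterministic construction of the modulus.
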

Note that the terms $n^{a+c}$ and $n^{b+c}$ in the running time above are necessary for reading the input and writing the output. As in~\cite{CDXZ22}, our algorithm also extends to the case where $B$ is column-monotone. Owing to our more streamlined algorithm, this extension is more modular than in prior approaches. The running time differs from the row-monotone case, consistent with the literature~\cite{DBLP:journals/ipl/Durr23}.

\begin{theorem}
\label{thm:main-column}
There is a deterministic algorithm that, given an $n^a\times n^b$ integer matrix $A$ and an $n^b\times n^c$ column-monotone matrix $B$ with entries in $[n^\mu]$, computes their Min-Plus product in time
\[
    \hO\left(n^{a+b}+n^{b+c}+n^{(a+c+\mu+\omega(a,b,c))/2}\right).
\]
\end{theorem}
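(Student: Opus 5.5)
The plan is to reuse the bit-by-bit (scaling) framework behind \cref{thm:main} and change only the part that uses monotonicity. Column monotonicity gives the following: for every column $j$ and every threshold range of values, the set $\{k : B_{k,j}\in[v,v+L)\}$ is a contiguous interval of indices $k$. So the monotonicity now lives along the summation index $k$ and is column-specific. This should explain why $n^{a+c}$ (the output size, indexed by $(i,j)$) replaces $n^{a+b}$ in the balanced term, and why $n^{a+b}$ (reading $A$ and preprocessing its rows) appears as an additive term.

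\textbf{Step 1 (coarse approximation).} Fix a parameter $L$ and let $\widetilde B_{k,j}=\lfloor B_{k,j}/L\rfloor$. For each column $j$, $\widetilde B_{\cdot,j}$ takes at most $n^{\mu}/L$ distinct values, each on an interval $I_{j,t}$ of $k$'s.
\begin{itemize}
\item Preprocess every row of $A$ for $\cO(1)$-time range-minimum queries, in $\tcO(n^{a+b})$ total time.
\item For each pair $(i,j)$, set $\widetilde C_{i,j}=\min_t\{\min_{k\in I_{j,t}}A_{i,k}+tL\}$.
\end{itemize}
This costs $\cO(n^{a+c}\cdot n^{\mu}/L)$ time (plus $n^{b+c}$ to read $B$ and find the intervals). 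It satisfies $|\widetilde C_{i,j}-C_{i,j}|<L$, and it also returns, for each $(i,j)$, a witness bucket structure.

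\textbf{Step 2 (exact refinement with an $\cO(L)$ error window).} Given the approximation $\widetilde C$, we want to compute $C$ exactly using matrix products of cost roughly $L\cdot n^{\omega(a,b,c)}$. The first idea is to restrict attention to the pairs $(i,k)$ and $(k,j)$ whose contribution lies in the window $[\widetilde C_{i,j}-L,\widetilde C_{i,j}+L)$ and run an $[\cO(L)]$-bounded Min-Plus product (Alon--Galil--Margalit style, $\hO(L\,n^{\omega(a,b,c)})$).

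The problem is that the shift $\widetilde C_{i,j}$ depends on both $i$ and $j$, so it cannot be absorbed into $A$ or $B$ separately. I would handle this exactly as the row-monotone section does, by recursing over the bits of $B$. Write $B=2B^{\high}+B^{\low}$. Recursively computing $A\star B^{\high}$ gives an approximation of $C$ with additive error $\cO(1)$ per level. The paper's core deterministic subroutine then turns an approximation with small error into exact values.

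The only change from the row case concerns where the candidate indices come from. In the row case, candidates $k$ for $(i,j)$ are organized through the rows of $B$. Here I would organize them through the column intervals $I_{j,t}$ and row-wise range-minimum queries on $A$. With this organization, the ``sparse'' part of the subroutine costs $n^{a+c}$ per bucket, not $n^{a+b}$.

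\textbf{Step 3 (balancing).} The total cost is $\hO(n^{a+b}+n^{b+c}+n^{a+c+\mu}/L+L\,n^{\omega(a,b,c)})$. Choosing $L=n^{(a+c+\mu-\omega(a,b,c))/2}$, clamped to $[1,n^{\mu}]$, gives the claimed bound $\hO(n^{a+b}+n^{b+c}+n^{(a+c+\mu+\omega(a,b,c))/2})$.

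The main obstacle is Step 2: showing that the deterministic refinement subroutine used for \cref{thm:main} still works when the structural guarantee is column-interval monotonicity rather than row monotonicity. Concretely, one must check that the ``heavy/light'' candidate classification can be carried out with interval range-minimum queries on rows of $A$, and that the deterministic replacement for random sampling in that subroutine only ever needs monotonicity along one of the indices. If the subroutine is stated abstractly in terms of a value-bucketing with contiguous buckets, I expect this to go through by transposing roles. Otherwise I would first try to state such an abstract version and then instantiate it twice.
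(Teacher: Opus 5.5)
There is a genuine gap in Step 2. It carries the entire difficulty and is only asserted. (Your Step 1 is correct. With $L=1$ it is already exact, and that is how the regime $a+c+\mu<\omega(a,b,c)$ should be handled; your cost formula, which keeps a term $L\,n^{\omega(a,b,c)}$ with $L\ge 1$, would overshoot the target there.)

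The row-monotone verifier cannot be reused as a black box. Its segments $(i,k,[j_0,j_1])$ run along $j$, so it needs the rows of both $B$ and the candidate matrix $C$ to be monotone. Here $B$ is monotone only along $k$, and $A\star B$ has no monotonicity in any direction. ``Transposing roles'' therefore means taking segments along the summation index $k$ for each fixed $(i,j)$: maximal $k$-intervals on which both $\lfloor A_{i,k}/2^\ell\rfloor$ and $\lfloor B_{k,j}/2^\ell\rfloor$ are constant. This works only if there are $\cO(n^{\mu}/2^\ell)$ such intervals per $(i,j)$ and $A_{i,k}+B_{k,j}$ varies by $\cO(2^\ell)$ inside each one. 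Both properties fail when the rows of $A$ are arbitrary. Range-minimum queries do not repair this, because the verifier must \emph{count} all $k$ in a residue class modulo $Q$, not find a minimum. Also, there is no heavy/light classification or sampling to derandomize; the only random choice being removed is the modulus $Q$.

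The missing idea is to make the rows of $A$ monotone. Since $B$ is column-monotone, replacing $A_{i,k}$ by $\min_{k'\le k}A_{i,k'}$ leaves $A\star B$ unchanged, so every row of $A$ may be assumed non-increasing. After the usual halving, which produces candidates $C^{(s)}=2C'+s$, use the identity $A_{i,k}+B_{k,j}=C^{(s)}_{i,j}\iff (W-C^{(s)}_{i,j})+B_{k,j}=W-A_{i,k}$. This turns each check into the instance $(W-C^{(s)},B^T,W-A)$ of \cref{prob:exact-tri-prime}. In that instance $B^T$ and $W-A$ are row-monotone, and for each $(i,j)$ one asks whether some $k$ is a witness, where $k$ is the monotone index.

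The same machinery then applies with roles rotated:
\begin{itemize}
\item the good modulus is found exactly as before;
\item the congruence counts come from one polynomial matrix product;
\item the spurious counts are obtained by adding $k_1-k_0+1$ over the active level-$0$ segments $(i,j,[k_0,k_1])$ whose congruence holds.
\end{itemize}
There are $\cO(n^{a+c+\mu}/2^\ell)$ level-$\ell$ segments, and balancing $M$ yields $n^{(a+c+\mu+\omega(a,b,c))/2}$. Your remark about transposing roles points in this direction, but without the prefix-minimum step and this explicit reduction, Step 2 is not established.
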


Very recently, it was shown in~\cite{Fischer26} that, unless the APSP hypothesis fails, there is no $\cO(n^{2.5-\eps})$-time algorithm for column-monotone Min-Plus product for any $\eps > 0$ when $a=b=c=\mu=1$. Thus, if $\omega=2$, our algorithm is near-optimal in this case under the APSP hypothesis.

As in~\cite{CDXZ22}, our techniques also extend to monotone Min-Plus convolution. Our algorithm for monotone Min-Plus convolution applies when the array entries lie in $[n^\mu]$. Its running time essentially matches the best-known randomized running time~\cite{CDXZ22,DBLP:conf/esa/BringmannD024}.

\begin{theorem}
\label{thm:main-conv}
There is a deterministic algorithm that, given two length-$n$ monotone arrays $A$ and $B$, both with entries in $[n^\mu]$, computes their Min-Plus convolution in time 
\[
\hO\left(n^{1+\mu/2}\right).
\]
\end{theorem}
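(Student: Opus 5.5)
We follow the CDXZ blueprint: a recursive \emph{scaling} on the value range $[n^\mu]$, where the only randomized ingredient is replaced by a deterministic one. First, by standard reductions we may assume $A$ is also monotone in the relevant sense. For convolution, $A$ and $B$ are symmetric. We write $A_i=2\Ah_i+\Al_i$ with $\Ah_i=\lfloor A_i/2\rfloor$, and similarly for $B$. The halved arrays are monotone with entries in $[n^{\mu}/2]$, so recursively we obtain $\Ch=\Ah\dia\Bh$ exactly. Then for every $k$ the true answer satisfies $C_k\in[2\Ch_k,2\Ch_k+2]$. Hence it suffices, for each $k$, to find $\min_i\{A_i+B_{k-i}\}$ restricted to pairs $(i,k-i)$ with $2(\Ah_i+\Bh_{k-i})\le 2\Ch_k+2$, i.e., $\Ah_i+\Bh_{k-i}-\Ch_k\in\{0,1\}$. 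Following the paper, we then reduce this task to a ``bounded-offset'' problem. We use $\cO(\log n)$ levels; the per-level cost must therefore be $\hO(n^{1+\mu/2})$ at the top level and geometrically no larger below, with $n^{o(1)}$ overhead absorbed.

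For one level, we partition the index range into blocks of length $L$, with $L$ chosen later. Monotonicity implies that within a block $I$ of $A$, the values $\Ah$ vary over a range $\Delta_I$, and $\sum_I\Delta_I\le n^\mu$; the same holds for $B$. We pair the blocks $I$ of $A$ and $J$ of $B$ whose index sums intersect a target output block. For a block pair, the candidates are those with $\Ah_i+\Bh_j\approx \Ch_k$. Since $\Ch$ is itself monotone-ish along outputs (it is non-decreasing up to the structure of sums), only $\cO(\Delta_I+\Delta_J+\text{range of }\Ch\text{ on the output block}+1)$ distinct offset values matter. Subtracting a block-reference value, each block pair yields a convolution of length-$L$ arrays whose relevant values lie in a window of size $W=\cO(\Delta_I+\Delta_J+\Delta_K+1)$. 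The CDXZ argument handles such a pair deterministically except for one step: restricting to the correct offset class. That step requires a polynomial encoding (FFT with entries $x^{\text{value}}$) for small windows, and brute force / naive handling where the window is large. The FFT with value-polynomials costs $\tcO(LW)$ per block pair, which is deterministic. The randomized step in CDXZ was a random shift/hashing used to bound the number of ``bad'' pairs whose window is large. We instead charge deterministically. Summed over pairs, the FFT cost is $\sum(L\cdot W)$. Because each output block interacts with $\cO(n/L)$ block pairs along a diagonal, and each diagonal sums $\Delta$'s to at most $\cO(n^\mu)$, the total is $\tcO((n/L)\cdot L\cdot n^\mu+ (n/L)^2L)=\tcO(n^{1+\mu}\cdot\text{?})$. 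That is too much if done naively, so we split instead. Pairs with $W\le T$ use FFT, costing $\tcO((n/L)^2 L T)$. Pairs with $W>T$ are few: by the sum bound, at most $n^\mu/T$ blocks per array are heavy. Those are handled by brute force over the $L$ indices against the full other array, costing $\cO((n^\mu/T)Ln)$. We balance $L$ and $T$ to reach $\hO(n^{1+\mu/2})$; for example, $L=n^{1-\mu/2}\cdot$(something) and $T=n^{\mu/2}$. We will tune the parameters so that the heavy cost and the light cost both equal $n^{1+\mu/2}$.

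The main obstacle is the offset-restriction step within the light pairs. We need the minimum of $\Al_i+\Bl_j$ only over pairs whose high parts hit $\Ch_k$ or $\Ch_k-1$, and $\Ch_k$ varies with $k$ inside an output block. Our first attempt is to encode with a two-variable polynomial $\sum_i x^{i}y^{\Ah_i-a_I}z^{\Al_i}$, where $z$ has degree $\le 1$ after normalizing low bits. Multiplying per block pair and reading off the coefficients at $y^{\Ch_k-a_I-b_J}$ and $y^{\Ch_k-1-a_I-b_J}$ gives the exact candidate sets deterministically in $\tcO(LW)$ time. This replaces randomized sampling by exact bookkeeping, at the cost of the window size $W$, which is precisely what the heavy/light split controls. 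We expect the subtle point to be showing that $\Ch$ has small range on each light output block relative to $W$. For this we use $\Ch_k\in[\min,\max]$ of $\Ah_I+\Bh_J$ over the contributing blocks, which is bounded by $\Delta_I+\Delta_J$ plus the variation between neighboring blocks, again charged to the $\sum\Delta\le n^\mu$ budget.
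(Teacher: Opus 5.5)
There is a genuine gap in the per-level block algorithm. Your outer scaling recursion is fine; it is the paper's halving reduction. The block scheme, however, cannot reach $\hO(n^{1+\mu/2})$ for any choice of parameters. By your own estimates, light pairs cost $\tcO(n^2T/L)$ and heavy blocks cost $\cO(n^{1+\mu}L/T)$. Their product is $n^{3+\mu}$ regardless of $L$ and $T$, so the larger of the two is at least $n^{(3+\mu)/2}$. That is a factor $\sqrt n$ above the target, and for $\mu=1$ it is $n^2$, no gain over brute force.

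Concretely, making the light cost at most $n^{1+\mu/2}$ forces $L\ge n^{1-\mu/2}T$. For an input like $A_i\approx i\,n^{\mu-1}$, each of the $n/L$ blocks then has $\Delta_I\approx n^{\mu}L/n\ge n^{\mu/2}T>T$. So every block is heavy, and your brute force costs $\Theta(n^2)$. Refining the light accounting to $\sum_{I,J}L(\Delta_I+\Delta_J+1)$ does not help, since that sum is $\Theta(n\sum_I\Delta_I)=\Theta(n^{1+\mu})$. The root cause is that encoding the high part exactly with a variable $y$ always pays for its full range, and nothing in the scheme compresses it. Also, the randomness in CDXZ is not a hashing that bounds ``pairs with large windows,'' so there is no randomized step here that your deterministic charging is actually replacing.

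The missing idea is the modular compression that the paper, following CDXZ, uses. After the halving reduction, a residue-shifting reduction ensures all residues modulo $M$ are at most $M/10$. One then counts the pairs with $A_i+B_{k-i}\equiv C_k\pmod Q$ via FFT over $\mathbb{F}[x]/(x^Q-1)$ in $\tcO(Qn)$ time. From this one subtracts the spurious matches, those with $\Ah_i+\Bh_{k-i}\ne\Ch_k$ (high parts taken with respect to $M$). These are enumerated top-down through a hierarchy of active segments, in time proportional to their number.

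That number is $\hO(n^{1+\mu}/Q)$ only if $Q$ is a good modulus. The randomness in CDXZ is exactly the random prime $Q$. The paper derandomizes it by building $Q$ as a product of primes from $[R/2,R]$, choosing each prime to minimize $\max_\ell\{Y_{\ell,t}(p)-\min_{p'}Y_{\ell,t}(p')\}$. Here $Y_{\ell,t}(p)$ is computable by FFT and differs from the uncomputable spurious count $X_{\ell,t}(p)$ by an offset independent of $p$. Balancing $Mn$ against $n^{1+\mu}/M$ with $M=n^{\mu/2}$ gives the theorem.

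Without some mechanism that replaces the full high-part range by a modulus of size about $n^{\mu/2}$, while certifiably controlling false matches, your decomposition has no way to get below the bounds computed above.
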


Using the reduction of~\cite[Theorem~8]{DBLP:conf/esa/BringmannD024} from the case where both arrays are monotone to the case where only one array is monotone, our algorithm also applies when only one of the two arrays is monotone.

Our algorithms can be directly applied to many problems, derandomizing the best randomized algorithms.

\begin{corollary}
All problems in \cref{tab:applications,tab:applications-convo}, as well as $+2k$-approximate APSP for $2k\ge 8$ in \cref{tab:2kapsp}, can be solved by deterministic algorithms with the listed randomized running times, up to an $n^{o(1)}$ factor.
\end{corollary}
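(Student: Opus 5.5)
The corollary is proved by black-box substitution. For each listed application, the cited fastest randomized algorithm uses randomness only through calls to a monotone Min-Plus product oracle (\cref{tab:applications,tab:2kapsp}) or a monotone Min-Plus convolution oracle (\cref{tab:applications-convo}). We replace each such call by the deterministic algorithm of \cref{thm:main}, \cref{thm:main-column}, or \cref{thm:main-conv}. The resulting algorithm is deterministic, and we then check that its running time matches the randomized bound up to an $n^{o(1)}$ factor.

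Concretely, the verification goes problem by problem.

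\textbf{Language Edit Distance, RNA Folding, Optimum Stack Generation.} In these, \cite{DBLP:journals/siamcomp/BringmannGSW19} invokes square bounded-difference Min-Plus products. These are instances of monotone product, possibly after the standard transformation described in \cite{DBLP:conf/icalp/Gu0WX21}. \cref{thm:main} with $a=b=c=\mu=1$ gives $\hO(n^{(3+\omega)/2})$ per call. Since the oracle time is the dominant term in the recurrence, the total is unchanged up to $n^{o(1)}$.

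\textbf{Batched Range Mode, $k$-Dyck Edit Distance, unweighted Tree Edit Distance, $+2k$-APSP for $2k\ge 8$.} Here the randomized algorithms of \cite{DBLP:journals/ipl/Durr23,DBLP:journals/talg/FriedGKKPS24,Nogler0SWX025,DBLP:conf/soda/SahaY24} use the rectangular, $[n^\mu]$-bounded, and sometimes column-monotone generalizations. Their randomized oracle bounds, taken from D\"urr's extension of \cite{CDXZ22}, are exactly the expressions in \cref{thm:main} and \cref{thm:main-column}. The same parameter optimization therefore goes through verbatim and reproduces, for example, the exponent $2+x/(k+1)$ for approximate APSP.

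\textbf{Convolution applications.} Jumbled Indexing uses \cref{thm:main-conv} with $\mu=1$. Unbounded Knapsack and Weakly-Approximate Knapsack use the $[n^\mu]$-bounded version; where only one array is monotone, we first apply the deterministic reduction of \cite[Theorem~8]{DBLP:conf/esa/BringmannD024}, which is itself deterministic. This yields $\hO(n^{1+\mu/2})$, which matches the randomized bound up to $n^{o(1)}$. For $0$-$1$ Knapsack we combine this with the derandomized partitioning of \cite{chan26}. By the argument sketched in the remark, a deterministic $\tcO(n^{1+\alpha\mu})$ convolution with $\alpha=1/2$ gives $\hO(n+W\sqrt{OPT})$.

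The main obstacle is bookkeeping rather than new ideas: we must confirm that each cited algorithm uses no other randomness. Where it does, that randomness must already be derandomized, as with the partitioning step handled by \cite{chan26}. We must also confirm that each oracle call is made with parameters $(a,b,c,\mu)$ for which our deterministic bound equals the randomized one used in the original analysis. Because our theorems match the randomized bounds of \cite{CDXZ22,DBLP:journals/ipl/Durr23,DBLP:conf/esa/BringmannD024} in every parameter regime up to $n^{o(1)}$, this check reduces to citing the respective analyses. The $n^{o(1)}$ losses from polylogarithmically many recursion levels remain $n^{o(1)}$.
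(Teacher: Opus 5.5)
Your proposal is correct and follows essentially the paper's argument. The corollary follows by substituting the deterministic \cref{thm:main,thm:main-column,thm:main-conv} for the randomized oracles, relying, as the table captions and remarks do, on the monotone oracle being the only remaining randomness, with \cite{chan26} handling the partitioning step for 0-1 Knapsack. One small nit: \cref{thm:main-conv} gives an $\hO$ rather than a $\tcO$ bound, so the remark's $\tcO(n^{1+\alpha\mu})$ hypothesis holds only up to a subpolynomial factor. That is exactly why your conclusion is, correctly, $\hO(n+W\sqrt{OPT})$ and not $\tcO(n+W\sqrt{OPT})$.
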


In fact, we can also derandomize the $+2$-approximate APSP algorithm of~\cite{additiveapsp2026} shown in \cref{tab:2kapsp}. This algorithm does not use monotone Min-Plus product, so its derandomization is not an immediate consequence of our algorithms; nevertheless, it can be derandomized using standard techniques. The proof can be found in \cref{appendix:apsp}. 

%% file: overview.tex
\section{Technical Overview}
\label{sec:overview}
In this technical overview, we focus on row-monotone Min-Plus product in the parameter regime originally considered by~\cite{CDXZ22}, where both input matrices have dimension $n\times n$ and entries from $[n]$. The extensions to rectangular matrices, larger entry ranges, column-monotone matrices, and monotone Min-Plus convolution are handled in the main body.

Our algorithm is a derandomization of the algorithm of~\cite{CDXZ22}. The only source of randomization in their algorithm is to pick a random prime $Q$ from some appropriate range. Hence, it may appear straightforward to derandomize their algorithm because there is now a standard recipe to derandomize the choice of a prime (see, e.g.,~\cite{DBLP:conf/stoc/ChanL15,FischerK024}): Instead of choosing a prime from a relatively large range, one takes $Q$ to be the product of several smaller primes from a smaller range, which are found one by one. For each such prime, one can afford to deterministically enumerate all primes in the smaller range and choose one that works. However, this approach faces several difficulties when applied to the algorithm of~\cite{CDXZ22}.

\begin{enumerate}[A]
    \item 
    \label{item:first-difficulty}
    The intuition for why \cite{CDXZ22} needs a random prime $Q$ is the following. There is an implicit set $S$ of roughly $n^3$ nonzero integers, and we want only a small number of integers in $S$ to be divisible by $Q$; call this quantity $f_Q$. When $Q$ is a random prime from $[M]$, the expected value of $f_Q$ can be upper bounded by roughly $n^3 / M$. If we aim to use the standard approach to derandomize the choice of $Q$, we would keep finding primes $p_i$ from a smaller range $[R]$, and maintain $Q = \prod_{i=1}^t p_i$, where $t$ is the current number of primes we have determined. For each additional prime, we expect $f_Q$ to drop by a factor of roughly $R$. In theory, one can show that such a prime exists, and even a random prime likely works. However, in order for a deterministic algorithm to pick $p_t$, it would naively need to compute the value of $f_Q$ for each choice of $p_t$. Unfortunately, there does not seem to be an efficient algorithm for this task, which marks the first difficulty in applying this derandomization approach. 
    \item 
    \label{item:second-difficulty}
    In addition to the previous role, the prime $Q$ in \cite{CDXZ22} also serves another role: it is used to separate an integer $x$ into a high-order part ($\lfloor x/Q\rfloor$) and a low-order part ($x \bmod Q$). If we use the standard approach to derandomize the choice of $Q$ by multiplying small primes one by one, the high-order parts and low-order parts of the integers keep changing, making it difficult to derandomize the algorithm. 
\end{enumerate}

\paragraph*{Standard Reductions.} To address \cref{item:second-difficulty}, we separate the two roles played by $Q$. As it turns out, it suffices to pick an arbitrary integer from an appropriate range to separate the integers into high-order parts and low-order parts. Let $M$ be such an integer with value $\Theta(n^{(3-\omega)/2})$. By standard reductions, it suffices to solve the following problem.
\begin{prob}
\label{prob:exact-tri-overview}
    Let $A, B, C$ be $n \times n$ integer matrices with nonnegative entries bounded by $\cO(n)$, where both $B$ and $C$ are row-monotone. In addition, all entries $x$ of the matrices satisfy $x \bmod{M} \le M / 10$. For every $i, j \in [n]$, determine whether there exists $k \in [n]$ such that $A_{i, k} + B_{k, j} = C_{i, j}$. 
\end{prob}
At a high level, the reduction has two steps. 
First, by recursively halving the entries of $A$ and $B$, the product of the halved instance gives an approximation to each entry of $A\star B$: if $C'$ is the product of the halved instance, then each true value $(A\star B)_{i,j}$ must be one of $2C'_{i,j}$, $2C'_{i,j}+1$, or $2C'_{i,j}+2$. 
Thus, a verifier for candidate values can recover the full product by testing these three candidates. 
Second, using a standard residue-shifting reduction similar to~\cite{CDXZ22}, we transform the verification instance into constantly many promised instances in which the relevant residues of $A$, $B$, and $C$ lie in a small interval modulo $M$. 

These standard reductions were also present in the algorithm of~\cite{CDXZ22}, appearing as a recursive step. By explicitly applying the standard reductions first, we move the recursive step out of the main algorithm, making the algorithm more streamlined. In addition, this helps unify the algorithms for the row-monotone and column-monotone cases. As it turns out, the column-monotone case can also be reduced to a version of \cref{prob:exact-tri-overview}, where for every $i, k$, we need to determine the existence of such a $j$; see \cref{sec:column} for details.

\paragraph*{Main algorithm.}
Define $A^{\high}_{i,k}=\lfloor A_{i,k}/M\rfloor$, and define $B^{\high}$ and $C^{\high}$ analogously. 
The algorithm first finds a modulus $Q$ satisfying $M \le Q \le M n^{o(1)}$; we defer the additional properties required of $Q$, as well as the description of how to find it, to later in the overview. 
Then the algorithm computes the following quantities:
\begin{itemize}
    \item $s_{i,j}$, the number of $k \in [n]$ such that $A_{i,k}+B_{k,j}\equiv C_{i,j} \pmod Q$;
    \item $s'_{i,j}$, the number of $k \in [n]$ such that $A_{i,k}+B_{k,j}\equiv C_{i,j} \pmod Q$ and $A^\high_{i,k}+B^\high_{k,j}\ne C^\high_{i,j}$.
\end{itemize}
It is not difficult to see that $A_{i,k}+B_{k,j}=C_{i,j}$ if and only if $A_{i,k}+B_{k,j}\equiv C_{i,j}\pmod Q$ and $A^\high_{i,k}+B^\high_{k,j}=C^\high_{i,j}$. Hence, for each pair $(i,j)$, there exists a witness $k$ if and only if $s_{i,j}>s'_{i,j}$. Therefore, it suffices to compute $s_{i,j}$ and $s'_{i,j}$ for every pair $(i,j)$. 

Using an approach similar to~\cite{AlonGM97}, one can compute all values $s_{i,j}$ in $\hO(Qn^{\omega})=\hO(n^{(3+\omega)/2})$ time. To discuss how to compute $s'_{i,j}$, we define segments, similarly to~\cite{CDXZ22}, and a special type of segment called active segments. Let $\lm := \lceil \log(M/20) \rceil$.

\begin{definition} 
For $0 \le \ell \le \lm$, a level-$\ell$ segment is a tuple $(i,k,[j_0,j_1])$ such that, for every $j_0 \le j \le j_1$, we have $\lfloor B_{k, j_0} / 2^\ell\rfloor = \lfloor B_{k, j} / 2^\ell\rfloor$ and $\lfloor C_{i, j_0} / 2^\ell\rfloor = \lfloor C_{i, j} / 2^\ell\rfloor$, and $[j_0,j_1]$ cannot be further extended.
\end{definition}

\begin{definition} 
For $0 \le \ell \le \lm$, a level-$\ell$ segment $(i,k,[j_0,j_1])$ is called active with respect to $Q$ if $A_{i, k}^\high + B_{k,j_0}^\high - C_{i, j_0}^\high \ne 0$ and $A_{i, k} + B_{k,j_0} - C_{i, j_0} \bmod{Q} \in [-4 \cdot 2^\ell, 4\cdot 2^\ell]$. Let $S_{\ell}(Q)$ be the set of active level-$\ell$ segments.
\end{definition}
Since $B$ and $C$ are row-monotone, the number of level-$\ell$ segments is $\cO(n^3 / 2^\ell)$.

We will find a modulus $Q$ so that, for every $0 \le \ell \le \lm$, $|S_{\ell}(Q)| \le \hO(n^{3} / Q)$. For now, assume that we have such a modulus $Q$. We show how to compute $s'_{i,j}$ under this assumption.

To see why active segments help compute $s'_{i,j}$, consider a triple $(i,j,k)$ such that $A_{i,k}+B_{k,j}\equiv C_{i,j}\pmod Q$ and $A^\high_{i,k}+B^\high_{k,j}\ne C^\high_{i,j}$; that is, $k$ contributes one to $s'_{i,j}$. Let $(i,k,[j_0,j_1])$ be the unique level-$0$ segment where $j \in [j_0, j_1]$. It is not difficult to see that this segment is active. Hence, we proceed by computing $S_0(Q)$:
\begin{itemize}
    \item First, we find $S_{\lm}(Q)$ by enumerating all level-$\lm$ segments and checking whether each segment is active. The number of level-$\lm$ segments is $\cO(n^3 / 2^\lm) = \cO(n^{(3+\omega)/2})$. Hence, this part takes $\tcO(n^{(3+\omega)/2})$ time.
    \item Suppose we have computed the set $S_{\ell+1}(Q)$ of active level-$(\ell+1)$ segments. For every segment $L\in S_{\ell+1}(Q)$, we enumerate all level-$\ell$ segments contained in $L$ and check whether they are active. Here, a segment $(i,k,[j_0,j_1])$ is contained in another segment $(i',k',[j_0',j_1'])$ if $i=i'$, $k=k'$, and $[j_0,j_1]\subseteq [j_0',j_1']$. It can be shown that every segment in $S_\ell(Q)$ is contained in some segment in $S_{\ell+1}(Q)$, so this procedure correctly computes $S_\ell(Q)$. Moreover, each level-$(\ell+1)$ segment contains only a constant number of level-$\ell$ segments. By the assumption $|S_{\ell}(Q)| \le \hO(n^3/Q)$ for every $\ell$, this step takes $\hO(n^3/Q)=\hO(n^{(3+\omega)/2})$ time. 
\end{itemize}
Recall that the segments in $S_0(Q)$ contain all triples $(i,j,k)$ that can contribute to $s'_{i,j}$. For every level-$0$ segment $(i,k,[j_0,j_1])\in S_0(Q)$, the values $B_{k,j}$ are the same for all $j\in[j_0,j_1]$, and the values $C_{i,j}$ are also the same for all $j\in[j_0,j_1]$. Hence, either all triples $(i,k,j)$ with $j\in[j_0,j_1]$ contribute to $s'_{i,j}$, or none of them do. We can therefore use a simple data structure to compute all values $s'_{i,j}$ from $S_0(Q)$ in $\tcO(|S_0(Q)|)=\hO(n^{(3+\omega)/2})$ time.

\paragraph*{Finding a Good Modulus.} It remains to find a modulus $Q \in [M, M n^{o(1)}]$ such that $|S_{\ell}(Q)| \le \hO(n^{3} / Q)$
for every $0 \le \ell \le \lm$. To do so, consider the following multiset of numbers for every level $\ell$:
\[
    \cX_{\ell} := \{A_{i, k} + B_{k, j_0} - C_{i, j_0} - s: (i, k, [j_0, j_1]) \text{ is a level-}\ell \text{ segment}, s \in [-4 \cdot 2^\ell, 4 \cdot 2^\ell] \} \cap (\Z \setminus \{0\}), 
\]
and let $X_{\ell, Q}$ denote the number of elements in $\cX_{\ell}$ that are divisible by $Q$. Clearly, $|\cX_{\ell} | \le \cO(n^3/2^\ell)\cdot \cO(2^\ell) = \cO(n^3)$. When $Q \ge M$ (and hence $Q > 4\cdot 2^\ell$), $X_{\ell, Q}$ is exactly the number of level-$\ell$ segments $(i, k, [j_0, j_1])$ such that $A_{i, k} + B_{k, j_0} - C_{i, j_0} \bmod{Q} \in [-4 \cdot 2^\ell, 4 \cdot 2^\ell]$ and $A_{i, k} + B_{k, j_0} - C_{i, j_0} \notin [-4 \cdot 2^\ell, 4 \cdot 2^\ell]$. Using the assumptions in \cref{prob:exact-tri-overview}, it is not difficult to show that one can equivalently replace the second condition with $A^\high_{i, k} + B^\high_{k, j_0} \ne C^{\high}_{i, j_0}$. Hence, $X_{\ell, Q}$ is exactly $|S_{\ell}(Q)|$.

To summarize, we have several multisets $\cX_{\ell}$, each containing $\cO(n^3)$ nonzero integers, and we need to find a modulus $Q$ such that, for every $\ell$, the number of elements of $\cX_{\ell}$ divisible by $Q$ is at most $\hO(n^3/Q)$. This is exactly the setting where the standard recipe of constructing $Q$ as a product of smaller primes applies. More precisely, we construct $Q$ as the product of primes $p_1,p_2,\ldots,p_t$ from $[R/2,R]$, for some relatively small $R=n^{o(1)}$. Let $P$ be the set of primes in $[R/2,R]$. At each step, we find an additional prime $p_i\in P$ by enumerating all primes in $P$ and choosing a suitable one. We stop once the product of the chosen primes exceeds $M$. By a standard probability analysis, there exist primes $p_1,p_2,\ldots,p_t\in P$ such that if $Q_t = \prod_{i=1}^t p_i$, then $X_{\ell,Q_t}$ is bounded by roughly $n^3/R^t$ for every $\ell$. Since $R^t\approx Q_t$, this gives $X_{\ell,Q_t}\lesssim n^3/Q_t$ for every $\ell$, as desired. Hence, it remains to find these suitable primes deterministically.

In this overview, we illustrate our approach by showing how to find the first prime $p\in P$ such that $X_{\ell,p}=\tcO(n^3/R)$ for every $\ell$. The same idea extends easily to finding the remaining primes. Since $R=n^{o(1)}$, we can afford to enumerate all primes in $P$ and choose one that works. However, even when $p$ is fixed, it is unclear how to compute $X_{\ell,p}$ efficiently: the straightforward approach requires enumerating all level-$\ell$ segments, which takes $\cO(n^3 / 2^\ell)$ time. When $\ell$ is small, this bound can be as high as $\cO(n^3)$. This reflects the first difficulty mentioned in \cref{item:first-difficulty}, although the setting here is already somewhat different.

Instead of computing $X_{\ell, p}$, we take a more indirect approach. Let
\[
    \cY_{\ell} := \{A_{i, k} + B_{k, j_0} - C_{i, j_0} - s: (i, k, [j_0, j_1]) \text{ is a level-}\ell \text{ segment}, s \in [-4 \cdot 2^\ell, 4 \cdot 2^\ell] \}.
\]
This multiset has the same definition as $\cX_{\ell}$, except that we do not remove the zero elements. Let $Y_{\ell,p}$ denote the number of elements in $\cY_{\ell}$ that are divisible by $p$. As it turns out, one can compute $Y_{\ell,p}$ efficiently in $\hO(2^\ell n^\omega)=\hO(Mn^\omega)$ time, following the high-level idea of~\cite{AlonGM97}. Next, observe that $Y_{\ell,p}-X_{\ell,p}$ is exactly the number of pairs $((i,k,[j_0,j_1]),s)$ such that $A_{i,k}+B_{k,j_0}-C_{i,j_0}-s=0$, where $(i,k,[j_0,j_1])$ is a level-$\ell$ segment and $s \in [-4\cdot 2^\ell,4\cdot 2^\ell]$. In particular, $Y_{\ell,p}-X_{\ell,p}$ does not depend on $p$; denote this quantity by $Z_\ell$.

In our algorithm, we compute $Y_\ell^* := \min_{p \in P} Y_{\ell,p}$ for every $\ell$, and then choose a prime $p_0 \in P$ minimizing $\max_\ell \{Y_{\ell,p_0}-Y_\ell^*\}$. This prime $p_0$ can be found deterministically in $M n^{\omega+o(1)}$ time, since this procedure only uses the computable values $Y_{\ell,p}$, rather than the values $X_{\ell,p}$. Note that $Y_\ell^*-Z_\ell$ is exactly $X_\ell^* := \min_{p\in P} X_{\ell,p}$.  Hence, $p_0$ minimizes 
\[
    \max_{0 \le \ell \le \lm} \{Y_{\ell, p} - Y_\ell^*\} = \max_{0 \le \ell \le \lm} \{X_{\ell, p} - X_\ell^*\}.
\]
By a standard probabilistic argument, $X_\ell^*\le \tcO(n^3/R)$ for every $\ell$. Thus, $p_0$ minimizes $\max_{0\le \ell \le \lm}\{X_{\ell,p}\}$ up to a $\tcO(n^3 / R)$ additive error. Again by a standard probabilistic argument,
\[
    \min_{p\in P}\max_{0\le \ell \le \lm}\{X_{\ell,p}\}\le \tcO(n^3/R).
\]
Therefore, $X_{\ell,p_0}\le \tcO(n^3/R)$ for every $\ell$, as desired.

%% file: preliminaries.tex
\section{Preliminaries}

\paragraph*{Notations.}
For integers $a$ and $M \ge 1$, we write $a \bmod M$ for the unique integer $b \in \{0,1,\ldots,M-1\}$ such that $a \equiv b \pmod M$. For a positive integer $x$, we write $[x] := \{1,2,\ldots,x\}$. For integers $x \le y$, we write $[x,y] := \{x,x+1,\ldots,y\}$.

\paragraph*{Min-Plus Product.}
Let $A$ be an $n^a \times n^b$ integer matrix and let $B$ be an
$n^b \times n^c$ integer matrix. The \emph{Min-Plus product} of $A$ and $B$ is the $n^a \times n^c$ matrix
$A \star B$ defined by
\[ 
    (A \star B)_{i,j} := \min_{1 \le k \le n^b} \{ A_{i,k} + B_{k,j} \} \qquad \text{for } (i,j) \in [n^a] \times [n^c].
\]

\paragraph*{Min-Plus Convolution.}
Let $A$ and $B$ be arrays of length $n$ indexed from $1$. The \emph{Min-Plus convolution} of $A$ and $B$ is an array $A \diamond B$ of length $2n-1$, indexed from $2$ to $2n$, defined by
\[
    (A \diamond B)_k := \min_{\substack{1 \le i \le k-1 \\ 1 \le i, k - i \le n} } \{ A_i + B_{k-i} \},
\qquad \text{for } k \in [2, 2n].
\]

\paragraph*{Monotonicity.}
An $n \times m$ matrix $A$ is \emph{row-monotone} if for every $i \in [n]$ and $j \in [m-1]$, $A_{i,j} \le A_{i,j+1}$.\footnote{Unlike in the introduction, where boundedness is included for readability, monotonicity here only refers to the ordering condition; boundedness is stated separately.} It is \emph{column-monotone} if for every $i \in [n-1]$ and $j \in [m]$, $A_{i,j} \le A_{i+1,j}$. An array $A$ of length $n$ is \emph{monotone} if for every $i \in [n-1]$, $A_i \le A_{i+1}$.

\paragraph*{Rectangular Matrix Multiplication.}
For $a,b,c \ge 0$, let $\omega(a,b,c)$ denote an exponent such that the product of an $n^a \times n^b$ matrix and an $n^b \times n^c$ matrix can be computed in $\hO(n^{\omega(a,b,c)})$ arithmetic operations. The best known upper bound is $\omega(1,1,1) < 2.372$ \cite{AlmanDWXXZ25}. It is known that $\omega(a, b, c) = \omega(a, c, b)$~\cite{DBLP:journals/jc/HuangP98}. 

\paragraph*{Polynomial Multiplication.}
Let $P,Q \in \mathbb{Z}[x,y]$ be bivariate polynomials whose degree in $x$ is at most $d_x$ and whose degree in $y$ is at most $d_y$. We can compute $P+Q$ and $P-Q$ in $\cO(d_x d_y)$ time, and compute $P \cdot Q$ in $\tcO(d_x d_y)$ time using FFT. Let $P(x) \in (\mathbb{Z}[x])^{n^a \times n^b}$ and $Q(x) \in (\mathbb{Z}[x])^{n^b \times n^c}$ be polynomial matrices whose entries have degree at most $D$. Then the matrix product $P(x)\cdot Q(x)$ can be computed in $\hO(D\, n^{\omega(a,b,c)})$ time.

\paragraph*{Prime Number Theorem.}
Let $\pi(x)$ denote the number of primes at most $x$. The \emph{prime number theorem} states that $ \pi(x) =  x/\log x \pm o(x / \log x)$. As a corollary, for any sufficiently large integer $R$, the number of primes in the interval $[R/2, R]$ is $\pi(R) - \pi(R/2 - 1) = \Theta\left(R/\log R\right)$.

%% file: product.tex
\section{Deterministic Monotone Min-Plus Product}
\label{sec:product}

In this section, we present a deterministic algorithm for row-monotone Min-Plus product, whose definition we recall below.
\begin{prob}
\label{prob:monotone-min-plus}
    Let $A$ be an $n^a \times n^b$ integer matrix. Let $B$ be an $n^b \times n^c$ row-monotone integer matrix with entries in $[n^\mu]$. Given $(A,B)$ as input, compute $A \star B$.
\end{prob}

\subsection{Reduction to a Verification Problem}
\label{subsec:reductions}
We start with a standard reduction that transforms the problem into a simpler form. Consider the following verification problem.

\begin{prob}
\label{prob:exact-tri}
    Let $A$ be an $n^a \times n^b$ integer matrix with nonnegative entries bounded by $\cO(n^\mu)$. Let $B$ be an $n^b \times n^c$ row-monotone integer matrix with nonnegative entries bounded by $\cO(n^\mu)$. Let $C$ be an $n^a \times n^c$ row-monotone integer matrix with nonnegative entries bounded by $\cO(n^\mu)$. Given $(A,B,C)$ as input, for every $(i,j) \in [n^a] \times [n^c]$, decide whether there exists $k \in [n^b]$ such that $A_{i,k} + B_{k,j} = C_{i,j}$.
\end{prob}

The reduction uses the following simple fact.
\begin{fact}[\cite{CDXZ22}]
\label{fact:row-monotone-C}
If the rows of $B$ are monotone, then the rows of $A \star B$ are monotone.
\end{fact}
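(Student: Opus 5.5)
Fix a row index $i$ and two column indices $j<j'$. The goal is $(A\star B)_{i,j} \le (A\star B)_{i,j'}$, which says exactly that row $i$ of $A\star B$ is monotone non-decreasing. The argument compares the minimizing witness for $j'$ against the same witness used at column $j$.

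Let $k^\ast$ attain the minimum in $(A\star B)_{i,j'} = \min_k \{A_{i,k}+B_{k,j'}\}$. Such a $k^\ast$ exists because $k$ ranges over a finite nonempty index set $[n^b]$. Row $k^\ast$ of $B$ is monotone and $j<j'$, so $B_{k^\ast,j}\le B_{k^\ast,j'}$. Chaining these,
\[
(A\star B)_{i,j} \le A_{i,k^\ast}+B_{k^\ast,j} \le A_{i,k^\ast}+B_{k^\ast,j'} = (A\star B)_{i,j'}.
\]
The first inequality holds because $k^\ast$ is one candidate in the minimum defining $(A\star B)_{i,j}$.

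Since $i$ and $j<j'$ were arbitrary, every row of $A\star B$ is monotone. Equivalently, each term $A_{i,k}+B_{k,j}$ is non-decreasing in $j$, and a pointwise minimum of non-decreasing functions is non-decreasing.

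There is no real obstacle. The one point to check is that monotonicity in the paper's preliminaries refers only to the ordering condition. Boundedness is not claimed: the entries of $A\star B$ need not lie in $[n^\mu]$, since $A$ is arbitrary. So no boundedness argument is required.
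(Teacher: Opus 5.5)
Your proof is correct and is essentially the paper's argument. The paper compares adjacent columns directly via $\min_k\{A_{i,k}+B_{k,j}\}\le\min_k\{A_{i,k}+B_{k,j+1}\}$, which is your termwise ``pointwise minimum of non-decreasing functions'' observation. Your explicit-witness chain for arbitrary $j<j'$ is just an unpacked form of that same inequality.
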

\begin{proof}
For any $(i,j) \in [n^a] \times [n^c - 1]$, we have
$$
    (A \star B)_{i,j} = \min_{k} \{A_{i,k}+ B_{k,j}\}
                      \le \min_{k} \{A_{i,k}+ B_{k,j + 1}\}
                       = (A \star B)_{i,j + 1}.
$$
\end{proof}

\begin{lemma}
\label{lem:reduction-to-verification}
If there is a deterministic algorithm for \cref{prob:exact-tri} that runs in $T(n^a, n^b, n^c; n^{\mu})$ time, then there is a deterministic algorithm for \cref{prob:monotone-min-plus} that runs in $\tcO(T(n^a, n^b, n^c; n^{\mu}))$ time.
\end{lemma}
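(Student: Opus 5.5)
We plan to reduce \cref{prob:monotone-min-plus} to \cref{prob:exact-tri} by recursive halving, as sketched in the overview. First we normalize $A$. For each row $i$, only entries with $A_{i,k}\le \min_{k'} A_{i,k'} + n^\mu$ can ever be the minimizer, since $B$ has entries in $[n^\mu]$. We subtract the row minimum $m_i:=\min_{k}A_{i,k}$ from row $i$ and cap all entries at $n^\mu+1$; the capped entries can never beat $k^*=\arg\min_k A_{i,k}$. Afterwards we add $m_i$ back to row $i$ of the result. This takes $\cO(n^{a+b})$ time, which is dominated by reading the input. So we may assume $A$ has entries in $[0,n^\mu+1]$ and $B$ has entries in $[1,n^\mu]$, so all entries are nonnegative and $\cO(n^\mu)$.

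Next we recurse on bit length. Define $A'_{i,k}=\lfloor A_{i,k}/2\rfloor$ and $B'_{k,j}=\lfloor B_{k,j}/2\rfloor$. Halving preserves row-monotonicity of $B$, and the entries shrink by a factor $2$. We recursively compute $C'=A'\star B'$. Since $2A'_{i,k}\le A_{i,k}\le 2A'_{i,k}+1$ and likewise for $B$, we get $2C'_{i,j}\le (A\star B)_{i,j}\le 2C'_{i,j}+2$. At the base of the recursion, all entries are $\cO(1)$ and the product is computed directly via the verifier with constantly many candidate values, or trivially.

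Given $C'$, we resolve each entry by testing candidates in increasing order. For $t=0,1,2$ we form the matrix $C^{(t)}:=2C'+t$. Each $C^{(t)}$ is row-monotone because $C'$ is, by \cref{fact:row-monotone-C} applied to the row-monotone $B'$. Its entries are nonnegative and $\cO(n^\mu)$, so $(A,B,C^{(t)})$ is a valid instance of \cref{prob:exact-tri}. Then $(A\star B)_{i,j}$ is the smallest $2C'_{i,j}+t$ for which the verifier reports a witness $k$. One subtlety is that the verifier must be called on the full matrix $C^{(t)}$ rather than only on the unresolved entries. This is fine, because we simply ignore the answers for entries that were already resolved at a smaller $t$.

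Each level therefore costs three calls to the verifier plus $\cO(n^{a+c}+n^{b+c}+n^{a+b})$ overhead. The parameter $n^\mu$ halves at each level, and $T$ is presumably nondecreasing in its last argument (if not, we can pad the bound to $n^\mu$). With $\cO(\mu\log n)$ levels, the total time is $\tcO(T(n^a,n^b,n^c;n^\mu))$. The linear overhead is absorbed, since any algorithm for \cref{prob:exact-tri} must read its input and write its output.

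The only delicate step is the normalization of $A$ to a bounded range. We must check that capping preserves correctness, and that the row offsets $m_i$ interact with the recursion correctly: they are applied once at the top and not re-halved. Since the offsets are only added back at the end, this is straightforward.
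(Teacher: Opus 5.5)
Your proposal is correct and follows essentially the same route as the paper. You normalize each row of $A$ by subtracting its minimum and capping (you cap at $n^\mu+1$, the paper at $2n^\mu+1$; both are valid). You then recursively compute $C'=\lfloor A/2\rfloor\star\lfloor B/2\rfloor$ and resolve each entry by running the verifier on the three row-monotone candidates $2C'+t$, $t\in\{0,1,2\}$, taking the smallest accepted value. The remaining differences (your $\cO(1)$-entry base case versus the paper's zero-matrix base case, and your remark about $T$ being monotone in its last argument) do not change the argument.
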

\begin{proof}
First, by a simple observation of~\cite{CDXZ22}, we may assume that all entries of $A$ are nonnegative integers bounded by $\cO(n^\mu)$: for each $i$, subtract $\delta_i$ from every entry in the $i$-th row of $A$, where $\delta_i$ is the minimum value in that row. Then any entry of $A$ greater than $2 n^\mu$ can be set to $2 n^\mu+1$, since it can never participate in an optimal solution. After computing the Min-Plus product of this modified matrix $A$ and $B$, we simply add $\delta_i$ back to the $i$-th row of the result.  

Let $A'_{i,j} := \lfloor A_{i,j}/2 \rfloor$ and $B'_{i,j} := \lfloor B_{i,j}/2 \rfloor$. Flooring preserves row-monotonicity, so the rows of $B'$ are also row-monotone. We compute $C' := A' \star B'$ recursively. The base case occurs when $A$ and $B$ are zero matrices, so we can compute $A \star B$ trivially. Now define three $n^a \times n^c$ candidate matrices $C^{(0)}, C^{(1)}, C^{(2)}$ by 
\[ C^{(s)}_{i,j} := 2C'_{i,j} + s \qquad \text{for }s \in \{0,1,2\}.\]
Since the rows of $B'$ are monotone, the rows of $C'$ are also monotone by \cref{fact:row-monotone-C}. Therefore, the rows of each $C^{(s)}$ are monotone as well, and we can run the algorithm for \cref{prob:exact-tri} on the instance $(A,B,C^{(s)})$ for each $s \in \{0,1,2\}$.

Since $(x-1)/2 \le \lfloor x/2 \rfloor \le x/2$ and $C' = A' \star B'$, we have
\[
    C'_{i,j}
    = \min_{k} \left\{  \left\lfloor\frac{A_{i,k}}{2}\right\rfloor + \left\lfloor\frac{B_{k,j}}{2}\right\rfloor \right\}
    \le \min_k \left\{\frac{A_{i,k}+B_{k,j}}{2}\right\}
    = \frac{(A \star B)_{i,j}}{2},
\]
and
\[
    C'_{i,j} 
    = \min_{k} \left\{  \left\lfloor\frac{A_{i,k}}{2}\right\rfloor + \left\lfloor\frac{B_{k,j}}{2}\right\rfloor \right\} 
    \ge \min_{k} \left\{  \frac{A_{i,k} +B_{k,j} -2}{2} \right\}
    =  \frac{(A \star B)_{i,j}-2}{2}. 
 \]
Together, these inequalities imply $(A \star B)_{i,j} - 2C'_{i,j} \in \{0,1,2\}$. Thus, for each fixed pair $(i,j)$, there exists at least one $s \in \{0,1,2\}$ such that $C^{(s)}_{i,j} = (A \star B)_{i,j}$, and we can recover $(A \star B)_{i,j}$ by taking the minimum value among $C^{(s)}_{i,j}$ for which the verifier accepts $(i,j)$ on $(A,B,C^{(s)})$. Moreover, if the verifier accepts $(i,j)$ on $(A,B,C^{(s)})$ with witness $k$, then $C^{(s)}_{i,j} = A_{i,k}+B_{k,j} \ge (A \star B)_{i,j}$. Therefore, taking the minimum accepted value among the $C^{(s)}_{i,j}$ returns exactly $(A \star B)_{i,j}$.

For the running time, we make $\cO(1)$ calls to the $T(n^a, n^b, n^c; n^{\mu})$-time verifier per recursion level.
The recursion depth is $\cO(\log(n^\mu)) = \cO(\log n)$.
The additional costs of forming $A'$, $B'$, and the matrices $C^{(s)}$ are dominated by $T(n^a, n^b, n^c; n^\mu)$. Hence, the total running time is $\tcO(T(n^a, n^b, n^c; n^{\mu}))$.
\end{proof}

Using an approach similar to that of~\cite[Lemma~3.5]{CDXZ22}, we further reduce \cref{prob:exact-tri} to the following problem and focus on solving it efficiently.

\begin{prob}
\label{prob:exact-tri-mod-M}
Let $(A,B,C)$ be an instance of \cref{prob:exact-tri}, and let $M$ be a positive integer that is a multiple of $100$ and satisfies $M = \cO(n^{\mu})$. Additionally, assume that for every $i \in [n^a]$, $k \in [n^b]$, and $j \in [n^c]$, we have
$$
    A_{i,k} \bmod M,\; B_{k,j} \bmod M,\; C_{i,j} \bmod M \le \frac{M}{10}.
$$
Given $(A,B,C)$ as input, for every $(i,j) \in [n^{a}] \times [n^{c}]$, decide whether there exists $k \in [n^{b}]$ such that $A_{i,k} + B_{k,j} = C_{i,j}$.
\end{prob}

\begin{lemma}
\label{lem:reduction-to-small-modM}
    If there is a $T(n^a, n^b, n^c; n^{\mu}, M)$-time algorithm for \cref{prob:exact-tri-mod-M} for any such $M$, then there is an $\tcO(T(n^a, n^b, n^c; n^{\mu}, M))$-time algorithm for \cref{prob:exact-tri}.
\end{lemma}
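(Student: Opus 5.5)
Given an instance $(A,B,C)$ of \cref{prob:exact-tri} and a target $M$ (a multiple of $100$ with $M=\cO(n^\mu)$), the plan is to produce $\cO(1)$ instances of \cref{prob:exact-tri-mod-M}. Each will be obtained by adding a uniform shift to all entries of $A$, of $B$, and of $C$, and masking entries that fall outside the required residue window. The shifts will be chosen so that every triple $(i,k,j)$ with $A_{i,k}+B_{k,j}=C_{i,j}$ is witnessed in at least one instance, and no instance creates a spurious equality. The answer for $(i,j)$ is then the OR over all instances.

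\textbf{Step 1: partition residues and shift.} Split $[0,M)$ into $20$ consecutive intervals $I_0,\dots,I_{19}$, each of length $M/20$; this is an integer since $100\mid M$. For each $\alpha,\beta\in[0,19]$, let $A^{(\alpha)}$ keep the entries of $A$ whose residue modulo $M$ lies in $I_\alpha$, and let $B^{(\beta)}$ be defined analogously. Each kept entry $x$ is replaced by $x-\alpha M/20$ (respectively $x-\beta M/20$) plus a constant that makes everything nonnegative. After this shift, kept residues lie in $[0,M/20)\subseteq[0,M/10]$.

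\textbf{Step 2: handle the masks and $C$.} Masking $A$ is harmless: replace masked entries by a huge value of the form $cM$, with residue $0$ and size $\cO(n^\mu)$, so that they can never produce an equality. Masking $B$ is the hard part, because $B$ must stay row-monotone. Here I would use a different residue-compatible trick: set each masked entry of $B^{(\beta)}$ to a multiple of $M$ that is at least as large as the previous kept value in its row and below the next one. Such a value exists because shifted kept values lie in residue windows of width $M/20$ above multiples of $M$, so rounding up to the next multiple of $M$ stays between consecutive kept values up to the window. Any spurious equality that uses a masked $B$ entry must be excluded by the residue test on $C$.

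For $C$: if $A_{i,k}+B_{k,j}=C_{i,j}$ with residues in $I_\alpha$ and $I_\beta$, then the residue of $C_{i,j}-(\alpha+\beta)M/20$ lies in $[0,2M/20)$ modulo $M$. So $C^{(\alpha,\beta)}$ is formed by shifting $C$ by $(\alpha+\beta)M/20$ and keeping only entries whose shifted residue is below $M/10$. For the other entries, the answer in this instance is forced to be "no". To preserve monotonicity of $C$, those entries are replaced by the nearest multiple of $M$ in a monotone-compatible way, and the resulting pairs $(i,j)$ are excluded from the OR when reading off the answer for this instance. Since the shifts are uniform, row-monotonicity of the kept entries is preserved.

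\textbf{Step 3: correctness and cost.} There are $400=\cO(1)$ instances, and each has entries bounded by $\cO(n^\mu)$. The key check is soundness. A triple can be reported in instance $(\alpha,\beta)$ only if the shifted entries satisfy $A'+B'=C'$, and the shifted equality is equivalent to the original one exactly when the $A$ and $B$ entries are unmasked. So it suffices to show that a masked entry can never complete an equality with an accepted $C$ entry. This is a residue comparison: a masked entry has residue $0$, whereas accepted entries are arranged to have the complementary residue profile. I expect this step, choosing mask values compatible simultaneously with row-monotonicity and with the residue separation, to be the main obstacle. If it fails, the fallback is to split $B$ further into $\cO(1)$ row-monotone pieces by residue class, following \cite[Lemma~3.5]{CDXZ22}. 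The total time is $\cO(1)\cdot T+\tcO(\text{input size})=\tcO(T)$.
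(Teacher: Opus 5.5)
\textbf{There is a genuine gap: your masking of $B$ breaks soundness.} This is exactly the step you flagged. Rounding a masked entry of $B^{(\beta)}$ up to the next multiple of $M$ does preserve row-monotonicity. However, it gives the mask residue $0$, which lies \emph{inside} the kept window $[0,M/20)$.

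So a kept entry of $A^{(\alpha)}$ plus a masked entry of $B^{(\beta)}$ has residue in $[0,M/20)$. That range sits inside the window $[0,M/10)$ of accepted entries of $C^{(\alpha,\beta)}$. A spurious witness $k$ with masked $B^{(\beta)}_{k,j}$ is therefore residue-wise indistinguishable from a genuine witness whose kept $B$-entry has shifted residue $0$. The ``complementary residue profile'' you rely on in Step~3 cannot exist, and shifted equalities through masked $B$ entries are accepted without an original equality. Your handling of $A$ (huge masked values, fine since $A$ has no monotonicity constraint) and of $C$ (discard answers at masked positions) is valid. The fallback you mention is not spelled out, so as written the proof is incomplete.

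\textbf{The missing idea is to round masked entries down and give them a residue above the kept window.} Replace a masked entry $x$ by $\lfloor (x-\beta M/20)/M\rfloor\cdot M + r_m$.

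Monotonicity still holds. Within a block of equal quotient, kept entries (residue $<M/20$) precede masked ones. A masked value also stays below the next multiple of $M$, hence below any later kept entry.

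Kept-$A$ plus masked-$B$ then has residue in $[r_m, r_m+M/20)$. This must be disjoint from the accepted $C$ window, while the promise requires $r_m\le M/10$. With your interval width $M/20$, this forces $r_m=M/10$ exactly, which works but has no slack.

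The paper builds in slack instead. It uses $100$ intervals of width $W=M/100$ and masks both $A$ and $B$ to $\lfloor (x-sW)/M\rfloor M+3W$. It masks $C$ to residue $7W$ rather than discarding answers. Then any sum involving a mask has residue in $[3W,6W]$, which is disjoint from the possible $C$ residues $[0,2W)\cup\{7W\}$.
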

\begin{proof}
Let $W := M/100$, which is an integer since $M$ is a multiple of $100$. Without loss of generality, we may assume that all entries of $A$, $B$, and $C$ are at least $M$ by adding $M$ to all entries of $A$ and $B$ and adding $2M$ to all entries of $C$; this preserves witnesses. Note that $2M = \cO(n^{\mu})$, so the entries remain bounded by $\cO(n^{\mu})$.

We partition $\{0,1,\ldots,M-1\}$ into intervals $I_s$ defined by
$$
    I_s := [sW,(s+1)W) \qquad \text{for } s \in \{0,1,\ldots,99\}.
$$
Then, for each $s \in \{0,1,\ldots,99\}$, construct an $n^a \times n^b$ matrix $A^{(s)}$ and an $n^{b} \times n^c$ matrix $B^{(s)}$ by
$$
A^{(s)}_{i,k} :=
\begin{cases}
A_{i,k} - sW, & \text{if } A_{i,k} \bmod M \in I_s,\\
\Big\lfloor \dfrac{A_{i,k}-sW}{M}\Big\rfloor \cdot M + 3W, & \text{otherwise,}
\end{cases}
$$
and
$$
B^{(s)}_{k,j} :=
\begin{cases}
B_{k,j} - sW, & \text{if } B_{k,j} \bmod M \in I_s,\\
\Big\lfloor \dfrac{B_{k,j}-sW}{M}\Big\rfloor \cdot M + 3W, & \text{otherwise.}
\end{cases}
$$
Additionally, for all $s,t \in \{0,1,\ldots,99\}$, let
$$
J_{s,t} := \bigl\{(s+t)W \bmod M,\; ((s+t)W+1) \bmod M,\; \ldots,\; ((s+t)W + (2W-1))\bmod M \bigr\},
$$
and construct $n^a \times n^c$ matrix $C^{(s,t)}$ by
$$
C^{(s,t)}_{i,j} :=
\begin{cases}
C_{i,j} - (s + t)W, & \text{if } C_{i,j} \bmod M \in J_{s,t},\\
\Big\lfloor \dfrac{C_{i,j}-(s+t)W}{M}\Big\rfloor \cdot M + 7W, & \text{otherwise.}
\end{cases}
$$
Now, for all $s,t \in \{0,1,\ldots,99\}$, we run the algorithm for \cref{prob:exact-tri-mod-M} on the $100^2 = \cO(1)$ instances $(A^{(s)}, B^{(t)}, C^{(s,t)})$ and return \textsc{Yes} for $(i, j)$ if any instance returns \textsc{Yes} for $(i, j)$.

We first justify that our instances satisfy the promise of \cref{prob:exact-tri-mod-M}. Fix $s,t \in \{0,1,\ldots,99\}$. It is clear that all entries of $A^{(s)}$, $B^{(t)}$, and $C^{(s,t)}$ are nonnegative, since $sW$ and $tW$ are at most $99M/100 < M$ and all entries of $A$, $B$, and $C$ are at least $M$. For the small-residue requirement modulo $M$, note that if $A_{i,k} \bmod M \in I_s$, then
$$
    0 \le A^{(s)}_{i,k}\bmod M < W < M/10,
$$
and if $A_{i,k} \bmod M \notin I_s$, then
$$
    0 \le A^{(s)}_{i,k}\bmod M = 3W < M/10.
$$
Similarly, the same residue bounds modulo $M$ hold for $B^{(t)}$.
For the residue bound for $C^{(s,t)}$, note that if $C_{i,j} \bmod M \in J_{s,t}$, then
$$
    0 \le C^{(s,t)}_{i,j}\bmod M < 2W < M/10,
$$
and otherwise
$$
    C^{(s,t)}_{i,j}\bmod M = 7W < M/10.
$$
Hence, all three matrices have residues modulo $M$ bounded above by $M/10$. For the monotonicity constraint, we prove only the row-monotonicity of $B^{(t)}$; the row-monotonicity of $C^{(s,t)}$ follows analogously.
Fix $k \in [n^b]$ and $j \in [n^c-1]$ and write
$$
B_{k,j} - tW= q_j M + r_j,\qquad B_{k,j+1} -tW = q_{j+1} M + r_{j+1},
\qquad \text{where } r_j,r_{j+1}\in\{0,1,\ldots,M-1\}.
$$
Then we can rewrite the entries of $B^{(t)}$ as
$$
B^{(t)}_{k,j}=
\begin{cases}
q_jM+r_j, & \text{if } r_j\in[0,W),\\
q_jM+3W, & \text{if } r_j\notin[0,W),
\end{cases}
\qquad
B^{(t)}_{k,j + 1}=
\begin{cases}
q_{j+1}M+r_{j+1}, & \text{if } r_{j+1}\in[0,W),\\
q_{j+1}M+3W, & \text{if } r_{j+1}\notin[0,W).
\end{cases}
$$
By the monotonicity of $B$, we have two cases:
\begin{enumerate}
    \item If $q_j<q_{j+1}$, using the fact that $3W<M$, we have
    $$
        B^{(t)}_{k,j}\le q_jM+3W < (q_j+1)M \le q_{j+1}M \le B^{(t)}_{k,j+1}.
    $$
    \item Otherwise, we have $q_j=q_{j+1}$ and $r_j\le r_{j+1}$.
    If both $r_j,r_{j+1} \in [0,W)$ or both $r_j,r_{j+1} \notin [0,W)$, then $B^{(t)}_{k,j} \le B^{(t)}_{k,j+1}$ is immediate. If $r_j\in[0,W)$ but $r_{j+1}\notin[0,W)$, then
    $$
        B^{(t)}_{k,j} = q_jM+r_j < q_jM+3W = q_{j+1}M+3W = B^{(t)}_{k,j+1}.
    $$
\end{enumerate}
Therefore, each row of $B^{(t)}$ is row-monotone, as desired.

\begin{claim}[Correctness of the reduction]
\label{clm:reduction-correctness}
Fix $(i,j)\in[n^a]\times[n^c]$. We have that there exists $k\in[n^b]$ such that $A_{i,k}+B_{k,j}=C_{i,j}$ if and only if there exist $s,t\in\{0,1,\ldots,99\}$ and $k\in[n^b]$ such that
$A^{(s)}_{i,k}+B^{(t)}_{k,j}=C^{(s,t)}_{i,j}$.
\end{claim}
\begin{proof}
Fix $(i,j)\in[n^a]\times[n^c]$.
Suppose there exists $k\in[n^b]$ such that $A_{i,k}+B_{k,j}=C_{i,j}$.
Let $s,t\in\{0,1,\ldots,99\}$ be such that $A_{i,k}\bmod M\in I_{s}$ and $B_{k,j}\bmod M\in I_{t}$. Then
\begin{equation}
    \label{eq:reduction-forward}
    A^{(s)}_{i,k} + B^{(t)}_{k,j}= A_{i,k}+B_{k,j}-(s+t)W = C_{i,j}-(s+t)W.
\end{equation}
Moreover,
$$
    0 \le  A^{(s)}_{i,k} \bmod M < W,\qquad
    0 \le   B^{(t)}_{k,j} \bmod M < W,
$$
so the sum does not wrap around modulo $M$, which implies
$$
    0 \le \left(A^{(s)}_{i,k} + B^{(t)}_{k,j} \right) \bmod M < 2W.
$$
Then the equality \cref{eq:reduction-forward} gives
$$
    0 \le \left(C_{i,j}-(s+t)W \right)\bmod M < 2W,
$$
which implies $C_{i,j}\bmod M \in J_{s,t}$; thus $C^{(s,t)}_{i,j}=C_{i,j} - (s + t)W$. Hence, we have $A^{(s)}_{i,k} + B^{(t)}_{k,j} = C^{(s,t)}_{i,j}$ by \cref{eq:reduction-forward}, and the instance corresponding to $(s,t)$ verifies the triangle with witness $k$.

Conversely, suppose that for some $s,t\in\{0,1,\ldots,99\}$, the instance has a witness $k \in [n^b]$ such that
\begin{equation}
\label{eq:reduction-backward}
A^{(s)}_{i,k} + B^{(t)}_{k,j}
= C^{(s,t)}_{i,j}.
\end{equation}
We claim that $A_{i,k}\bmod M\in I_{s}$ and $B_{k,j}\bmod M\in I_{t}$.
If $A_{i,k}\bmod M\notin I_{s}$ or $B_{k,j}\bmod M\notin I_{t}$, then
$$
    \bigl(A^{(s)}_{i,k} + B^{(t)}_{k,j} \bigr)\bmod M \in [3W, 6W],
$$
since at least one summand has residue $3W$ and the other has residue in $[0,W)\cup\{3W\}$, and there is no wraparound modulo $M$. On the other hand, we have
$$
    C^{(s,t)}_{i,j}\bmod M \in [0,2W) \cup \{7W\},
$$
contradicting \cref{eq:reduction-backward}. Hence, $A_{i,k}\bmod M\in I_{s}$ and $B_{k,j}\bmod M\in I_{t}$. Moreover, under this condition, the left-hand side residue in \cref{eq:reduction-backward} lies in $[0,2W)$, so \cref{eq:reduction-backward} forces $C^{(s,t)}_{i,j} \bmod M \in [0,2W)$.
Then, $C_{i,j}\bmod M\in J_{s,t}$, and we have $C^{(s,t)}_{i,j}=C_{i,j} - (s + t)W$. Therefore,
$$
    A_{i,k}+B_{k,j} - (s + t)W = A^{(s)}_{i,k} + B^{(t)}_{k,j} = C^{(s,t)}_{i,j} = C_{i,j}  - (s + t)W,
$$
and the correctness follows.
\end{proof}
\end{proof}

\subsection{Algorithm for \texorpdfstring{\cref{prob:exact-tri-mod-M}}{Problem \ref*{prob:exact-tri-mod-M}}}

Suppose we are given an instance $(A,B,C)$ of \cref{prob:exact-tri-mod-M} with $M = \Theta(n^{d})$ for some constant $0 \le d \le \mu$. Define
\[
    \Ah_{i,k}=\left\lfloor \frac{A_{i,k}}{M}\right\rfloor,
    \qquad
    \Bh_{k,j}=\left\lfloor \frac{B_{k,j}}{M}\right\rfloor,
    \qquad
    \Ch_{i,j}=\left\lfloor \frac{C_{i,j}}{M}\right\rfloor.
\]
We also define
\[
    \Al_{i,k}=A_{i,k}\bmod M,
    \qquad
    \Bl_{k,j}=B_{k,j}\bmod M,
    \qquad
    \Cl_{i,j}=C_{i,j}\bmod M.
\]
The high-level outline of our algorithm is as follows:
\begin{enumerate}
    \item Find a ``good'' modulus $Q$ with $M \le Q \le Mn^{o(1)}$. The definition of ``good'' is deferred to \cref{def:good-modulus}. 
    \item For every $(i,j)\in[n^a]\times[n^c]$, compute $s_{i,j}$, the number of $k\in[n^b]$ such that
    $A_{i,k}+B_{k,j}\equiv C_{i,j}\pmod{Q}$.
    \item For every $(i,j)\in[n^a]\times[n^c]$, compute $s'_{i,j}$, the number of $k\in[n^b]$ such that
    $A_{i,k}+B_{k,j}\equiv C_{i,j}\pmod{Q}$ and $\Ah_{i,k}+\Bh_{k,j}\ne \Ch_{i,j}$.
    \item For each $(i,j)\in[n^a]\times[n^c]$, output \textsc{Yes} if and only if $s_{i,j}>s'_{i,j}$.
\end{enumerate}
We will prove that $A_{i, k} + B_{k, j} = C_{i, j}$ if and only if $A_{i, k} + B_{k, j} \equiv C_{i, j} \pmod{Q}$ and $\Ah_{i, k} + \Bh_{k, j} = \Ch_{i, j}$. Hence, there exists a witness $k$ if and only if $s_{i,j} > s'_{i,j}$. We first prove two simple facts that bound $\lvert A_{i,k}+B_{k,j}-C_{i,j}\rvert$, depending on whether $\Ah_{i,k}+\Bh_{k,j}=\Ch_{i,j}$, and then prove the correctness of the algorithm assuming that our subroutines are correct.

\begin{fact}
\label{fact:big-difference}
For every $(i,k,j)\in[n^a]\times[n^b]\times[n^c]$, if $\Ah_{i,k}+\Bh_{k,j}\ne \Ch_{i,j}$, then $|A_{i,k}+B_{k,j}-C_{i,j}|\ge 7M/10$.
\end{fact}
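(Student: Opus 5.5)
We decompose each entry into its high and low parts, $x = M\cdot x^{\high} + x^{\low}$ with $0 \le x^{\low} \le M/10$ by the promise of \cref{prob:exact-tri-mod-M}. The difference then splits as
\[
    A_{i,k}+B_{k,j}-C_{i,j} = M\bigl(\Ah_{i,k}+\Bh_{k,j}-\Ch_{i,j}\bigr) + \bigl(\Al_{i,k}+\Bl_{k,j}-\Cl_{i,j}\bigr).
\]
Write $h := \Ah_{i,k}+\Bh_{k,j}-\Ch_{i,j}$, which is an integer, and $r := \Al_{i,k}+\Bl_{k,j}-\Cl_{i,j}$. The proof is a two-step triangle-inequality argument on this split.

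First, we bound the low-order term. Since $\Al_{i,k},\Bl_{k,j}\in[0,M/10]$ and $\Cl_{i,j}\in[0,M/10]$, we get $r \in [-M/10,\, 2M/10]$, so $|r|\le M/5$.

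Second, we use the hypothesis: if $\Ah_{i,k}+\Bh_{k,j}\ne\Ch_{i,j}$, then $h$ is a nonzero integer, so $|Mh|\ge M$. The reverse triangle inequality then gives
\[
    |A_{i,k}+B_{k,j}-C_{i,j}| \ge |Mh|-|r| \ge M - \tfrac{2M}{10} = \tfrac{8M}{10} \ge \tfrac{7M}{10}.
\]
There is no real obstacle here; the only point needing care is the asymmetric range of $r$ (two positive summands against one negative). Even using the crude bound $|r|\le 3M/10$ would still yield exactly the claimed $7M/10$.
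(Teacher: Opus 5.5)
Your proof is correct and follows the paper's argument: write the difference as $M$ times the nonzero integer high-part difference plus the low-part difference, then apply the reverse triangle inequality. The only deviation is that the paper bounds the low-part term crudely by $3M/10$, giving exactly $7M/10$, whereas your range $[-M/10,\,2M/10]$ gives $8M/10$.
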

\begin{proof}
We decompose $A_{i,k}+B_{k,j}-C_{i,j}
= M\cdot\bigl(\Ah_{i,k}+\Bh_{k,j}-\Ch_{i,j}\bigr) + \bigl(\Al_{i,k}+\Bl_{k,j}-\Cl_{i,j}\bigr)$. If $\Ah_{i,k}+\Bh_{k,j}-\Ch_{i,j}\ne 0$, then $|\Ah_{i,k}+\Bh_{k,j}-\Ch_{i,j}|\ge 1$. By the promise of \cref{prob:exact-tri-mod-M}, $|\Al_{i,k}+\Bl_{k,j}-\Cl_{i,j}| \le |\Al_{i,k}|+|\Bl_{k,j}|+|\Cl_{i,j}| \le 3M/10$. Therefore, using $|a+b|\ge |a|-|b|$, we have
\[
    \bigl|A_{i,k}+B_{k,j}-C_{i,j}\bigr|
    \ge M\cdot\bigl|\Ah_{i,k}+\Bh_{k,j}-\Ch_{i,j}\bigr| - \bigl|\Al_{i,k}+\Bl_{k,j}-\Cl_{i,j}\bigr|
    \ge M-3M/10
    =7M/10.
\]
\end{proof}

\begin{fact}
\label{fact:small-difference}
For every $(i,k,j)\in[n^a]\times[n^b]\times[n^c]$, if $\Ah_{i,k}+\Bh_{k,j}=\Ch_{i,j}$, then $|A_{i,k}+B_{k,j}-C_{i,j}|\le 3M/10$.
\end{fact}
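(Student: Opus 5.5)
We use the same decomposition as in the proof of \cref{fact:big-difference}:
\[
    A_{i,k}+B_{k,j}-C_{i,j}
    = M\cdot\bigl(\Ah_{i,k}+\Bh_{k,j}-\Ch_{i,j}\bigr) + \bigl(\Al_{i,k}+\Bl_{k,j}-\Cl_{i,j}\bigr).
\]
This identity holds because each entry $x$ of $A$, $B$, $C$ satisfies $x = M\lfloor x/M\rfloor + (x \bmod M)$. Under the hypothesis $\Ah_{i,k}+\Bh_{k,j}=\Ch_{i,j}$, the first term vanishes, so
\[
    A_{i,k}+B_{k,j}-C_{i,j} = \Al_{i,k}+\Bl_{k,j}-\Cl_{i,j}.
\]

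Next we bound the remaining low-order expression with the promise of \cref{prob:exact-tri-mod-M}. Each of $\Al_{i,k}$, $\Bl_{k,j}$, $\Cl_{i,j}$ lies in $[0, M/10]$. The triangle inequality then gives
\[
    \bigl|\Al_{i,k}+\Bl_{k,j}-\Cl_{i,j}\bigr| \le \Al_{i,k}+\Bl_{k,j}+\Cl_{i,j} \le 3M/10 .
\]
A sharper bound is available: the difference lies in $[-M/10, 2M/10]$. The stated bound of $3M/10$ is all the statement requires.

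There is no real obstacle here. The only point needing care is the nonnegativity of the low parts, which holds because $x \bmod M \in \{0,\dots,M-1\}$ by the paper's convention. This is what allows us to drop the absolute values when summing the three residues.
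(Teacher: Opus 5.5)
Your proof is correct and is the paper's argument: the high-part term in $A_{i,k}+B_{k,j}-C_{i,j} = M(\Ah_{i,k}+\Bh_{k,j}-\Ch_{i,j}) + (\Al_{i,k}+\Bl_{k,j}-\Cl_{i,j})$ vanishes, and the low-order remainder is at most $3M/10$ in absolute value by the promise of \cref{prob:exact-tri-mod-M}, as in \cref{fact:big-difference}. Your sharper observation that this difference lies in $[-M/10, 2M/10]$ is also correct, but it is not needed.
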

\begin{proof}
Since $\Ah_{i,k}+\Bh_{k,j}-\Ch_{i,j}=0$, we have
$|A_{i,k}+B_{k,j}-C_{i,j}|= |\Al_{i,k}+\Bl_{k,j}-\Cl_{i,j}|
\le 3M/10$ by the same argument as in \cref{fact:big-difference}.
\end{proof}

Recall that $s_{i,j}$ is the number of $k\in[n^b]$ such that $A_{i,k}+B_{k,j}\equiv C_{i,j}\pmod Q$, and $s'_{i,j}$ is the number of $k\in[n^b]$ such that $A_{i,k}+B_{k,j}\equiv C_{i,j}\pmod Q$ and $\Ah_{i,k}+\Bh_{k,j}\ne \Ch_{i,j}$. We now prove the correctness of our algorithm.

\begin{lemma}[Correctness]
\label{lem:s-vs-sprime}
For every $(i,j)\in[n^a]\times[n^c]$, there exists $k\in[n^b]$ such that $A_{i,k}+B_{k,j}=C_{i,j}$ if and only if $s_{i,j}>s'_{i,j}$.
\end{lemma}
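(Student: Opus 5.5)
The plan is to prove the pointwise equivalence: for every fixed triple $(i,k,j)$,
\[
A_{i,k}+B_{k,j}=C_{i,j} \iff A_{i,k}+B_{k,j}\equiv C_{i,j}\pmod Q \text{ and } \Ah_{i,k}+\Bh_{k,j}=\Ch_{i,j}.
\]
Once this is established, the lemma is a counting statement. By definition, $s_{i,j}-s'_{i,j}$ counts those $k$ with $A_{i,k}+B_{k,j}\equiv C_{i,j}\pmod Q$ and $\Ah_{i,k}+\Bh_{k,j}=\Ch_{i,j}$, since the set counted by $s'_{i,j}$ is a subset of the set counted by $s_{i,j}$. By the equivalence, this difference equals the number of witnesses $k$. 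Hence $s_{i,j}>s'_{i,j}$ holds exactly when some witness exists.

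For the forward direction of the equivalence, I take $A_{i,k}+B_{k,j}=C_{i,j}$. The congruence modulo $Q$ is then immediate. To get the high-part equality, I argue by contrapositive. If $\Ah_{i,k}+\Bh_{k,j}\ne\Ch_{i,j}$, then \cref{fact:big-difference} gives $|A_{i,k}+B_{k,j}-C_{i,j}|\ge 7M/10>0$, which contradicts the assumed equality.

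For the backward direction, I set $D:=A_{i,k}+B_{k,j}-C_{i,j}$ and assume $Q\mid D$ and $\Ah_{i,k}+\Bh_{k,j}=\Ch_{i,j}$. \cref{fact:small-difference} gives $|D|\le 3M/10$. Since the algorithm chooses $Q\ge M$, we get $|D|<Q$. The only multiple of $Q$ with absolute value less than $Q$ is $0$, so $D=0$, as required.

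No step here is a real obstacle. The only point that needs attention is the backward direction, where the property $Q\ge M$ of the chosen modulus must be used explicitly. The separate ``goodness'' property of $Q$ plays no role in correctness; it only affects the running time.
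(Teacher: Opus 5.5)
Your proof is correct and matches the paper's argument: the backward direction is identical (\cref{fact:small-difference} gives $|A_{i,k}+B_{k,j}-C_{i,j}|\le 3M/10<Q$, forcing the difference to be $0$), and the counting step is the same. The only difference is in the forward direction. You obtain $\Ah_{i,k}+\Bh_{k,j}=\Ch_{i,j}$ as the contrapositive of \cref{fact:big-difference}, whereas the paper argues directly that $\Al_{i,k}+\Bl_{k,j}\le 2M/10$ creates no carry, so $\Ch_{i,j}=\lfloor (A_{i,k}+B_{k,j})/M\rfloor=\Ah_{i,k}+\Bh_{k,j}$; both routes are valid.
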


\begin{proof}
Fix $(i,j)$ and suppose there exists $k\in[n^b]$ such that $A_{i,k}+B_{k,j}=C_{i,j}$. Then $A_{i,k}+B_{k,j}\equiv C_{i,j}\pmod Q$, so this $k$ contributes one to $s_{i,j}$. Moreover, under the promise of \cref{prob:exact-tri-mod-M}, we have $\Al_{i,k}+\Bl_{k,j}\le 2M/10$, so adding the low parts creates no carry across a multiple of $M$. Hence,
$\Ch_{i,j} =\lfloor C_{i,j}/M \rfloor =\lfloor (A_{i,k}+B_{k,j})/M\rfloor =\Ah_{i,k}+\Bh_{k,j}$.
Therefore, this $k$ does not contribute to $s'_{i,j}$, and so $s_{i,j}>s'_{i,j}$.

Conversely, suppose $s_{i,j}>s'_{i,j}$. Then there exists $k\in[n^b]$ such that $A_{i,k}+B_{k,j}\equiv C_{i,j}\pmod Q$ and $\Ah_{i,k}+\Bh_{k,j}=\Ch_{i,j}$. By \cref{fact:small-difference}, we have $|A_{i,k}+B_{k,j}-C_{i,j}|\le 3M/10 < Q$. Since $A_{i,k}+B_{k,j}-C_{i,j}$ is a multiple of $Q$ and has magnitude strictly less than $Q$, it must be zero. Therefore, $A_{i,k}+B_{k,j}=C_{i,j}$.
\end{proof}

In the following sections, we describe how to compute $s_{i,j}$ and $s'_{i,j}$ efficiently, and how to choose and find a modulus $Q$. To this end, we introduce several objects used by the subroutines and their analysis. We first provide some intuition. As in \cite{CDXZ22}, we view each pair of row-monotone rows $B_{k,*}$ and $C_{i,*}$ through a level-$\ell$ ``coarse lens,'' by rounding entries down to multiples of $2^\ell$. This partitions the column indices into contiguous blocks on which both rounded rows are constant; these blocks are called \emph{level-$\ell$ segments}. Within a segment, the values $B_{k,j}$ and $C_{i,j}$ vary by at most $2^\ell$, so checking a modular relation at the segment start (namely, $A_{i, k} + B_{k, j_0} \equiv C_{i, j_0} \pmod{Q}$, where $j_0$ is the segment start) represents the entire segment up to an additive slack of $\pm 4\cdot 2^{\ell}$. We call a segment \emph{active} (with respect to a modulus $Q$) if the corresponding modular relation can hold for some shift in $[-4\cdot 2^\ell,\,4\cdot 2^\ell]$ while the high parts indicate that equality does not hold. Thus, active segments capture the remaining ``spurious'' modular matches. Finally, a modulus $Q$ is \emph{good} if, for every level $\ell$, it leaves only a few active segments so that we can afford to enumerate and eliminate all spurious matches.

More formally, let $\lm$ be such that $M/20 \le 2^{\lm} < M/10$.

\begin{definition}[Segments]
\label{def:segments}
For $0 \le \ell \le \lm$, a level-$\ell$ segment is a triple $(i, k, [j_0, j_1])$ such that for every $j \in [j_0, j_1]$,
$\lfloor B_{k, j_0} / 2^\ell\rfloor = \lfloor B_{k, j} / 2^\ell\rfloor$ and $\lfloor C_{i, j_0} / 2^\ell\rfloor = \lfloor C_{i, j} / 2^\ell\rfloor$, and $[j_0, j_1]$ cannot be extended further.
\end{definition}

\begin{fact}
\label{fact:num-segments}
For $0 \le \ell \le \lm$, the total number of level-$\ell$ segments is $\cO(n^{a+b+\mu}/2^\ell)$.
\end{fact}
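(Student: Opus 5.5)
For each fixed pair $(i,k)\in[n^a]\times[n^b]$, I will bound the number of level-$\ell$ segments of the form $(i,k,[j_0,j_1])$ by $\cO(n^\mu/2^\ell)$. Summing over the $n^{a+b}$ pairs then gives the claimed $\cO(n^{a+b+\mu}/2^\ell)$. The bound is not quite free, because $n^{a+b}\cdot n^c$ could exceed this. However, the definition already makes segments of a fixed $(i,k)$ into maximal intervals of $[n^c]$ on which a certain function is constant, so the per-pair count is the right thing to control.

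Fix $(i,k)$ and look at the map $j\mapsto (\lfloor B_{k,j}/2^\ell\rfloor,\lfloor C_{i,j}/2^\ell\rfloor)$. By \cref{def:segments}, the level-$\ell$ segments for $(i,k)$ are exactly the maximal runs of consecutive $j$ on which this pair is constant. These runs partition $[n^c]$. Their number is therefore one plus the number of indices $j\in[n^c-1]$ at which the pair changes between $j$ and $j+1$.

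Both $B$ and $C$ are row-monotone in \cref{prob:exact-tri-mod-M}, and flooring preserves monotonicity. Hence $\lfloor B_{k,j}/2^\ell\rfloor$ is non-decreasing in $j$, so every change strictly increases it. Its entries are nonnegative and bounded by $\cO(n^\mu)$, so it changes at most $\cO(n^\mu)/2^\ell+1$ times. The same holds for $\lfloor C_{i,j}/2^\ell\rfloor$. A change of the pair requires a change in at least one coordinate, so the number of segments is at most $\cO(n^\mu/2^\ell)+3$.

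It remains to absorb the additive constant. Since $\ell\le\lm$ and $2^{\lm}<M/10=\cO(n^\mu)$, we have $n^\mu/2^\ell=\Omega(1)$. So the constant is dominated, and the per-pair bound is $\cO(n^\mu/2^\ell)$.

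There is no real obstacle here. The one point to handle with care is that the bound must not depend on $n^c$: row-monotonicity of both $B$ and $C$ is exactly what prevents a pair from having many segments. The only other detail is checking that the $+O(1)$ term is absorbed, which uses $2^\ell\le 2^{\lm}=\cO(n^\mu)$.
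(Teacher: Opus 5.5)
Your proposal is correct and follows essentially the same argument as the paper: fix $(i,k)$, note that both rounded sequences $\lfloor B_{k,j}/2^\ell\rfloor$ and $\lfloor C_{i,j}/2^\ell\rfloor$ are monotone with $\cO(n^\mu/2^\ell)$ change points, count the common refinement, and sum over the $n^{a+b}$ pairs. Your explicit check that the additive constant is absorbed via $2^\ell \le 2^{\lm} < M/10 = \cO(n^\mu)$ is a detail the paper leaves implicit.
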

\begin{proof}
Fix $(i,k)\in[n^a]\times[n^b]$. For each $j\in[n^c]$, define
$b_j:=\lfloor B_{k,j}/2^\ell\rfloor$ and $c_j:=\lfloor C_{i,j}/2^\ell\rfloor$. Both sequences $(b_j)_j$ and $(c_j)_j$ are monotone. Since all entries of $B$ and $C$ are bounded by $\cO(n^\mu)$, each of $(b_j)_j$ and $(c_j)_j$ is bounded by $\cO(n^\mu/2^\ell)$, and therefore each has at most $\cO(n^\mu/2^\ell)$ constant blocks. A level-$\ell$ segment is a maximal interval on which both sequences are constant; thus the number of level-$\ell$ segments is $\cO(n^\mu/2^\ell)$. Summing over all $n^a n^b$ choices of $(i,k)$ gives $\cO(n^{a+b+\mu}/2^\ell)$ segments in total.
\end{proof}

\begin{definition}[Active segment]
\label{def:active-segment}
For $0 \le \ell \le \lm$, a level-$\ell$ segment $(i,k,[j_0,j_1])$ is called \emph{active} with respect to $Q$ if $\Ah_{i,k}+\Bh_{k,j_0} \ne \Ch_{i,j_0}$ and there exists $s\in [-4\cdot 2^\ell,4\cdot 2^\ell]$ such that $A_{i,k}+B_{k,j_0}-C_{i,j_0}\equiv s \pmod Q$.
\end{definition}

\begin{definition}[Set of active segments]
\label{def:active-segments}
For $0 \le \ell \le \lm$, let $S_{\ell}(Q)$ be the set of all level-$\ell$ segments that are active with respect to $Q$.
\end{definition}

\begin{definition}[Good modulus]
\label{def:good-modulus}
We say that an integer $Q$ is a \emph{good modulus} if $M \le Q \le Mn^{o(1)}$ and $|S_{\ell}(Q)| \le \hO(n^{a+b+\mu}/Q)$ for all $0 \le \ell \le \lm$.
\end{definition}

\subsection{Computing \texorpdfstring{$s_{i,j}$ and $s'_{i,j}$}{si,j and s'i,j}}
Recall that $s_{i,j}$ is the number of indices $k\in[n^b]$ such that $A_{i,k}+B_{k,j}\equiv C_{i,j}\pmod{Q}$. We compute $s_{i,j}$ for all $(i,j) \in [n^a] \times [n^c]$ using polynomial matrix multiplication, which reduces modular counting to ordinary rectangular matrix multiplication.

\begin{lemma}[Computing $s_{i,j}$]
\label{lem:compute-s}
We can compute $s_{i,j}$ for every $(i,j) \in [n^a] \times [n^c]$ in $\hO(Q n^{\omega(a,b,c)})$ time.
\end{lemma}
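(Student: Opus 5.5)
We encode residues modulo $Q$ as exponents of a formal variable and reduce the count to one polynomial matrix product, in the style of~\cite{AlonGM97}. Define an $n^a\times n^b$ polynomial matrix $P(x)$ with $P_{i,k}(x) := x^{A_{i,k} \bmod Q}$, and an $n^b\times n^c$ polynomial matrix $R(x)$ with $R_{k,j}(x) := x^{B_{k,j}\bmod Q}$. Every entry has degree less than $Q$. By the polynomial matrix multiplication fact in the preliminaries, the product $D(x) := P(x)\cdot R(x)$ can be computed in $\hO(Q\,n^{\omega(a,b,c)})$ time. Each entry $D_{i,j}(x)$ has degree less than $2Q$ and nonnegative integer coefficients. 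The coefficient of $x^e$ in $D_{i,j}$ counts the indices $k$ with $(A_{i,k}\bmod Q)+(B_{k,j}\bmod Q)=e$.

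Next, for each $(i,j)$ we read off
\[
s_{i,j} = [x^{r}]D_{i,j}(x) + [x^{r+Q}]D_{i,j}(x), \qquad r := C_{i,j}\bmod Q .
\]
This is correct because $(A_{i,k}\bmod Q)+(B_{k,j}\bmod Q)$ lies in $[0,2Q-2]$. It is congruent to $C_{i,j}$ modulo $Q$ exactly when it equals $r$ or $r+Q$. Reading these two coefficients costs $\cO(1)$ per pair, for $\cO(n^{a+c})$ total. This is dominated by the matrix product, since $\omega(a,b,c)\ge a+c$.

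All integers involved are at most $n^b$, so arithmetic on them has polylogarithmic word size. Computing the residues $A_{i,k}\bmod Q$, $B_{k,j}\bmod Q$, and $C_{i,j}\bmod Q$ takes time linear in the input, which is also dominated by $n^{\omega(a,b,c)}$. The one point to check is that the preliminaries' bound for polynomial matrix products applies as stated: reduce the product to $\cO(Q)$-degree polynomial arithmetic via Kronecker substitution, or equivalently evaluate and interpolate at $\cO(Q)$ points, each requiring one integer matrix product with $\polylog$-bit entries. Beyond that, there is no real obstacle.
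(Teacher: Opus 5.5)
Your proposal is correct and follows essentially the same approach as the paper: encode residues modulo $Q$ as exponents and compute a single polynomial matrix product in $\hO(Q\,n^{\omega(a,b,c)})$ time. The only difference is cosmetic: the paper works in $\mathbb{F}[x]/(x^Q-1)$ (characteristic larger than $n^b$), so wraparound is folded in automatically and it reads the single coefficient of $x^{C_{i,j}\bmod Q}$, whereas you multiply in $\mathbb{Z}[x]$ and add the coefficients of $x^{r}$ and $x^{r+Q}$, which amounts to the same reduction.
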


\begin{proof}
Work over the ring $R:=\mathbb{F}[x]/(x^Q-1)$, where $\mathbb{F}$ has characteristic larger than $n^b$.
Define an $n^a\times n^b$ matrix $A'$ and an $n^b\times n^c$ matrix $B'$ over $R$ by
$$
    A'_{i,k}(x):=x^{A_{i,k}\bmod Q},
    \qquad
    B'_{k,j}(x):=x^{B_{k,j}\bmod Q}.
$$
Compute the matrix product $P(x):=A'(x)\cdot B'(x)$ over $R$ using fast rectangular matrix multiplication. For each residue $r\in\{0,1,\ldots,Q-1\}$, the coefficient of $x^r$ in $P_{i,j}(x)$ is the number of indices $k\in[n^b]$ such that $A_{i,k}+B_{k,j}\equiv r\pmod Q$. Hence, we set $s_{i,j}$ to be the coefficient of $x^{C_{i,j}\bmod Q}$ in $P_{i,j}(x)$. Since multiplication in $R$ reduces exponents modulo $Q$, multiplying two entries costs $\hO(Q)$ field operations, and the overall matrix multiplication takes $\hO(Qn^{\omega(a,b,c)})$ time.
\end{proof}

Recall that $s'_{i,j}$ is the number of indices $k$ such that $A_{i,k}+B_{k,j}\equiv C_{i,j}\pmod Q$ and $\Ah_{i,k}+\Bh_{k,j}\ne \Ch_{i,j}$. Suppose that $Q$ is a good modulus as defined in \cref{def:good-modulus}; we show how to find such a modulus in the next section. For simplicity, let $S_{\ell}$ denote the set of all active level-$\ell$ segments, suppressing the dependence on $Q$. At a high level, we compute $s'_{i,j}$ for all $(i,j)$ using $S_0$. Indeed, since $B_{k,j}$ and $C_{i,j}$ are constant on a level-$0$ segment, the congruence $A_{i,k}+B_{k,j}\equiv C_{i,j}\pmod Q$ can be tested at a single $j$ and then applied to the entire interval $[j_0, j_1]$ in a segment. To compute $S_0$, we compute the sets $S_{\ell}$ via a top-down refinement from level $\lm$ to level $0$. To formalize our idea, we prove a property on which our algorithm relies.

\begin{lemma}
\label{lem:nesting-active}
    For all $0 \le \ell < \lm$, and $(i,k,[j_0', j_1']) \in S_{\ell}$, the unique level-$(\ell + 1)$ segment $(i,k,[j_0, j_1])$ with $[j_0', j_1'] \subseteq [j_0, j_1]$ is active, i.e., $(i,k,[j_0, j_1]) \in S_{\ell + 1}$.
\end{lemma}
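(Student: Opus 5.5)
We need to show that the enclosing level-$(\ell+1)$ segment satisfies both conditions of \cref{def:active-segment}: the high parts do not match at $j_0$, and $A_{i,k}+B_{k,j_0}-C_{i,j_0}$ is congruent modulo $Q$ to some shift in $[-4\cdot 2^{\ell+1}, 4\cdot 2^{\ell+1}]$. The strategy is to compare the values at the two segment starts $j_0 \le j_0'$, using that both lie in the same level-$(\ell+1)$ segment.

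First, we note that the level-$(\ell+1)$ segment containing $[j_0',j_1']$ is well defined and unique. If $\lfloor B_{k,j}/2^\ell\rfloor$ is constant on an interval, then so is $\lfloor B_{k,j}/2^{\ell+1}\rfloor$, and the same holds for $C$. Hence $[j_0',j_1']$ lies inside a single maximal level-$(\ell+1)$ block $[j_0,j_1]$.

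Next, we bound the drift of the values. Write $D(j) := A_{i,k}+B_{k,j}-C_{i,j}$. On $[j_0,j_1]$, both $B_{k,j}$ and $C_{i,j}$ lie in fixed windows of length $2^{\ell+1}$, so $|B_{k,j_0'}-B_{k,j_0}| < 2^{\ell+1}$ and $|C_{i,j_0'}-C_{i,j_0}| < 2^{\ell+1}$. This gives $|D(j_0)-D(j_0')| < 2\cdot 2^{\ell+1} = 4\cdot 2^\ell$.

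The modular condition then follows. By activeness of the level-$\ell$ segment, $D(j_0') \equiv s' \pmod Q$ for some $|s'| \le 4\cdot 2^\ell$. Setting $s := s' + (D(j_0)-D(j_0'))$ gives $D(j_0)\equiv s \pmod Q$ with $|s| \le 8\cdot 2^\ell = 4\cdot 2^{\ell+1}$, as required.

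The high-part condition is the main obstacle: we need $\Ah_{i,k}+\Bh_{k,j_0}\ne \Ch_{i,j_0}$, knowing only that this inequality holds at $j_0'$. The plan is to transfer it through the magnitudes via \cref{fact:big-difference} and \cref{fact:small-difference}. By \cref{fact:big-difference} applied at $j_0'$, $|D(j_0')| \ge 7M/10$. Since $\ell+1 \le \lm$ and $2^{\lm} < M/10$, we have $2^{\ell+1} < M/10$, so the drift is $|D(j_0)-D(j_0')| < 2\cdot 2^{\ell+1} < 2M/10$. Therefore $|D(j_0)| > 7M/10 - 2M/10 = M/2 > 3M/10$. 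By the contrapositive of \cref{fact:small-difference}, this forces $\Ah_{i,k}+\Bh_{k,j_0}\ne \Ch_{i,j_0}$.

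Both conditions hold, so $(i,k,[j_0,j_1]) \in S_{\ell+1}$. The only care needed is using the bound $2^{\ell+1}\le 2^{\lm}<M/10$ (so the argument needs $\ell<\lm$, which is given) and keeping the drift estimate strict.
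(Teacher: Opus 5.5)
Your proof is correct. The congruence part matches the paper's argument exactly: the drift of $D$ between $j_0'$ and $j_0$ is less than $2\cdot 2^{\ell+1}$, and you absorb it into the shift $s$. The two proofs differ only in how they handle the high-part condition.

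The paper proves a stronger structural fact. It shows that $\Bh_{k,j}$ and $\Ch_{i,j}$ are each constant on the whole level-$(\ell+1)$ segment. The reason is that $|B_{k,j}-B_{k,j_0}|<2^{\ell+1}<M/10$, while the low parts differ by at most $M/10$, so any change in $\Bh$ would force a jump of at least $9M/10$. Hence $\Ah_{i,k}+\Bh_{k,j}-\Ch_{i,j}$ takes the same value at $j_0'$ and at $j_0$.

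You instead transfer only the non-equality, and you do it through magnitudes. \cref{fact:big-difference} gives $|D(j_0')|\ge 7M/10$, the drift is below $2M/10$, and the contrapositive of \cref{fact:small-difference} turns $|D(j_0)|>3M/10$ back into $\Ah_{i,k}+\Bh_{k,j_0}\ne\Ch_{i,j_0}$.

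Your route is more economical. It reuses the drift bound you already need for the congruence part and the two facts already proved. It only requires the drift to be smaller than the $4M/10$ gap between those facts.

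The paper's route yields more information. It shows that the high parts of $B$ and $C$ are constant on every segment of level at most $\lm$. So the quantity $\Ah_{i,k}+\Bh_{k,j}-\Ch_{i,j}$ itself, not merely the fact that it is nonzero, is invariant along the segment.
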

\begin{proof}
    Fix $j \in [j_0,j_1]$. By definition of a level-$(\ell+1)$ segment, we have $\lfloor B_{k,j_0}/2^{\ell+1}\rfloor = \lfloor B_{k,j}/2^{\ell+1}\rfloor$. Hence, $|B_{k,j}-B_{k,j_0}|<2^{\ell+1}< M/10$. Moreover, $|\Bl_{k,j}-\Bl_{k,j_0}| \le M/10$ by the promise of \cref{prob:exact-tri-mod-M}. Hence, if $\Bh_{k,j}\neq \Bh_{k,j_0}$, then
    \[
        |B_{k,j}-B_{k,j_0}| = |(\Bh_{k,j} - \Bh_{k,j_0})M + (\Bl_{k,j} - \Bl_{k,j_0})|
        \ge M-|\Bl_{k,j}-\Bl_{k,j_0}|
        \ge 9M/10,
    \] 
    which contradicts the previous bound. Therefore, $\Bh_{k,j}$ is constant for $j \in [j_0,j_1]$, and by a similar argument, $\Ch_{i,j}$ is constant for $j \in [j_0,j_1]$. Now, since $(i,k,[j_0', j_1']) \in S_{\ell}$, we have $\Ah_{i,k}+\Bh_{k,j_0'}-\Ch_{i,j_0'}\neq 0$. Because $j_0' \in [j_0, j_1]$, we conclude that $\Ah_{i,k}+\Bh_{k,j_0}-\Ch_{i,j_0}\neq 0$, satisfying the high-part condition of being active.

    For the congruence condition, note that $[j_0',j_1'] \subseteq [j_0,j_1]$ and $(i,k,[j_0,j_1])$ is a level-$(\ell+1)$ segment. Hence, $|B_{k,j_0}-B_{k,j_0'}|<2^{\ell+1}$ and $|C_{i,j_0}-C_{i,j_0'}|<2^{\ell+1}$, which implies
    \[
        |(A_{i,k}+B_{k,j_0}-C_{i,j_0})-(A_{i,k}+B_{k,j_0'}-C_{i,j_0'})|< 2\cdot 2^{\ell+1}.
    \]
    Since $(i,k,[j_0',j_1']) \in S_{\ell}$, we have $A_{i,k}+B_{k,j_0'}-C_{i,j_0'}\equiv s'\pmod Q$ for some $s' \in [-4\cdot 2^\ell,4\cdot 2^\ell]$. Take
    \[
        s = s' + (A_{i,k}+B_{k,j_0}-C_{i,j_0})-(A_{i,k}+B_{k,j_0'}-C_{i,j_0'}).
    \]
    Then $|s|<4\cdot 2^\ell + 2 \cdot 2^{\ell+1}=4\cdot 2^{\ell+1}$ and
    \[
        (A_{i,k}+B_{k,j_0}-C_{i,j_0})
        \equiv s - s' +  (A_{i,k}+B_{k,j_0'}-C_{i,j_0'})
        \equiv s \pmod Q.
    \]
    Thus, the level-$(\ell+1)$ segment $(i,k,[j_0,j_1])$ is active.
\end{proof}

\begin{lemma}[Computing $S_0$]
\label{lem:compute-S}
We can compute $S_0$ in $\hO(n^{a+b+\mu}/Q)$ time.
\end{lemma}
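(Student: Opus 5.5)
We compute $S_{\lm}, S_{\lm-1}, \ldots, S_0$ top-down, exactly as sketched in the overview, relying on \cref{lem:nesting-active} for correctness and on the good-modulus property (\cref{def:good-modulus}) for the running time.

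\textbf{Base level $\lm$.} We enumerate all level-$\lm$ segments and test each one for activity. For each pair $(i,k)$, we merge the breakpoints of the two monotone sequences $\lfloor B_{k,j}/2^{\lm}\rfloor$ and $\lfloor C_{i,j}/2^{\lm}\rfloor$. We precompute, once per row $k$ of $B$ and per row $i$ of $C$, the list of positions $j$ where the floored value changes. By monotonicity and the entry bound $\cO(n^\mu)$, each list has length $\cO(n^\mu/2^{\lm})$, and the lists can be built by scanning or binary search in $\tcO(n^{b+c}+n^{a+c}+(n^a+n^b)n^\mu/2^{\lm})$ time. Merging the two lists for a pair $(i,k)$ yields its level-$\lm$ segments in time linear in their number. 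Testing activity of a segment means checking whether $\Ah_{i,k}+\Bh_{k,j_0}\ne\Ch_{i,j_0}$ and whether $(A_{i,k}+B_{k,j_0}-C_{i,j_0})\bmod Q$ lies within distance $4\cdot 2^{\lm}$ of $0$ modulo $Q$; both checks take $\cO(1)$ time. By \cref{fact:num-segments} this costs $\cO(n^{a+b+\mu}/2^{\lm})=\cO(n^{a+b+\mu}/M)$, and since $Q\le Mn^{o(1)}$ this is $\hO(n^{a+b+\mu}/Q)$. The preprocessing terms $n^{a+c}, n^{b+c}$ are input-reading costs; I would either note they are dominated or absorb them as the paper does in its overall bound.

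\textbf{Refinement step.} Given $S_{\ell+1}$, for each $(i,k,[j_0,j_1])\in S_{\ell+1}$ we enumerate the level-$\ell$ segments inside $[j_0,j_1]$ and keep the active ones. By \cref{lem:nesting-active}, every segment of $S_\ell$ lies inside some segment of $S_{\ell+1}$, so this recovers $S_\ell$ exactly. For the cost, note that on a level-$(\ell+1)$ segment, $\lfloor B_{k,j}/2^{\ell+1}\rfloor$ is constant, so $\lfloor B_{k,j}/2^{\ell}\rfloor$ takes at most two values there, and the same holds for $C$. Since both sequences are monotone, each level-$(\ell+1)$ segment therefore splits into at most $3$ level-$\ell$ segments. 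We locate the split points by binary search on row $k$ of $B$ for the threshold $2^\ell(2\lfloor B_{k,j_0}/2^{\ell+1}\rfloor+1)$, and similarly on row $i$ of $C$. This takes $\cO(\log n)$ time per segment, so level $\ell$ costs $\tcO(|S_{\ell+1}|)=\hO(n^{a+b+\mu}/Q)$ by goodness of $Q$.

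Summing over the $\lm+1=\cO(\log n)$ levels gives a total of $\hO(n^{a+b+\mu}/Q)$. The only delicate points are that level $\lm$ must be enumerated completely rather than filtered, which is affordable only because $2^{\lm}=\Theta(M)$ and $Q=Mn^{o(1)}$, and that the constant-splitting claim must use the monotonicity of both $B$ and $C$. I expect the main obstacle to be the bookkeeping that charges the binary-search and input-access costs within the stated bound.
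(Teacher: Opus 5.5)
Your proposal is correct and follows essentially the same route as the paper: enumerate every level-$\lm$ segment and filter for activity (affordable because $2^{\lm}=\Theta(M)$ and $Q\le Mn^{o(1)}$), then refine top-down, splitting each active segment into $\cO(1)$ subsegments by binary search, with correctness from \cref{lem:nesting-active} and cost from the good-modulus bound on each $|S_\ell|$. One small point: build the per-row breakpoint lists by binary search rather than by scanning, since that costs only $\tcO((n^a+n^b)n^{\mu}/2^{\lm}) \le \tcO(n^{a+b+\mu}/2^{\lm})$ and avoids the $n^{a+c}+n^{b+c}$ terms, which the lemma's stated bound does not include.
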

\begin{proof}
We first compute all level-$\lm$ segments as follows.
For each $(i,k) \in [n^a] \times [n^b]$, we use binary search on $j$ to group maximal consecutive indices on which
$\lfloor B_{k,j}/2^{\lm}\rfloor$ and $\lfloor C_{i,j}/2^{\lm}\rfloor$ are both constant. We then enumerate all those level-$\lm$ segments and filter the active ones to
obtain $S_{\lm}$. Next, for $\ell=\lm-1,\ldots,0$, we construct $S_{\ell}$ from $S_{\ell+1}$ as follows. Every segment in $S_{\ell+1}$ refines into $\cO(1)$ level-$\ell$ subsegments, since decreasing $\ell$ by $1$ can split a maximal constant block into at most two blocks for $B$ and at most two blocks for $C$. We use binary search to compute these subsegments. For each resulting subsegment, we test whether it is active at level $\ell$ and, if so, insert it into $S_{\ell}$. 

For correctness, note that every active level-$\ell$ segment is a subsegment of an active level-$(\ell + 1)$ segment by \cref{lem:nesting-active}, and so the correctness follows inductively.

By \cref{fact:num-segments}, there are $\tcO(n^{a+b+\mu}/2^{\lm})$ level-$\lm$ segments. Thus, computing them via binary search and constructing $S_{\lm}$ takes $\tcO(n^{a+b+\mu}/2^{\lm}) = \tcO(n^{a+b+\mu}/M) = \hO(n^{a+b+\mu}/Q)$ time. For the refinement, we note that $Q$ is a good modulus, so for each $\ell$, we have $|S_{\ell}|= \hO(n^{a+b+\mu}/Q)$, and each active segment generates only $\cO(1)$ subsegments. There are $\cO(\log M) = \tcO(1)$ levels, so the total time spent in this refinement step is $\hO(n^{a+b+\mu}/Q)$.
\end{proof}

\begin{lemma}[Computing $s'_{i,j}$]
\label{lem:compute-sprime}
We can compute $s'_{i,j}$ for all $(i,j)\in[n^a]\times[n^c]$ in time $\hO(n^{a+b+\mu}/Q+n^{a+c})$.
\end{lemma}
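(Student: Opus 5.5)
The plan is to compute $S_0$ with \cref{lem:compute-S} in $\hO(n^{a+b+\mu}/Q)$ time. We then show that the triples counted by $s'_{i,j}$ are exactly the triples $(i,k,j)$ with $j\in[j_0,j_1]$ for some segment $(i,k,[j_0,j_1])\in S_0$ whose start satisfies $A_{i,k}+B_{k,j_0}\equiv C_{i,j_0}\pmod Q$. After that, we add the contributions of all such segments to the relevant row intervals using difference arrays.

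\textbf{Characterization.} Fix $(i,k,j)$ and let $(i,k,[j_0,j_1])$ be the unique level-$0$ segment containing $j$. By \cref{def:segments} with $\ell=0$, we have $B_{k,j}=B_{k,j_0}$ and $C_{i,j}=C_{i,j_0}$. Hence the high parts agree, $\Bh_{k,j}=\Bh_{k,j_0}$ and $\Ch_{i,j}=\Ch_{i,j_0}$, and so does the congruence. Two directions follow.

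\emph{Counted implies active.} Suppose $k$ contributes to $s'_{i,j}$. Then $A_{i,k}+B_{k,j_0}-C_{i,j_0}\equiv 0\pmod Q$ and $\Ah_{i,k}+\Bh_{k,j_0}\ne\Ch_{i,j_0}$. Taking $s=0\in[-4,4]$ shows the segment is active, so it lies in $S_0$.

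\emph{Active with exact congruence implies counted.} Conversely, take any segment in $S_0$ with $A_{i,k}+B_{k,j_0}\equiv C_{i,j_0}\pmod Q$. Every $j\in[j_0,j_1]$ then satisfies both defining conditions of $s'_{i,j}$, so each such $j$ receives a contribution of one from this $k$.

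Level-$0$ segments for a fixed $(i,k)$ partition $[n^c]$. Therefore
\[
s'_{i,j}=\bigl|\{(i,k,[j_0,j_1])\in S_0 : j\in[j_0,j_1],\ A_{i,k}+B_{k,j_0}-C_{i,j_0}\equiv 0 \pmod Q\}\bigr|.
\]
Segments in $S_0$ whose residue is a nonzero $s\in[-4,4]$ are simply discarded.

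\textbf{Computation.} For each row $i$ we keep a difference array $D_i$ of length $n^c+1$, initialized to zero at a cost of $\cO(n^{a+c})$ in total. For each segment in $S_0$ with exact congruence, we increment $D_i[j_0]$ and decrement $D_i[j_1+1]$, each in $\cO(1)$ time. Prefix sums over every row then yield all $s'_{i,j}$ in $\cO(n^{a+c})$ time. The total running time is $\hO(n^{a+b+\mu}/Q+n^{a+c})$, as claimed.

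There is no serious obstacle here. The only delicate point is ensuring that each $k$ is counted at most once per $(i,j)$, and this holds because level-$0$ segments for a fixed $(i,k)$ are disjoint, so at most one of them contains $j$.
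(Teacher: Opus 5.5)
Your proposal is correct and follows the paper's proof essentially step for step. You compute $S_0$ via \cref{lem:compute-S}, keep only the segments whose start satisfies $A_{i,k}+B_{k,j_0}\equiv C_{i,j_0}\pmod Q$, and aggregate their intervals with per-row difference arrays and prefix sums, justified (as in the paper) by the facts that level-$0$ segments have constant $B_{k,\cdot}$ and $C_{i,\cdot}$ values, partition $[n^c]$ for each fixed $(i,k)$, and that a counted triple yields an active segment with shift $s=0$.
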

\begin{proof}
We first compute the set $S_0$ using \cref{lem:compute-S}.
To compute all $s'_{i,j}$, we aggregate the contributions of active level-$0$ segments in $S_0$ as follows. For each fixed $i\in[n^a]$, we maintain a difference array $D_i$ of length $n^c+1$, indexed by $1,\ldots,n^c+1$, and initialized to zero. For each active level-$0$ segment $(i,k,[j_0,j_1])\in S_0$, we test whether $A_{i,k}+B_{k,j_0}\equiv C_{i,j_0}\pmod Q$. If the test succeeds, we increment $D_i[j_0]$ by one and decrement $D_i[j_1+1]$ by one. After processing all active level-$0$ segments, we compute prefix sums and set
\[
    s'_{i,j}=\sum_{1\le t\le j}D_i[t]\qquad\text{for all } j = 1, \ldots, n^c.
\]

For correctness, fix any triple $(i,k,j)$ counted in $s'_{i,j}$. Then we have $A_{i,k}+B_{k,j}\equiv C_{i,j}\pmod Q$ and $\Ah_{i,k}+\Bh_{k,j}\ne \Ch_{i,j}$. Let $(i,k,[j_0,j_1])$ be the unique level-$0$ segment such that $j \in [j_0, j_1]$. Since $(i,k,[j_0,j_1])$ is a level-$0$ segment, the values $B_{k,j}$ and $C_{i,j}$ are constant over all $j\in[j_0,j_1]$. Since $j \in [j_0, j_1]$, we have 
\[
    \Ah_{i,k}+\Bh_{k,j_0} = \Ah_{i,k}+\Bh_{k,j} \ne \Ch_{i,j} = \Ch_{i,j_0}
\] and 
\[
    A_{i,k}+B_{k,j_0} \equiv A_{i,k}+B_{k,j} \equiv C_{i,j} \equiv C_{i,j_0}\pmod Q.
\] 
Hence, $(i,k,[j_0,j_1])$ is an active level-$0$ segment in $S_0$, and the algorithm performs the range update for $(i,k,[j_0,j_1])$.  We claim that this update increases the reconstructed value $s'_{i,j}$ by one (for this $k$) exactly when $j\in[j_0,j_1]$.
Indeed, for any $j\in[n^c]$,
\begin{itemize}
    \item If $j<j_0$, the prefix sum does not include the increment at position $j_0$, so $s'_{i,j}$ stays the same.
    \item If $j_0\le j\le j_1$, the prefix sum includes the increment at $j_0$ but not the decrement at $j_1+1$, so $s'_{i,j}$ increases by one.
    \item If $j\ge j_1+1$, the prefix sum includes both the increment at $j_0$ and the decrement at $j_1+1$, which cancel, so $s'_{i,j}$ stays the same.
\end{itemize}

Conversely, whenever the algorithm performs a range update for $(i,k,[j_0,j_1])\in S_0$, we have $A_{i,k}+B_{k,j_0}\equiv C_{i,j_0}\pmod Q$. Since $(i,k,[j_0,j_1])$ is a level-$0$ segment, the values $B_{k,j}$ and $C_{i,j}$ are constant over all $j\in[j_0,j_1]$. Therefore, $A_{i,k}+B_{k,j}\equiv C_{i,j}\pmod Q$ for every $j\in[j_0,j_1]$. Moreover, the segment is active, so $\Ah_{i,k}+\Bh_{k,j}\ne \Ch_{i,j}$ for every $j\in[j_0,j_1]$, and the update contributes exactly the intended triples to $s'_{i,j}$.

For the running time, scanning all segments in $S_0$ and performing the constant-time test and update takes $\cO(|S_0|) = \hO(n^{a+b+\mu}/Q)$. The total time to compute prefix sums is $\cO(n^{a+c})$, since for each fixed $i$ we scan $j=1,\ldots,n^c$ once. Together with the time to compute $S_0$, which is $\hO(n^{a+b+\mu}/Q)$ by \cref{lem:compute-S}, this yields the claimed running time.
\end{proof}

\subsection{Finding a Good Modulus \texorpdfstring{$Q$}{Q}}
\label{sec:find-Q}
We now construct a \emph{good modulus} $Q$, namely an integer $Q$ with $M \le Q \le Mn^{o(1)}$ such that the number of active level-$\ell$ segments is $\hO(n^{a+b+\mu}/Q)$ for every $0 \le \ell \le \lm$.

Let $4 \le R \le n^{o(1)}$ be a parameter to be fixed later, and let $\cP$ be the set of primes in $[R/2,R]$. We take $Q$ to be a product of primes $p_1,p_2,\ldots,p_T$ with $p_t \in \cP$ for all $1 \le t \le T$. Intuitively, each new prime factor
filters out many spurious congruences, so the number of active segments shrinks rapidly with $t$. We choose these primes inductively: at step $t \ge 1$, we select a prime $p_t \in \cP$ and multiply it into the current modulus $Q_{t-1}$, obtaining
\[
    Q_t := \prod_{i=1}^t p_i, \qquad \text{where } Q_0:=1.
\]

For the analysis, define $X_{\ell,t}$ as the number of pairs $((i,k,[j_0,j_1]),s)$, where $(i,k,[j_0,j_1])$ is a level-$\ell$ segment and $s\in[-4\cdot 2^\ell,4\cdot 2^\ell]$, such that
\[
    A_{i,k}+B_{k,j_0}-C_{i,j_0}\equiv s \pmod{Q_t}
    \quad\text{and}\quad
    A_{i,k}+B_{k,j_0}-C_{i,j_0}\ne s.
\]

Our inductive invariant for $Q_t$ is that
\begin{equation}
\label{eq:Xlt}
    X_{\ell,t} = n^{a+b+\mu}\cdot \cO\!\left(\frac{\log^2 n}{R}\right)^t
    \qquad \text{for all } 0 \le \ell \le \lm.
\end{equation}
Later, we choose $R$ so that $X_{\ell,T} = \hO(n^{a+b+\mu}/Q_T)$. Then, by the following lemma, $Q_T$ is a good modulus.

\begin{lemma}
\label{lem:bound-active-seg}
For every $0 \le \ell \le \lm$, we have $|S_{\ell}(Q_T)| \le X_{\ell, T}$.
\end{lemma}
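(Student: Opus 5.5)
We will construct an injection from $S_{\ell}(Q_T)$ into the set of pairs counted by $X_{\ell,T}$. Given an active level-$\ell$ segment $(i,k,[j_0,j_1])\in S_\ell(Q_T)$, \cref{def:active-segment} supplies some $s\in[-4\cdot 2^\ell,4\cdot 2^\ell]$ with $A_{i,k}+B_{k,j_0}-C_{i,j_0}\equiv s\pmod{Q_T}$. We map the segment to the pair $((i,k,[j_0,j_1]),s)$, fixing one such $s$ (say the smallest). Distinct segments then give distinct pairs, since the first coordinate already differs. The inequality $|S_\ell(Q_T)|\le X_{\ell,T}$ follows once every image pair is counted in $X_{\ell,T}$.

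The congruence condition holds by the choice of $s$, so it remains to verify $A_{i,k}+B_{k,j_0}-C_{i,j_0}\ne s$. An active segment satisfies $\Ah_{i,k}+\Bh_{k,j_0}\ne\Ch_{i,j_0}$, so \cref{fact:big-difference} gives
\[
|A_{i,k}+B_{k,j_0}-C_{i,j_0}|\ge 7M/10.
\]
On the other hand, $|s|\le 4\cdot 2^\ell\le 4\cdot 2^{\lm}<4M/10$ by the choice $2^{\lm}<M/10$. Hence the difference cannot equal $s$, and the pair is counted in $X_{\ell,T}$.

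This step does not need $Q_T\ge M$; only the magnitude gap between $7M/10$ and $4M/10$ matters. There is no real obstacle. The only point to watch is that the injection goes through pairs rather than segments, which is handled by fixing one $s$ per segment.
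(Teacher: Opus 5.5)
Your proposal is correct and follows the paper's proof: both use \cref{fact:big-difference} to get $|A_{i,k}+B_{k,j_0}-C_{i,j_0}|\ge 7M/10$ and compare it with $|s|\le 4\cdot 2^{\lm}<4M/10$, so every active segment yields a pair counted by $X_{\ell,T}$. Fixing one $s$ per segment only makes explicit the injection that the paper leaves implicit.
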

\begin{proof}
Let $(i,k,[j_0,j_1]) \in S_{\ell}(Q_T)$. By definition of an active segment, there exists an integer $s\in[-4\cdot 2^\ell,\,4\cdot 2^\ell]$ such that $A_{i,k}+B_{k,j_0}-C_{i,j_0}\equiv s \pmod{Q_T}$ and $\Ah_{i,k}+\Bh_{k,j_0}\ne \Ch_{i,j_0}$. We claim that $A_{i,k}+B_{k,j_0}-C_{i,j_0}\ne s$. Indeed, since $\Ah_{i,k}+\Bh_{k,j_0} \ne\Ch_{i,j_0}$, \cref{fact:big-difference} implies $|A_{i,k}+B_{k,j_0}-C_{i,j_0}| \ge 7M/10$. On the other hand, $|s|\le 4\cdot 2^\ell \le 4\cdot 2^{\lm} < 4M/10 < 7M/10$, so $A_{i,k}+B_{k,j_0}-C_{i,j_0}\ne s$. Therefore, every active segment in $S_{\ell}(Q_T)$ yields a pair $((i,k,[j_0,j_1]),s)$ counted by $X_{\ell,T}$, and hence $|S_{\ell}(Q_T)| \le X_{\ell,T}$.
\end{proof}

We now explain how to choose $p_t$ so that the invariant~\cref{eq:Xlt} continues to hold. For the base case $t=0$, we set $Q_0:=1$. Since $X_{\ell,0}$ counts level-$\ell$ segment--shift pairs, we trivially have $X_{\ell,0}\le n^{a+b+\mu}$ by \cref{fact:num-segments}, and thus $Q_0$ satisfies the invariant. Now assume~\cref{eq:Xlt} holds for $t-1$ and that $Q_{t-1}$ is fixed. For any prime $p\in\cP$, let $X_{\ell,t}(p)$ denote the value of $X_{\ell,t}$ when $Q_t$ is set to $Q_{t-1}\cdot p$. Our goal is to find a prime $p_t\in\cP$ such that \cref{eq:Xlt} holds with respect to $X_{\ell,t}(p_t)$.

Such a prime exists; applying the following lemma inductively yields the entire sequence $p_1,\ldots,p_T$.

\begin{lemma}
\label{lem:exists-pstar}
There exists $p^*\in\cP$ such that
\begin{equation}
\label{eq:p-star}
    X_{\ell,t}(p^*) = X_{\ell,t-1}\cdot \cO\!\left(\frac{\log^2 n}{R}\right)
    \qquad\text{for all } 0\le \ell \le \lm.
\end{equation}
Moreover, for every $0\le \ell \le \lm$, if $p$ is chosen uniformly at random from $\cP$, then
$$
    \Ex_{p\sim\cP}\!\bigl[X_{\ell,t}(p)\bigr]
    = X_{\ell,t-1}\cdot \cO\!\left(\frac{\log n}{R}\right).
$$
\end{lemma}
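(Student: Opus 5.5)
We first prove the expectation bound and then obtain the existence of $p^*$ by Markov's inequality and a union bound over levels.

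Fix $\ell$. Each pair $((i,k,[j_0,j_1]),s)$ counted by $X_{\ell,t}(p)$ has $D := A_{i,k}+B_{k,j_0}-C_{i,j_0}-s \ne 0$ and $Q_{t-1}p \mid D$. In particular $Q_{t-1}\mid D$, so the pair is already counted by $X_{\ell,t-1}$. Hence
\[
    X_{\ell,t}(p) = \sum_{\text{pairs counted in } X_{\ell,t-1}} \mathbf{1}\bigl[p \mid D/Q_{t-1}\bigr].
\]
For a single counted pair, $D/Q_{t-1}$ is a nonzero integer of absolute value $\cO(n^{\mu})$, since all entries are $\cO(n^\mu)$ and $|s|\le 4\cdot 2^{\lm} = \cO(M)=\cO(n^\mu)$. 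A nonzero integer of magnitude $\cO(n^\mu)$ has at most $\log_{R/2}\cO(n^\mu) = \cO(\log n)$ distinct prime divisors in $[R/2,R]$, because each such prime is at least $R/2\ge 2$. By the prime number theorem stated in the preliminaries, $|\cP| = \Theta(R/\log R)$. For $p$ uniform in $\cP$, the probability that $p$ divides $D/Q_{t-1}$ is therefore $\cO(\log n \cdot \log R / R)$.

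This is where care is needed to obtain the stated $\cO(\log n/R)$ rather than an extra $\log R$ factor. The sharper count is this: the product of the distinct primes of $\cP$ dividing $D/Q_{t-1}$ is at most $|D/Q_{t-1}|$. Each such prime is at least $R/2$, so there are at most $\log(\cO(n^\mu))/\log(R/2)$ of them. Dividing by $|\cP| = \Theta(R/\log R)$ gives
\[
    \cO\!\left(\frac{\log n}{\log R}\cdot\frac{\log R}{R}\right) = \cO\!\left(\frac{\log n}{R}\right),
\]
using $\log(R/2) = \Theta(\log R)$ for $R \ge 4$. Linearity of expectation then gives
\[
    \Ex_{p\sim\cP}\bigl[X_{\ell,t}(p)\bigr] = X_{\ell,t-1}\cdot\cO(\log n/R).
\]

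For the existence of $p^*$, there are $\lm+1 = \cO(\log M) = \cO(\log n)$ levels. For each $\ell$, Markov's inequality gives
\[
    \Pr\bigl[X_{\ell,t}(p) > 2(\lm+1)\,\Ex[X_{\ell,t}(p)]\bigr] < \frac{1}{2(\lm+1)}.
\]
A union bound over all levels shows that with probability greater than $1/2$ a random $p\in\cP$ satisfies
\[
    X_{\ell,t}(p) \le \cO(\log n)\cdot X_{\ell,t-1}\cdot\cO(\log n/R) = X_{\ell,t-1}\cdot\cO(\log^2 n/R)
\]
for every $\ell$ simultaneously. In particular such a $p^*$ exists, which is \cref{eq:p-star}.

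If $X_{\ell,t-1}=0$, the expectation is $0$ and the Markov step is vacuous, so this case causes no problem. The only delicate point is the divisor-counting bound; the rest is routine.
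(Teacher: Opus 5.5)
Your proposal is correct and follows the paper's proof essentially step for step. Like the paper, you write $X_{\ell,t}(p)$ as a sum of divisibility indicators over the pairs counted by $X_{\ell,t-1}$, bound each probability by counting prime divisors of size at least $R/2$ against $|\cP|=\Theta(R/\log R)$, and finish with Markov's inequality and a union bound over the $\cO(\log n)$ levels. One small improvement: your indicator $\mathbf{1}[p \mid D/Q_{t-1}]$ makes the identity exact even when $p$ already divides $Q_{t-1}$, whereas the paper's indicator $\mathbf{1}[p \mid D]$ is then only an upper bound (which is all the argument needs).
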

\begin{proof}
Fix a level $\ell$. Let $\cX_{\ell,t-1}$ be the set of all pairs $((i,k,[j_0,j_1]),s)$ counted by $X_{\ell,t-1}$; that is, all pairs satisfying
$$
    A_{i,k}+B_{k,j_0}-C_{i,j_0}\equiv s \pmod{Q_{t-1}}
    \qquad\text{and}\qquad
    A_{i,k}+B_{k,j_0}-C_{i,j_0}\ne s,
$$
where $(i,k,[j_0,j_1])$ is a level-$\ell$ segment and $s\in[-4\cdot 2^\ell,\,4\cdot 2^\ell]$. For a prime $p\in\cP$, such a pair is counted in $X_{\ell,t}(p)$ if and only if, additionally, $p$ divides $A_{i,k}+B_{k,j_0}-C_{i,j_0}-s$. Therefore,
$$
    X_{\ell,t}(p)
    =\sum_{((i,k,[j_0,j_1]),s)\in \cX_{\ell,t-1}} \mathbf{1}\!\left[p \mid \bigl(A_{i,k}+B_{k,j_0}-C_{i,j_0}-s\bigr)\right].
$$

Now choose $p$ uniformly at random from $\cP$. Fix any $((i,k,[j_0,j_1]),s)\in \cX_{\ell,t-1}$ and define
$$
    \Delta:=A_{i,k}+B_{k,j_0}-C_{i,j_0}-s.
$$
Note that $\Delta\ne 0$ by the definition of $X_{\ell,t-1}$.
Since the entries of $A,B$ and $C$ are bounded by $\cO(n^\mu)$ and $|s|\le 4\cdot 2^\ell\le 4\cdot 2^{\lm}=\cO(M)=n^{O(1)}$, we have $|\Delta|= n^{O(1)}$ and hence $\log|\Delta|=\cO(\log n)$.
Any prime divisor of $\Delta$ in $[R/2,R]$ is at least $R/2$, so the number of such prime divisors is at most
$$
    \frac{\log|\Delta|}{\log(R/2)}=\cO\!\left(\frac{\log n}{\log(R/2)}\right).
$$
By the prime number theorem, $|\cP|=\Theta(R/\log R)$, and thus
$$
    \Pr_{p\sim\cP}[p\mid \Delta]
    = \frac{\cO(\log n/\log(R/2))}{|\cP|}
    = \cO\!\left(\frac{\log n}{R}\cdot \frac{\log R}{\log(R/2)}\right)
    = \cO\!\left(\frac{\log n}{R}\right),
$$
where the last equality uses
$$
    \frac{\log R}{\log(R/2)}=\frac{\log R}{\log R-\log 2}\le 2
    \qquad \text{for } R \ge 4.
$$
Then, by linearity of expectation,
\begin{align*}
    \Ex_{p\sim\cP}[X_{\ell,t}(p)]
    &=
    \sum_{((i,k,[j_0,j_1]),s)\in \cX_{\ell,t-1}} \Pr_{p\sim\cP}\!\left[p \mid \bigl(A_{i,k}+B_{k,j_0}-C_{i,j_0}-s\bigr)\right] \\
    &\le \sum_{((i,k,[j_0,j_1]),s)\in \cX_{\ell,t-1}} \cO\!\left(\frac{\log n}{R}\right)\\
    &= X_{\ell,t-1}\cdot \cO\!\left(\frac{\log n}{R}\right),
\end{align*}
as desired.

To prove the existence of $p^*$, let $L:=\lm+1$. Recall that $2^{\lm}=\Theta(M)$ and $M= n^{O(1)}$, so $L=O(\log M)=O(\log n)$. From the expectation bound proved above, there exists an absolute constant $C_0>0$ such that for every $0 \le \ell \le \lm$,
$$
    \Ex_{p\sim\cP}\!\bigl[X_{\ell,t}(p)\bigr] \le C_0 \cdot X_{\ell,t-1}\cdot \frac{\log n}{R}.
$$
By Markov's inequality, for any $c>0$,
$$
\Pr_{p\sim\cP}\!\left[
X_{\ell,t}(p) > c\cdot C_0\cdot \frac{\log n}{R}\,X_{\ell,t-1}
\right]
\le \frac{1}{c}.
$$
Set $c:=2L$ and apply a union bound over all $0 \le \ell \le \lm$ to get
$$
\Pr_{p\sim\cP}\!\left[
\exists\,\ell\in\{0,\ldots,\lm\}:\ 
X_{\ell,t}(p) > 2LC_0\cdot \frac{\log n}{R}\,X_{\ell,t-1}
\right]
\le \frac{L}{2L}=\frac{1}{2}<1.
$$
Hence, there exists $p^*\in\cP$ such that for all $0\le \ell\le \lm$,
$$
X_{\ell,t}(p^*)
\le 2LC_0\cdot \frac{\log n}{R}\,X_{\ell,t-1}
= X_{\ell,t-1}\cdot O\!\left(\frac{L\log n}{R}\right)
= X_{\ell,t-1}\cdot O\!\left(\frac{\log^2 n}{R}\right).
$$
\end{proof}

We now discuss how to deterministically select these primes. For this, it is convenient to track the following quantities as well.
\begin{enumerate}
    \item Let $Y_{\ell,t}$ denote the number of pairs $((i,k,[j_0,j_1]),s)$, where $(i,k,[j_0,j_1])$ is a level-$\ell$ segment and $s\in[-4\cdot 2^\ell,\,4\cdot 2^\ell]$, satisfying
    $$
        A_{i,k}+B_{k,j_0}-C_{i,j_0}\equiv s \pmod{Q_t}.
    $$
    \item Let $Z_{\ell}$ denote the number of pairs $((i,k,[j_0,j_1]),s)$, where $(i,k,[j_0,j_1])$ is a level-$\ell$ segment and $s\in[-4\cdot 2^\ell,\,4\cdot 2^\ell]$, satisfying
    $$
        A_{i,k}+B_{k,j_0}-C_{i,j_0}=s.
    $$
\end{enumerate}
As with $X_{\ell,t}(p)$, let $Y_{\ell,t}(p)$ denote the value of $Y_{\ell,t}$ when the modulus is $Q_{t-1}\cdot p$.

The high-level idea for selecting $p_t$ is as follows. Fix a level $\ell$. As briefly mentioned in \cref{sec:overview}, directly minimizing $X_{\ell,t}(p)$ would require naively enumerating all level-$\ell$ segments, which is expensive. The key observation is that
$$
    X_{\ell,t}(p)=Y_{\ell,t}(p)-Z_\ell,
$$
where $Z_\ell$ depends only on the instance and $\ell$ and is independent of $p$. Hence, when comparing primes, this unknown offset cancels. In particular,
$$
    X_{\ell,t}(p)-\min_{p'\in\cP}X_{\ell,t}(p')
    =
    Y_{\ell,t}(p)-\min_{p'\in\cP}Y_{\ell,t}(p').
$$
Therefore, it suffices to compute $Y_{\ell,t}(p)$ for all $p\in\cP$ and choose a prime $p$ such that $Y_{\ell,t}(p)$ is close to its minimum \emph{simultaneously for all levels}. The process of computing $Y_{\ell,t}(p)$ will be essentially the same as how we computed $s_{i,j}$: we reduce modular counting to rectangular matrix multiplication over a polynomial ring. The only difference is that we count \emph{segments}, identified by boundary indicators, rather than all column indices. More formally, we have the following lemma.

\begin{lemma}[Computing $Y_{\ell,t}(p)$]
\label{lem:compute_Ylt}
We can compute $Y_{\ell,t}(p)$ for all primes $p \in \cP$ and all levels $0 \le \ell \le \lm$ in time $\hO(R^2 Q_{t-1} n^{\omega(a,b,c)})$.
\end{lemma}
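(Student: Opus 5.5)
The plan is to write $Y_{\ell,t}(p)$ as a sum over triples $(i,k,j)$ in which $j$ ranges over segment start points. We then evaluate that sum with a constant number of polynomial matrix products over $\mathbb{F}[x]/(x^{Q}-1)$ with $Q:=Q_{t-1}p$, exactly as in \cref{lem:compute-s}. Fix a level $\ell$ and a prime $p\in\cP$.

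\emph{Boundary indicators.} Define
\[
\beta_{k,j}:=\mathbf{1}\bigl[j=1 \text{ or } \lfloor B_{k,j}/2^\ell\rfloor\neq\lfloor B_{k,j-1}/2^\ell\rfloor\bigr],
\qquad
\gamma_{i,j}:=\mathbf{1}\bigl[j=1 \text{ or } \lfloor C_{i,j}/2^\ell\rfloor\neq\lfloor C_{i,j-1}/2^\ell\rfloor\bigr].
\]
By \cref{def:segments}, $j_0$ is the start of a level-$\ell$ segment $(i,k,[j_0,j_1])$ if and only if $\beta_{k,j_0}\vee\gamma_{i,j_0}$ holds. This disjunction equals $\beta_{k,j_0}+\gamma_{i,j_0}-\beta_{k,j_0}\gamma_{i,j_0}$, and at $j_0=1$ it correctly evaluates to $1$.

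\emph{Shift weights.} For each residue $r\in\{0,\dots,Q-1\}$, let $w(r)$ be the number of $s\in[-4\cdot2^\ell,4\cdot2^\ell]$ with $s\equiv r\pmod Q$. Using multiplicities means the argument does not need $Q$ to exceed $8\cdot 2^\ell$. Then
\[
Y_{\ell,t}(p)=\sum_{i,k,j}\bigl(\beta_{k,j}+\gamma_{i,j}-\beta_{k,j}\gamma_{i,j}\bigr)\,w\bigl((A_{i,k}+B_{k,j}-C_{i,j})\bmod Q\bigr).
\]

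\emph{Evaluating the three terms.} We handle the three terms separately.
\begin{itemize}
\item For the first and third terms, set $A'_{i,k}:=x^{A_{i,k}\bmod Q}$ and $B^{\beta}_{k,j}:=\beta_{k,j}\,x^{B_{k,j}\bmod Q}$, and compute $P^{\beta}:=A'B^{\beta}$ over $\mathbb{F}[x]/(x^Q-1)$.
\item For the middle term, compute $P:=A'B'$ with $B'_{k,j}:=x^{B_{k,j}\bmod Q}$.
\end{itemize}
Writing $[x^r]P_{i,j}$ for the coefficient of $x^r$, let $\sigma(P_{i,j}):=\sum_r [x^r]P_{i,j}\; w\bigl((r-C_{i,j})\bmod Q\bigr)$. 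Then
\[
Y_{\ell,t}(p)=\sum_{i,j}\Bigl(\sigma(P^{\beta}_{i,j})+\gamma_{i,j}\,\sigma(P_{i,j})-\gamma_{i,j}\,\sigma(P^{\beta}_{i,j})\Bigr).
\]
The factor $\gamma_{i,j}$ depends only on $(i,j)$, so it can be pulled outside the product. This is the reason for splitting the OR into three terms: it turns the three-index indicator into quantities that factor through a matrix product.

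\emph{Running time.} Computing $\beta$, $\gamma$ and $w$ is linear in the input size plus $\cO(Q)$. Each polynomial matrix product costs $\hO(Q\,n^{\omega(a,b,c)})$. Each inner product $\sigma(\cdot)$ costs $\cO(Q)$, so these add $\cO(Q\,n^{a+c})\le \cO(Q\,n^{\omega(a,b,c)})$ in total, since $\omega(a,b,c)\ge a+c$. With $Q=Q_{t-1}p\le Q_{t-1}R$, one pair $(\ell,p)$ costs $\hO(RQ_{t-1}n^{\omega(a,b,c)})$. There are $|\cP|\le R$ primes and $\lm+1=\cO(\log n)$ levels, which gives the claimed total $\hO(R^2Q_{t-1}n^{\omega(a,b,c)})$.

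The main obstacle is the counting of segments rather than columns, i.e.\ expressing "$j_0$ is a segment start for the pair $(i,k)$" using only indicators that depend on $(k,j)$ or on $(i,j)$. The inclusion–exclusion split above is the step I would verify most carefully, in particular the $j=1$ convention and the case in which $B$ and $C$ change simultaneously. A second point to check is the multiplicity weight $w$, which keeps the count correct even when $Q\le 8\cdot 2^\ell$ at small $t$.
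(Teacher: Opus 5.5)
Your proposal is correct and follows essentially the same route as the paper. Both use the same boundary indicators, the same two products $A'B'$ and $A'B^{\mathrm{bdry}}$ over $\mathbb{F}[x]/(x^{Q'}-1)$, and the same shift-multiplicity weight $W(r)$. Your expression $\sigma(P^{\beta}_{i,j})+\gamma_{i,j}\bigl(\sigma(P_{i,j})-\sigma(P^{\beta}_{i,j})\bigr)$ is just the paper's case split (take $D^{\mathrm{all}}_{i,j}$ when $I^C_{i,j}=1$ and $D^{\mathrm{bdry}}_{i,j}$ otherwise) written algebraically.
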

\begin{proof}
Fix a prime $p\in\cP$ and a level $\ell$. We present an algorithm to compute $Y_{\ell,t}(p)$ for the modulus $Q' = Q_{t-1}\cdot p$ in $\hO(Q'\,n^{\omega(a,b,c)})$ time. Since there are $\lm = \cO(\log M)= \cO(\log n)$ levels, $|\cP|=\Theta(R/\log R)=\tcO(R)$, and $Q'=Q_{t-1}\cdot p \le Q_{t-1}R$, the total running time to compute $Y_{\ell,t}(p)$ for all $p \in \cP$ and all levels is \[
\hO(|\cP|\cdot \lm \cdot (RQ_{t-1})\,n^{\omega(a,b,c)}) = \hO(R^2 Q_{t-1} n^{\omega(a,b,c)}).
\]

Recall that $Y_{\ell,t}(p)$ counts pairs $((i,k,[j_0,j_1]),s)$ where $(i,k,[j_0,j_1])$ is a level-$\ell$ segment, $s\in [-4\cdot 2^\ell,\,4\cdot 2^\ell]$, and $A_{i,k}+B_{k,j_0}-C_{i,j_0}\equiv s \pmod{Q'}$.
A level-$\ell$ segment $(i,k,[j_0,j_1])$ is a maximal interval on which both $\left\lfloor B_{k,j}/2^\ell\right\rfloor$ and $\left\lfloor C_{i,j}/2^\ell\right\rfloor$ are constant. Therefore, $[j_0,j_1]$ starts at column $j$ if and only if either $j=1$, or at least one of $\left\lfloor B_{k,j}/2^\ell\right\rfloor$ and $\left\lfloor C_{i,j}/2^\ell\right\rfloor$ changes at $j$. Hence, it suffices to check each segment exactly once via its start column $j_0$. More concretely, define an $n^b\times n^c$ boundary indicator matrix $I^B$ by
\[
    I^B_{k,1}:=1,\qquad I^B_{k,j}:=\mathbf{1}\!\left[\left\lfloor B_{k,j-1}/2^\ell\right\rfloor\ne \left\lfloor B_{k,j}/2^\ell\right\rfloor\right] \quad (j\ge 2),
\]
and an $n^a\times n^c$ boundary indicator matrix $I^C$ by
\[
    I^C_{i,1}:=1,\qquad
    I^C_{i,j}:=\mathbf{1}\!\left[\left\lfloor C_{i,j-1}/2^\ell\right\rfloor\ne \left\lfloor C_{i,j}/2^\ell\right\rfloor\right] \quad (j\ge 2).
\]
Then it suffices to count, for each $(i,j)\in[n^a]\times[n^c]$, the indices $k\in[n^b]$ with $I^B_{k,j}=1$ or $I^C_{i,j}=1$, grouped by the residue class of $A_{i,k}+B_{k,j}-C_{i,j}\pmod{Q'}$, and then weight each residue class by the number of admissible shifts $s\in[-4\cdot 2^\ell,\,4\cdot 2^\ell]$ in that class.

To do this, we use the polynomial matrix multiplication technique. Work over the ring $\mathcal{R}:=\mathbb{F}[x]/(x^{Q'}-1)$, where $\mathbb{F}$ is a field of sufficiently large characteristic. Construct an $n^a\times n^b$ matrix $A'$ and two $n^b\times n^c$ matrices $B'$ and $B^{\mathrm{bdry}}$ by
\[
    A'_{i,k}(x):=x^{A_{i,k}\bmod{Q'}},\qquad
    B'_{k,j}(x):=x^{B_{k,j}\bmod{Q'}},\qquad
    B^{\mathrm{bdry}}_{k,j}(x):=I^B_{k,j}\cdot x^{B_{k,j}\bmod{Q'}}.
\]
Compute the matrix products $D^{\mathrm{all}}:=A'\cdot B'$ and $D^{\mathrm{bdry}}:=A'\cdot B^{\mathrm{bdry}}$ over $\mathcal{R}$ using fast rectangular matrix multiplication, and finally construct an $n^a\times n^c$ matrix $D$ defined by
\[
    D_{i,j}(x):=
    \begin{cases}
        D^{\mathrm{all}}_{i,j}(x), & \text{if } I^C_{i,j}=1,\\
        D^{\mathrm{bdry}}_{i,j}(x), & \text{if } I^C_{i,j}=0.
    \end{cases}
\]

For each $r \in \{0, \ldots, Q'-1\}$ and $(i,j)$, let $U_{i,j}(r)$ be the coefficient of $x^r$ in $D_{i,j}(x)$.
Then $U_{i,j}(r)$ is the number of indices $k$ such that $A_{i,k}+B_{k,j} \equiv r \pmod {Q'}$ and either $I^B_{k,j}=1$ or $I^C_{i,j}=1$. Next, define
\[
    W(r):=\left|\left\{s\in[-4\cdot 2^\ell,\,4\cdot 2^\ell]:\ s\equiv r \pmod{Q'}\right\}\right|.
\]
Note that the pair $((i,k, [j_0, j_1]),s)$ is counted in $Y_{\ell,t}(p)$ exactly when $A_{i,k}+B_{k,j_0}-C_{i,j_0}\equiv s \pmod{Q'}$. Equivalently, if $r\equiv A_{i,k}+B_{k,j_0}\pmod{Q'}$, then $s\equiv r-C_{i,j_0}\pmod{Q'}$. Thus, grouping by $r$, the contribution of a fixed $(i,j)$ to $Y_{\ell,t}(p)$ is
\[
    \sum_{r=0}^{Q'-1} U_{i,j}(r)\cdot W\!\bigl((r-C_{i,j})\bmod Q'\bigr),
\]
and therefore
\[
    Y_{\ell,t}(p) =\sum_{i\in[n^a]}\sum_{j\in[n^c]}\sum_{r=0}^{Q'-1} U_{i,j}(r)\cdot W\!\bigl((r-C_{i,j})\bmod Q'\bigr).
\]

For the running time, arithmetic in $\mathcal{R}$ reduces exponents modulo $Q'$, so the cost of the rectangular matrix product is $\hO(Q'\,n^{\omega(a,b,c)})$, and we compute $\cO(1)$ such products. Extracting $U_{i,j}(r)$ for all $(i,j)$ and $r$ costs $\cO(Q'\,n^{a+c})$, computing $W(r)$ for all $r$ costs $\cO(Q')$, and evaluating the triple sum costs $\cO(Q'\,n^{a+c})$. Hence, for fixed $p$ and $\ell$ we can compute $Y_{\ell,t}(p)$ in $\hO(Q'\,n^{\omega(a,b,c)})$ time.
\end{proof}

\begin{lemma}
\label{lem:finding-pt}
We can find a prime $p_t\in\cP$ such that $p_t$ satisfies \cref{eq:Xlt}; that is
\[
    X_{\ell,t}(p_t)= n^{a + b + \mu}\cdot \cO\!\left(\frac{\log^2 n}{R}\right)^t \qquad\text{for all } 0\le \ell \le \lm.
\]
Moreover, $p_t$ can be found in $\hO\bigl(R^2 Q_{t-1} n^{\omega(a,b,c)}\bigr)$ time.
\end{lemma}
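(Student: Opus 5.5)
The approach is to select $p_t$ from the computable quantities $Y_{\ell,t}(p)$ rather than the uncomputable $X_{\ell,t}(p)$, exploiting the identity $X_{\ell,t}(p)=Y_{\ell,t}(p)-Z_\ell$.

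First, we invoke \cref{lem:compute_Ylt} to compute $Y_{\ell,t}(p)$ for every $p\in\cP$ and every $0\le \ell\le \lm$ in $\hO(R^2Q_{t-1}n^{\omega(a,b,c)})$ time. We then set $Y^*_\ell:=\min_{p\in\cP}Y_{\ell,t}(p)$ and output a prime
\[
p_t\in\arg\min_{p\in\cP}\max_{0\le\ell\le\lm}\bigl(Y_{\ell,t}(p)-Y^*_\ell\bigr).
\]
This takes $\cO(|\cP|\cdot\lm)$ extra time, so the running-time claim follows directly.

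For correctness, the first step is to verify that $Y_{\ell,t}(p)-X_{\ell,t}(p)=Z_\ell$ for every $p$. A pair counted by $Y_{\ell,t}(p)$ either satisfies $A_{i,k}+B_{k,j_0}-C_{i,j_0}\ne s$, in which case it is counted by $X_{\ell,t}(p)$, or satisfies equality, in which case the congruence modulo $Q_{t-1}p$ holds automatically and the pair is counted by $Z_\ell$. Consequently, with $X^*_\ell:=\min_p X_{\ell,t}(p)$, we have $Y_{\ell,t}(p)-Y^*_\ell=X_{\ell,t}(p)-X^*_\ell$ for all $p$ and $\ell$. So $p_t$ equally minimizes $\max_\ell(X_{\ell,t}(p)-X^*_\ell)$.

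Next, let $p^*$ be the prime from \cref{lem:exists-pstar}. For every $\ell$,
\[
X_{\ell,t}(p^*)-X^*_\ell\le X_{\ell,t}(p^*)=X_{\ell,t-1}\cdot\cO(\log^2 n/R).
\]
By the minimality of $p_t$, for every $\ell$ we get
\[
X_{\ell,t}(p_t)\le X^*_\ell+\max_{\ell'}\bigl(X_{\ell',t}(p^*)-X^*_{\ell'}\bigr).
\]
We bound $X^*_\ell\le X_{\ell,t}(p^*)$ (or use the expectation bound of \cref{lem:exists-pstar}).

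This is the main obstacle. The maximum over $\ell'$ mixes levels, so we need a level-uniform bound on $X_{\ell',t-1}$. Here the invariant \cref{eq:Xlt} helps: it gives the same bound $n^{a+b+\mu}\cdot\cO(\log^2 n/R)^{t-1}$ for every level. Hence the maximum is at most $n^{a+b+\mu}\cdot\cO(\log^2n/R)^{t}$, and so is $X^*_\ell$. Summing the two terms gives
\[
X_{\ell,t}(p_t)\le n^{a+b+\mu}\cdot\cO(\log^2 n/R)^t .
\]

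One care point remains: the constant hidden in $\cO(\cdot)^t$ must not compound across steps. The factor $2$ from adding the two terms can be absorbed by defining the invariant with a fixed constant $2C_1$, where $C_1$ is the constant from \cref{lem:exists-pstar}, since that lemma's constant is absolute and independent of $t$. Equivalently, we compare against the uniform bound $n^{a+b+\mu}(C\log^2 n/R)^{t-1}$ at each step, rather than against the actual values $X_{\ell,t-1}$, so the constants do not accumulate beyond a single fixed $C$ per step.
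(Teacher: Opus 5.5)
Your proposal is correct and follows essentially the same route as the paper: compute all $Y_{\ell,t}(p)$ via \cref{lem:compute_Ylt}, choose $p_t$ minimizing $\max_{\ell}\bigl(Y_{\ell,t}(p)-Y^*_{\ell,t}\bigr)$, and combine the $p$-independence of $Z_\ell$ with the prime $p^*$ from \cref{lem:exists-pstar} and the level-uniform invariant \cref{eq:Xlt}. The only minor differences are these: you bound $X^*_\ell$ by $X_{\ell,t}(p^*)$ instead of by the expectation (either works), and you make explicit that the constant inside $\cO(\log^2 n/R)^t$ must be fixed, at least $2C_1$, so it does not compound, which the paper leaves implicit.
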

\begin{proof}
We run the algorithm in
\cref{lem:compute_Ylt} to compute $Y_{\ell,t}(p)$ for all primes $p\in\cP$ and all levels $0\le \ell\le \lm$.
For each $\ell$, define
$$
    Y^*_{\ell,t}:=\min_{p\in\cP} Y_{\ell,t}(p),
$$
and choose $p_t\in\cP$ that minimizes
$$
    \Phi(p):=\max_{0\le \ell \le \lm}\bigl\{Y_{\ell,t}(p)-Y^*_{\ell,t}\bigr\}.
$$

For correctness, we first note that $Y^*_{\ell,t}\ge Z_\ell$.
Since $X_{\ell,t}(p)=Y_{\ell,t}(p)-Z_\ell$ for all $p \in \cP$ and $X_{\ell,t}(p)\ge 0$, we have $Y_{\ell,t}(p)\ge Z_\ell$ for every $p\in\cP$, and thus $Y^*_{\ell,t}\ge Z_\ell$. 

Now, let $p^*\in\cP$ be a prime satisfying \cref{eq:p-star}, which exists by \cref{lem:exists-pstar}. Then, for every $\ell$,
$$
    Y_{\ell,t}(p^*)-Y^*_{\ell,t}
    \le Y_{\ell,t}(p^*)-Z_\ell
    = X_{\ell,t}(p^*)
    = X_{\ell,t-1}\cdot \cO\!\left(\frac{\log^2 n}{R}\right)
    = n^{a+b+\mu}\cdot \cO\!\left(\frac{\log^2 n}{R}\right)^t,
$$
where the last inequality follows from the inductive hypothesis \cref{eq:Xlt}. Therefore,
$$
    \Phi(p^*)= n^{a+b+\mu}\cdot \cO\!\left(\frac{\log^2 n}{R}\right)^t,
$$
and by the choice of $p_t$ minimizing $\Phi$, we obtain
$$
    Y_{\ell,t}(p_t)-Y^*_{\ell,t}
    \le \Phi(p_t)
    \le \Phi(p^*)
    \le n^{a+b+\mu}\cdot \cO\!\left(\frac{\log^2 n}{R}\right)^t \qquad \text{for all } 0 \le \ell \le \lm.
$$

Next, define
$$
    X^*_{\ell,t}:=\min_{p\in\cP} X_{\ell,t}(p).
$$
By minimality and \cref{lem:exists-pstar},
$$
    X^*_{\ell,t}\le \Ex_{p\sim\cP}\!\bigl[X_{\ell,t}(p)\bigr]
    = X_{\ell,t-1}\cdot \cO\!\left(\frac{\log n}{R}\right) = X_{\ell,t-1}\cdot \cO\!\left(\frac{\log^2 n}{R}\right)
$$
Using the inductive hypothesis \cref{eq:Xlt}, we obtain
$$
    X^*_{\ell,t}
    = n^{a+b+\mu}\cdot \cO\!\left(\frac{\log^2 n}{R}\right)^{t-1}\cdot \cO\!\left(\frac{\log^2 n}{R}\right)
    = n^{a+b+\mu}\cdot \cO\!\left(\frac{\log^2 n}{R}\right)^t,
$$
Also, since $X_{\ell,t}(p)=Y_{\ell,t}(p)-Z_\ell$ for all $p\in\cP$ and $Z_\ell$ is independent of $p$, we have
$$
    Y^*_{\ell,t}-Z_\ell
    =\min_{p\in\cP}\bigl\{Y_{\ell,t}(p)-Z_\ell\bigr\}
    =\min_{p\in\cP} X_{\ell,t}(p)
    =X^*_{\ell,t}.
$$
Combining the above bounds, for every $0\le \ell \le \lm$ we get
$$
    X_{\ell,t}(p_t) = Y_{\ell,t}(p_t)-Z_\ell = \bigl(Y^*_{\ell,t}-Z_\ell\bigr) + \bigl(Y_{\ell,t}(p_t)-Y^*_{\ell,t}\bigr) \le X^*_{\ell,t} + \Phi(p_t) = n^{a+b+\mu}\cdot \cO\!\left(\frac{\log^2 n}{R}\right)^t.
$$

For the running time, computing $Y_{\ell,t}(p)$ for all $p\in\cP$ and all $0\le \ell\le \lm$ via \cref{lem:compute_Ylt} dominates, giving $\hO\bigl(R^2 Q_{t-1} n^{\omega(a,b,c)}\bigr)$ time. The remaining steps (taking minima and selecting $p_t$) take at most $\cO(|\cP|\lm)=\tcO(R\log n)$ additional time, which is lower order.
\end{proof}

It remains to set the parameters so that the above procedure yields a good modulus. To this end, we prove the following fact.

\begin{fact}
\label{fact:subpoly}
Let $R=2^{\Theta\!\left(\sqrt{\log n/\log\log n}\right)}$ and $T = \cO(\log n/\log R)$. Then $(\log n)^{2T} = n^{o(1)}$.
\end{fact}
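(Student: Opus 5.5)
We prove the bound by taking logarithms and checking that the exponent is $o(\log n)$. Since $(\log n)^{2T} = 2^{2T\log\log n}$, it suffices to show $T\log\log n = o(\log n)$. Then $(\log n)^{2T} = 2^{o(\log n)} = n^{o(1)}$.

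First, substitute the choice of $R$. We have $\log R = \Theta\bigl(\sqrt{\log n/\log\log n}\bigr)$. Therefore
\[
    T = \cO\!\left(\frac{\log n}{\log R}\right) = \cO\!\left(\frac{\log n}{\sqrt{\log n/\log\log n}}\right) = \cO\!\left(\sqrt{\log n\cdot \log\log n}\right).
\]

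Next, multiply by $\log\log n$:
\[
    T\log\log n = \cO\!\left(\sqrt{\log n}\,(\log\log n)^{3/2}\right) = \cO\!\left(\log n\cdot\sqrt{\frac{(\log\log n)^3}{\log n}}\right).
\]
The factor $\sqrt{(\log\log n)^3/\log n}$ tends to $0$ as $n\to\infty$, so $T\log\log n = o(\log n)$, as required.

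Two side remarks complete the picture. The constants hidden in the $\Theta$ and $\cO$ only change the exponent by a constant factor, which the $o(\log n)$ conclusion absorbs. Also, $R = 2^{o(\log n)} = n^{o(1)}$ and $R \ge 4$ for large $n$, so this choice of $R$ is consistent with the parameter constraints in \cref{sec:find-Q}. There is no real obstacle here; the only care needed is in the exponent arithmetic.
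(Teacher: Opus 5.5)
Your proposal is correct and follows essentially the same route as the paper: write $(\log n)^{2T}$ as an exponential of $2T\log\log n$, use $\log R=\Theta(\sqrt{\log n/\log\log n})$, and conclude the exponent is $o(\log n)$. The only difference is presentational: the paper bounds $\log\log n/\log R=o(1)$ directly, while you compute $T=\cO(\sqrt{\log n\cdot\log\log n})$ explicitly before multiplying by $\log\log n$.
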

\begin{proof}
We write $(\log n)^{2T} = \exp\!\bigl(2T\log\log n\bigr)$. Thus, it suffices to show that $\exp(2T\log\log n)=n^{o(1)}$. Indeed,
\[
    \exp\!\bigl(2T\log\log n\bigr) = \exp\!\left(2\cdot \cO\!\left(\frac{\log n}{\log R}\right)\cdot \log\log n\right) = \exp\!\left(\cO\!\left(\log n\cdot \frac{\log\log n}{\log R}\right)\right).
\]
By our choice of $R$, we have $\log R = \Theta(\sqrt{\log n/ \log \log n})$, which implies $\log \log n/ \log R = o(1)$. Therefore,
$$
    \exp\!\bigl(2T\log\log n\bigr)=\exp\!\bigl(o(\log n)\bigr)=n^{o(1)},
$$
as claimed.
\end{proof}

\begin{lemma}[Computing a good modulus]
\label{lem:compute-Q}
We can compute a good modulus $Q$ in time $\hO(Mn^{\omega(a,b,c)})$.
\end{lemma}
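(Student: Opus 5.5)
We fix $R := 2^{\Theta(\sqrt{\log n/\log\log n})}$, which is at least $4$ for large $n$ and satisfies $R = n^{o(1)}$. Starting from $Q_0 := 1$, we apply \cref{lem:finding-pt} repeatedly to obtain primes $p_1,p_2,\ldots \in \cP$. We stop at the first index $T$ with $Q_T = \prod_{t\le T} p_t \ge M$. By induction on $t$, \cref{lem:finding-pt} (built on \cref{lem:exists-pstar}) maintains the invariant \cref{eq:Xlt} at every step.

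\textbf{Bounding $T$ and $Q_T$.} Each prime is at least $R/2 \ge 2$ (indeed $\ge \sqrt R$ once $R\ge 4$), so $T \le \lceil \log M/\log(R/2)\rceil = \cO(\log n/\log R)$, using $M = \cO(n^{\mu})$ with $\mu$ a constant. By minimality of $T$ we have $Q_{T-1} < M$. Hence $M \le Q_T \le M R = M n^{o(1)}$, and more generally $Q_{t-1} < M$ for every $t \le T$.

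\textbf{Goodness of $Q_T$.} By \cref{lem:bound-active-seg} and \cref{eq:Xlt}, for every $\ell$,
\[
  |S_\ell(Q_T)| \le X_{\ell,T} \le n^{a+b+\mu}\,\frac{(C\log^2 n)^T}{R^T}.
\]
Since each $p_t \le R$, we have $R^T \ge Q_T$. Also $(C\log^2 n)^T = C^T(\log n)^{2T}$. Here $C^T = 2^{\cO(\log n/\log R)} = n^{o(1)}$ because $\log R \to \infty$, and $(\log n)^{2T} = n^{o(1)}$ by \cref{fact:subpoly}. Therefore $|S_\ell(Q_T)| \le \hO(n^{a+b+\mu}/Q_T)$ for all $\ell$, and $Q_T$ is a good modulus in the sense of \cref{def:good-modulus}.

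\textbf{Running time.} Step $t$ costs $\hO(R^2 Q_{t-1} n^{\omega(a,b,c)}) \le \hO(R^2 M n^{\omega(a,b,c)})$ by \cref{lem:finding-pt}. We also enumerate the primes in $\cP$ by a sieve in $\tcO(R)$ time. Summed over $T = \cO(\log n)$ steps, and using $R^2 = n^{o(1)}$, the total is $\hO(M n^{\omega(a,b,c)})$.

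The main point needing care is that the $\cO(\cdot)$ constant hidden in \cref{eq:Xlt} compounds as $C^T$ over the $T$ steps. This is why $T$ must be $o(\log n)$-ish relative to $\log R$, which the choice of $R$ ensures. Everything else follows directly from the earlier lemmas.
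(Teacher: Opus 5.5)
Your proposal is correct and matches the paper's proof. You iterate \cref{lem:finding-pt} until the product first reaches $M$, use $Q_{T-1}<M$ both to get $M\le Q_T\le MR$ and to bound each step's cost by $\hO(R^2Mn^{\omega(a,b,c)})$, and combine $R^T\ge Q_T$ with \cref{fact:subpoly} to show $Q_T$ is good; your explicit treatment of the compounded constant $C^T=2^{\cO(\log n/\log R)}=n^{o(1)}$ is a detail the paper absorbs silently, and you handle it correctly.
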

\begin{proof}
By applying \cref{lem:finding-pt} inductively for $T$ steps, we obtain primes $p_1,\ldots,p_T\in\cP$ and a modulus $Q_T:=\prod_{t=1}^T p_t$ such that \cref{eq:Xlt} holds. Since $p_t\in[R/2,R]$ for all $1 \le t \le T$, we have $(R/2)^T\le Q_T\le R^T$. In particular, to ensure $Q_T\ge M$ it suffices to take 
\[
    T=\Theta(\log M / \log R)=\cO(\log n / \log R).
\]
Moreover, if we take $T$ to be the \emph{first} index such that $Q_T\ge M$, then $Q_{T-1}<M$ and hence
$Q_T=Q_{T-1}\cdot p_T < M\cdot R$, so $Q_T$ overshoots $M$ by at most a factor of $R$. We set
\[
    R=2^{\Theta\!\left(\sqrt{\log n/\log\log n}\right)}.
\]
Then $R=n^{o(1)}$, and therefore $M\le Q_T\le M\cdot R = M\cdot n^{o(1)}$. Finally, by \cref{fact:subpoly}, we have $(\log^2 n)^T=n^{o(1)}$, and so 
\[
    (\log^2 n/R)^T = (\log n)^{2T}/R^T = n^{o(1)}/R^T = n^{o(1)}/Q_T.
\]
Hence $Q_T$ satisfies $M\le Q_T\le Mn^{o(1)}$, and
\[
    X_{\ell,T}
    = n^{a+b+\mu}\cdot \cO\!\left(\frac{\log^2 n}{R}\right)^T
    = \frac{n^{a+b+\mu+o(1)}}{Q_T}
    \qquad\text{for all } 0\le \ell\le \lm.
\]
Therefore, $Q_T$ is a good modulus.

For the running time, we apply \cref{lem:finding-pt} for $T$ steps. At step $t$, the cost of computing $p_t$ is $\hO\!\left(R^2 Q_{t-1} n^{\omega(a,b,c)}\right)$. Since $Q_{t-1}\le Q_{T-1}<M$, the total time is
\[
    \hO\!\left(\sum_{t=1}^T R^2 Q_{t-1} n^{\omega(a,b,c)}\right)
    = \hO\!\left(T R^2 M\, n^{\omega(a,b,c)}\right)
    = \hO\!\left(M n^{\omega(a,b,c)}\right),
\]
where the last bound uses $TR^2=n^{o(1)}$ for our choice of $R$.
\end{proof}

\subsection{Combining the Results}
\begin{theorem}
\label{thm:exact-tri-mod-M}
There is a deterministic $\hO(Mn^{\omega(a,b,c)}+n^{a+b+\mu}/M)$-time
algorithm for \cref{prob:exact-tri-mod-M}.
\end{theorem}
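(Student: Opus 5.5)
The proof assembles the four-step outline of the algorithm, using the subroutines already established, and then adds up their running times.

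First, invoke \cref{lem:compute-Q} to deterministically find a good modulus $Q$, with $M \le Q \le Mn^{o(1)}$ and $|S_\ell(Q)| \le \hO(n^{a+b+\mu}/Q)$ for all $0 \le \ell \le \lm$. This takes $\hO(Mn^{\omega(a,b,c)})$ time. Second, compute all $s_{i,j}$ via \cref{lem:compute-s} in $\hO(Qn^{\omega(a,b,c)}) = \hO(Mn^{\omega(a,b,c)})$ time, since $Q \le Mn^{o(1)}$. Third, compute all $s'_{i,j}$ via \cref{lem:compute-sprime}. That lemma relies on \cref{lem:compute-S} and hence on the goodness of $Q$, and it takes $\hO(n^{a+b+\mu}/Q + n^{a+c}) \le \hO(n^{a+b+\mu}/M + n^{a+c})$ time, since $Q \ge M$. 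Finally, output \textsc{Yes} for $(i,j)$ exactly when $s_{i,j} > s'_{i,j}$. Correctness is precisely \cref{lem:s-vs-sprime}. Every step is deterministic, because the only choice made is the prime selection in \cref{lem:finding-pt}, which is a deterministic minimization of $\Phi$.

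The one point needing care is the additive $n^{a+c}$ term from the prefix sums, together with the input-reading costs $n^{a+b}$, $n^{b+c}$ and the output size, none of which appear explicitly in the claimed bound. I would absorb them into $Mn^{\omega(a,b,c)}$ using $\omega(a,b,c) \ge \max\{a+b,\, b+c,\, a+c\}$, which holds because any matrix multiplication algorithm must read its inputs and write its output. Since $M \ge 1$, this gives $n^{a+c} \le Mn^{\omega(a,b,c)}$, and likewise for the input sizes. I expect this accounting to be the only subtle step, and it is routine. One more detail to check is that the binary-search enumeration of level-$\lm$ segments in \cref{lem:compute-S} is already charged at $\tcO(n^{a+b+\mu}/M)$, so nothing else enters the bound.

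Summing the three phases yields $\hO(Mn^{\omega(a,b,c)} + n^{a+b+\mu}/M)$, as claimed.
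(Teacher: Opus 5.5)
Your proposal is correct and follows the paper's proof exactly: find a good modulus via \cref{lem:compute-Q}, compute all $s_{i,j}$ and $s'_{i,j}$ via \cref{lem:compute-s,lem:compute-sprime} (using $M \le Q \le Mn^{o(1)}$ to convert the bounds), and conclude correctness from \cref{lem:s-vs-sprime}. The paper's proof drops the $n^{a+c}$ prefix-sum term without comment; your step absorbing it into $Mn^{\omega(a,b,c)}$ via $\omega(a,b,c)\ge a+c$ is valid and makes that step explicit.
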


\begin{proof}
We first compute a good modulus $Q$ via \cref{lem:compute-Q} in time
$\hO(Mn^{\omega(a,b,c)})$. Next, we compute $s_{i,j}$ for all $(i,j)\in[n^a]\times[n^c]$ via \cref{lem:compute-s} in time $\hO(Qn^{\omega(a,b,c)}) = \hO(Mn^{\omega(a,b,c)})$. Finally, we compute $s'_{i,j}$ for all $(i,j)\in[n^a]\times[n^c]$ via \cref{lem:compute-sprime} in time $\hO\!\left(n^{a+b+\mu}/Q+n^{a+c}\right) = \hO\!\left(n^{a+b+\mu}/M+n^{a+c}\right)$. For each $(i,j)\in[n^a]\times[n^c]$, we output \textsc{Yes} if and only if $s_{i,j} > s'_{i,j}$; correctness follows from \cref{lem:s-vs-sprime}. The total running time is $\hO(Mn^{\omega(a,b,c)} + n^{a+b+\mu}/M)$ as claimed.
\end{proof}

\begin{proof}[Proof of \cref{thm:main}]
Our claimed running time is
\[
    \hO\left(n^{a+c}+n^{b+c}+n^{(a+b+\mu+\omega(a,b,c))/2}\right).
\]
Let $d := (a+b+\mu-\omega(a,b,c))/2$. Since $\mu\ge 0$, we have $d\le \mu$: indeed, $d>\mu$ would imply $\mu+\omega(a,b,c)<a+b$, contradicting the trivial lower bound $\omega(a,b,c)\ge a+b$. Furthermore, we may assume that $d\ge 0$: if $d<0$, then $a+b+\mu<\omega(a,b,c)$. In this case, Lemma~17 of~\cite{DBLP:conf/icalp/Gu0WX21} gives an algorithm running in time $\hO\left(n^{a+c}+n^{b+c}+n^{a+b+\mu}\right)$, which is dominated by our claimed bound. Hence, we may assume $0\le d\le \mu$.

Now, by \cref{lem:reduction-to-verification,lem:reduction-to-small-modM}, it suffices to solve \cref{prob:exact-tri-mod-M}. By \cref{thm:exact-tri-mod-M}, there is a deterministic algorithm for \cref{prob:exact-tri-mod-M} that runs in time $\hO(Mn^{\omega(a,b,c)}+n^{a+b+\mu}/M)$. Take $d$ as the exponent for $M$; that is, take $M=\Theta\left(n^{(a+b+\mu-\omega(a,b,c))/2}\right)$. Then the two terms balance, yielding $\hO\left(Mn^{\omega(a,b,c)}+n^{a+b+\mu}/M\right)=\hO\left(n^{(a+b+\mu+\omega(a,b,c))/2}\right)$, which is dominated by our claimed bound.
\end{proof}

%% file: column.tex
\section{Extension to Column-Monotone Min-Plus Product}
\label{sec:column}

In this section, we present a deterministic algorithm for column-monotone Min-Plus product, whose definition we recall below.

\begin{prob}
\label{prob:column-monotone-min-plus}
    Let $A$ be an $n^a \times n^b$ integer matrix. Let $B$ be an $n^b \times n^c$ column-monotone integer matrix with entries in $[n^\mu]$. Given $(A,B)$ as input, compute $A \star B$.
\end{prob}

We start by reducing column-monotone Min-Plus product to the following problem, which is a ``rotated'' version of \cref{prob:exact-tri} (the differences have been highlighted).

\begin{customprob}{\ref*{prob:exact-tri}'}
\label{prob:exact-tri-prime}
    Let $A$ be an $n^a \times n^b$ integer matrix with nonnegative entries bounded by $\cO(n^{\mu})$. Let $B$ be an $n^b \times n^c$ row-monotone integer matrix with nonnegative entries bounded by $\cO(n^{\mu})$. Let $C$ be an $n^a \times n^c$ row-monotone integer matrix with nonnegative entries bounded by $\cO(n^{\mu})$. Given $(A,B,C)$ as input, \textbf{for every $\boldsymbol{(i,k) \in [n^{a}] \times [n^{b}]}$, decide whether there exists $\boldsymbol{j \in [n^{c}]}$} such that $A_{i,k} + B_{k,j} = C_{i,j}$.
\end{customprob}

We will show that the running time for \cref{prob:exact-tri-prime} is the same as the running time for \cref{prob:exact-tri}.

\begin{theorem}
\label{thm:alg-prob:exact-tri-prime}
    There is a deterministic algorithm for \cref{prob:exact-tri-prime} that runs in time $\hO(n^{a+c} + n^{b+c} + n^{(a+b+\mu+\omega(a,b,c))/2})$.
\end{theorem}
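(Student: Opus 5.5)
We will mirror the pipeline of \cref{sec:product}, replacing the per-$(i,j)$ counts $s_{i,j},s'_{i,j}$ by per-$(i,k)$ counts.

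\textbf{Reduction.} First we reduce to the analogue of \cref{prob:exact-tri-mod-M} with the rotated quantifier. The residue-shifting construction of \cref{lem:reduction-to-small-modM} is applied unchanged. It preserves row-monotonicity of $B$ and $C$ and respects the modulus bound. Its correctness claim (\cref{clm:reduction-correctness}) is pointwise in the triple $(i,k,j)$: $A_{i,k}+B_{k,j}=C_{i,j}$ holds iff $A^{(s)}_{i,k}+B^{(t)}_{k,j}=C^{(s,t)}_{i,j}$ for some $s,t$. Hence we can OR the answers over $(i,k)$ instead of over $(i,j)$. We choose $M=\Theta(n^{d})$ with $d=(a+b+\mu-\omega(a,b,c))/2$. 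The degenerate case $d<0$ is handled by brute force over segments. This works because the number of level-$0$ segments is $\cO(n^{a+b+\mu})$. We enumerate them via binary search per pair $(i,k)$, then check each segment directly, since $B$ and $C$ are constant on it.

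\textbf{Main algorithm.} We use the same good modulus $Q$ as in \cref{lem:compute-Q}. Goodness depends only on the segment sets $S_\ell(Q)$, which are symmetric in the roles of $j$ and $k$, so $Q$ can be found in $\hO(Mn^{\omega(a,b,c)})$ time. We then define:
\begin{itemize}
\item $t_{i,k}$: the number of $j$ with $A_{i,k}+B_{k,j}\equiv C_{i,j}\pmod Q$;
\item $t'_{i,k}$: the same count restricted to triples with $\Ah_{i,k}+\Bh_{k,j}\neq\Ch_{i,j}$.
\end{itemize}
By the argument of \cref{lem:s-vs-sprime}, a witness $j$ exists iff $t_{i,k}>t'_{i,k}$.

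To compute $t_{i,k}$, we form the product of the $n^a\times n^c$ matrix with entries $x^{-C_{i,j}\bmod Q}$ and the $n^c\times n^b$ matrix with entries $x^{B_{k,j}\bmod Q}$, over $\mathbb{F}[x]/(x^Q-1)$. We then read off the coefficient of $x^{-A_{i,k}\bmod Q}$. This costs $\hO(Qn^{\omega(a,c,b)})=\hO(Qn^{\omega(a,b,c)})$ by $\omega(a,b,c)=\omega(a,c,b)$, plus $\cO(n^{a+b}Q)$ for extraction.

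The extraction term needs care. The total budget allows $n^{a+c}+n^{b+c}$ plus the balanced term. Since $\omega(a,b,c)\ge a+b$, the extraction cost is $n^{a+b}Q\le Qn^{\omega(a,b,c)}$, which is fine. We read only one coefficient per entry, so the cost is in fact just $n^{a+b}$.

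To compute $t'_{i,k}$, we build $S_0(Q)$ exactly as in \cref{lem:compute-S}, in $\hO(n^{a+b+\mu}/Q)$ time. For each segment $(i,k,[j_0,j_1])\in S_0$ that satisfies the congruence, we add $j_1-j_0+1$ to $t'_{i,k}$. This replaces the difference-array step, and by the argument of \cref{lem:compute-sprime} it counts exactly the intended triples.

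The total running time is $\hO(Mn^{\omega(a,b,c)}+n^{a+b+\mu}/M+n^{a+b}+n^{a+c}+n^{b+c})$. After balancing, this matches the stated bound, using $n^{a+b}\le n^{\omega(a,b,c)}\le$ the balanced term.

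\textbf{Main obstacle.} The hardest step is the $d<0$ regime, together with confirming that \cref{lem:compute-Q} really transfers, including the input-reading terms. The statement's $n^{a+c}+n^{b+c}$ terms must absorb preprocessing such as computing the segment boundaries. I would check that the binary-search enumeration of level-$\lm$ segments costs $\tcO(n^{a+b+\mu}/2^{\lm})$ without touching all $n^{c}$ columns per pair $(i,k)$. The remaining work is to verify the brute-force bound $n^{a+b+\mu}\le n^{(a+b+\mu+\omega)/2}$ when $d<0$, which is immediate.
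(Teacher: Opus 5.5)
Your proposal is correct and follows essentially the same route as the paper. It uses the same residue-shifting reduction with pointwise correctness, the same good modulus from \cref{lem:compute-Q}, computes $r_{i,k}$ by a rotated polynomial matrix product in $\hO(Qn^{\omega(a,c,b)})$ time, computes $r'_{i,k}$ by adding $j_1-j_0+1$ over the congruent segments of $S_0(Q)$, and uses a direct $\tcO(n^{a+b+\mu})$ scan when $d<0$. In that last case the paper uses two pointers over precomputed constant intervals instead of binary search, which is an immaterial difference.

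One wording fix: \cref{lem:compute-Q} transfers not because the segments are \emph{symmetric} in $j$ and $k$ (they are not). It transfers because goodness depends only on the instance $(A,B,C)$ with row-monotone $B$ and $C$, and not on which quantifier is being decided.
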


\begin{proof}[Proof of \cref{thm:main-column} assuming \cref{thm:alg-prob:exact-tri-prime}]
    Let $(A, B)$ be an instance of \cref{prob:column-monotone-min-plus}. Similar to the proof of \cref{lem:reduction-to-verification}, we may assume that all entries of $A$ are nonnegative integers bounded by $\cO(n^\mu)$. By an observation in~\cite{CDXZ22}, we may also assume that each row of $A$ is monotonically non-increasing: if there exist $i,k_1,k_2$ such that $k_1<k_2$ and $A_{i,k_1}<A_{i,k_2}$, then $A_{i,k_1}+B_{k_1,j}<A_{i,k_2}+B_{k_2,j}$ for every $j$ by the monotonicity of $B$. Hence, replacing $A_{i,k_2}$ by $A_{i,k_1}$ does not change $A\star B$.
    
    Now we follow the same recursive halving approach as in the proof of \cref{lem:reduction-to-verification}. Let $A'_{i,j}:=\lfloor A_{i,j}/2\rfloor$ and $B'_{i,j}:=\lfloor B_{i,j}/2\rfloor$. Flooring preserves monotonicity, so the columns of $B'$ are monotone and the rows of $A'$ are monotonically non-increasing. We compute $C':=A'\star B'$ recursively, with the trivial base case when both matrices are zero. Define three $n^a\times n^c$ candidate matrices by $C^{(s)}_{i,j}:=2C'_{i,j}+s$ for $s\in\{0,1,2\}$.
    
    As in the proof of \cref{lem:reduction-to-verification}, it suffices to check, for each $s\in\{0,1,2\}$ and $(i,j)\in[n^a]\times[n^c]$, whether there exists $k\in[n^b]$ such that $A_{i,k}+B_{k,j}=C^{(s)}_{i,j}$. Let $W=\cO(n^\mu)$ be the maximum entry of $A$, $B$, and $C^{(s)}$. This condition is equivalent to $(W-C^{(s)}_{i,j})+B_{k,j}=W-A_{i,k}$. Thus, it can be solved by \cref{prob:exact-tri-prime} on the input $(W^{n^a\times n^c}-C^{(s)}, B^T, W^{n^a\times n^b}-A)$, where $W^{m\times p}$ denotes the $m\times p$ all-$W$ matrix. All entries are bounded by $\cO(n^\mu)$, and both $B^T$ and $W^{n^a\times n^b}-A$ are row-monotone, so the input satisfies the requirements of \cref{prob:exact-tri-prime}. The dimensions are rotated: the three matrices have dimensions $n^a\times n^c$, $n^c\times n^b$, and $n^a\times n^b$. Hence, by \cref{thm:alg-prob:exact-tri-prime} and the symmetry $\omega(a,c,b)=\omega(a,b,c)$, each verifier call takes time $\hO\!\left(n^{a+b}+n^{b+c}+n^{(a+c+\mu+\omega(a,b,c))/2}\right)$. Since there are only constantly many verifier calls per recursion level and the recursion depth is $\cO(\log n)$, the final asymptotic running time remains the same.
\end{proof}

\subsection{Algorithm for \texorpdfstring{\cref{prob:exact-tri-prime}}{{Problem \ref*{prob:exact-tri-prime}}}}

Using the same proof as for \cref{lem:reduction-to-small-modM}, we can reduce \cref{prob:exact-tri-prime} to the following problem. 

\begin{customprob}{\ref*{prob:exact-tri-mod-M}'}
\label{prob:exact-tri-mod-M-prime}
    Let $(A,B,C)$ be an instance of \cref{prob:exact-tri-prime}, and let $M$ be a positive integer that is a multiple of $100$ and satisfies $M = \cO(n^{\mu})$. Additionally, assume that for every $i \in [n^a]$, $k \in [n^b]$, and $j \in [n^c]$, we have
    \[
        A_{i,k} \bmod M,\; B_{k,j} \bmod M,\; C_{i,j} \bmod M \le \frac{M}{10}.
    \]
    Given $(A,B,C)$ as input, for every $(i,k) \in [n^{a}] \times [n^{b}]$, decide whether there exists $j \in [n^{c}]$ such that $A_{i,k} + B_{k,j} = C_{i,j}$.
\end{customprob}

As before, we define
\[
    \Ah_{i,k}=\left\lfloor \frac{A_{i,k}}{M}\right\rfloor,
    \qquad
    \Bh_{k,j}=\left\lfloor \frac{B_{k,j}}{M}\right\rfloor,
    \qquad
    \Ch_{i,j}=\left\lfloor \frac{C_{i,j}}{M}\right\rfloor
\]
and 
\[
    \Al_{i,k}=A_{i,k}\bmod M,
    \qquad
    \Bl_{k,j}=B_{k,j}\bmod M,
    \qquad
    \Cl_{i,j}=C_{i,j}\bmod M.
\]

Our algorithm proceeds as follows:
\begin{enumerate}
    \item Find a ``good'' modulus $Q$ with respect to \cref{def:good-modulus}. By \cref{lem:compute-Q}, it takes $\hO(M n^{\omega(a, b, c)})$ time. 
    \item For every $(i,k)\in[n^a]\times[n^b]$, compute $r_{i,k}$, the number of $j\in[n^c]$ such that
    $A_{i,k}+B_{k,j}\equiv C_{i,j}\pmod{Q}$. Using the same method as \cref{lem:compute-s}, it takes $\hO(Q n^{\omega(a, c, b)}) = \hO(M n^{\omega(a, b, c)})$ time. 
    \item For every $(i,k)\in[n^a]\times[n^b]$, compute $r'_{i,k}$, the number of $j\in[n^c]$ such that
    $A_{i,k}+B_{k,j}\equiv C_{i,j}\pmod{Q}$ and $\Ah_{i,k}+\Bh_{k,j}\ne \Ch_{i,j}$. To do so, first compute $S_0(Q)$ using \cref{lem:compute-S} in $\hO(n^{a+b+\mu}/Q) = \hO(n^{a+b+\mu} / M)$ time. Fix any triple $(i,k,j)$ that is counted in $r'_{i,k}$, i.e., satisfying $A_{i,k}+B_{k,j}\equiv C_{i,j}\pmod Q$ and $\Ah_{i,k}+\Bh_{k,j}\ne \Ch_{i,j}$. Let $(i,k,[j_0,j_1])$ be the unique level-$0$ segment such that $j \in [j_0, j_1]$. Since $(i,k,[j_0,j_1])$ is a level-$0$ segment, the values $B_{k,j}$ and $C_{i,j}$ are constant over all $j\in[j_0,j_1]$. We have $j \in [j_0, j_1]$, so 
    \[
        \Ah_{i,k}+\Bh_{k,j_0} = \Ah_{i,k}+\Bh_{k,j} \ne \Ch_{i,j} = \Ch_{i,j_0}
    \]
    and 
    \[
        A_{i,k}+B_{k,j_0} \equiv A_{i,k}+B_{k,j} \equiv C_{i,j}  \equiv C_{i,j_0}\pmod Q.
    \]
    Hence, $(i,k,[j_0,j_1])$ is an active level-$0$ segment in $S_0(Q)$. Therefore, we can enumerate all segments $(i, k, [j_0, j_1])$ in $S_0(Q)$, check whether $A_{i, k} + B_{k, j_0} \equiv C_{i, j_0} \pmod{Q}$, and if so, increment $r'_{i, k}$ by $j_1 - j_0 + 1$. Since $Q$ is a good modulus, we have $|S_0(Q)| =\hO(n^{a+b+\mu} / M)$, so this step takes $\hO(n^{a+b+\mu} / M)$ time.
    \item For each $(i,k)\in[n^a]\times[n^b]$, output \textsc{Yes} if and only if $r_{i,k}>r'_{i,k}$.
\end{enumerate}

The correctness of the algorithm follows from the following lemma. 

\begin{lemma}
\label{lem:r-vs-rprime}
    For every $(i,k)\in[n^a]\times[n^b]$, there exists $j\in[n^c]$ such that $A_{i,k}+B_{k,j}=C_{i,j}$ if and only if $r_{i,k}>r'_{i,k}$.
\end{lemma}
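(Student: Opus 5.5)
This follows the proof of \cref{lem:s-vs-sprime} almost verbatim, with the roles of $j$ and $k$ exchanged. The key observation is that the only facts used there, \cref{fact:big-difference} and \cref{fact:small-difference}, are pointwise statements about a single triple $(i,k,j)$. They rely only on the residue promise, which \cref{prob:exact-tri-mod-M-prime} shares with \cref{prob:exact-tri-mod-M}, so they apply unchanged. The goal is to show that for each fixed $(i,k,j)$, $A_{i,k}+B_{k,j}=C_{i,j}$ holds if and only if both $A_{i,k}+B_{k,j}\equiv C_{i,j}\pmod Q$ and $\Ah_{i,k}+\Bh_{k,j}=\Ch_{i,j}$ hold. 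Given this, $r_{i,k}-r'_{i,k}$ is exactly the number of $j$ satisfying the equality, since $r_{i,k}$ counts the congruent $j$ and $r'_{i,k}$ counts those among them whose high parts disagree.

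\emph{Forward direction.} Suppose some $j$ satisfies $A_{i,k}+B_{k,j}=C_{i,j}$. Then the congruence modulo $Q$ holds trivially, so $j$ is counted in $r_{i,k}$. Since $\Al_{i,k}+\Bl_{k,j}\le 2M/10<M$, no carry occurs, and hence $\Ch_{i,j}=\lfloor (A_{i,k}+B_{k,j})/M\rfloor=\Ah_{i,k}+\Bh_{k,j}$. So $j$ is not counted in $r'_{i,k}$, which gives $r_{i,k}>r'_{i,k}$.

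\emph{Converse.} Suppose $r_{i,k}>r'_{i,k}$. Then some $j$ has the congruence modulo $Q$ together with $\Ah_{i,k}+\Bh_{k,j}=\Ch_{i,j}$. By \cref{fact:small-difference}, $|A_{i,k}+B_{k,j}-C_{i,j}|\le 3M/10<M\le Q$, where $Q\ge M$ because $Q$ is a good modulus. A multiple of $Q$ with absolute value less than $Q$ must be zero, so $A_{i,k}+B_{k,j}=C_{i,j}$.

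The only point needing care is to confirm that the quantity $r'_{i,k}$ produced by step~3 equals its definition, since the lemma is stated about the defined quantities. I would check this explicitly. Step~3 already argues that every counted triple lies in an active level-$0$ segment. Conversely, every active level-$0$ segment that passes the congruence test at $j_0$ contributes exactly its $j_1-j_0+1$ columns, and these are correctly counted because $B_{k,\cdot}$ and $C_{i,\cdot}$ are constant on the segment. Level-$0$ segments for a fixed $(i,k)$ are disjoint, so nothing is double counted. With this verified, there is no real obstacle and the argument is routine.
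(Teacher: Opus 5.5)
Your proof is correct and matches the paper's argument. In the forward direction, the no-carry observation $\Al_{i,k}+\Bl_{k,j}\le 2M/10<M$ shows that a witness $j$ is counted in $r_{i,k}$ but not in $r'_{i,k}$. In the converse, \cref{fact:small-difference} together with $3M/10<Q$ forces the congruent difference to be zero. Your extra check that step~3 computes $r'_{i,k}$ exactly as defined (level-$0$ segments are disjoint, and $B_{k,\cdot}$ and $C_{i,\cdot}$ are constant on each) is sound; the paper handles this in the algorithm description rather than in the lemma itself.
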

\begin{proof}
    Fix $(i,k)$ and suppose there exists $j\in[n^c]$ such that $A_{i,k}+B_{k,j}=C_{i,j}$. Then $A_{i,k}+B_{k,j}\equiv C_{i,j}\pmod{Q}$, so this $j$ contributes one to $r_{i,k}$. Moreover, under the promise of \cref{prob:exact-tri-mod-M-prime}, we have $\Al_{i,k}+\Bl_{k,j}\le 2M/10$, so adding the low parts creates no carry across multiples of $M$. Hence, $\Ch_{i,j} =\lfloor C_{i,j}/M \rfloor =\lfloor (A_{i,k}+B_{k,j})/M\rfloor =\Ah_{i,k}+\Bh_{k,j}$. Thus, this $j$ does not contribute to $r'_{i,k}$, and therefore $r_{i,k}>r'_{i,k}$.
    
    Conversely, suppose $r_{i,k}>r'_{i,k}$.
    Then there exists $j\in[n^c]$ such that $A_{i,k}+B_{k,j}\equiv C_{i,j}\pmod{Q}$ and $ \Ah_{i,k}+\Bh_{k,j}=\Ch_{i,j}$. By \cref{fact:small-difference}, we have $|A_{i,k}+B_{k,j}-C_{i,j}|\le 3M/10 < Q$. Since $A_{i,k}+B_{k,j}-C_{i,j}$ is a multiple of $Q$ whose magnitude is strictly less than $Q$, it must be zero. Hence, $A_{i,k}+B_{k,j}=C_{i,j}$.
\end{proof}

Now the claimed running time in \cref{thm:alg-prob:exact-tri-prime} is
\[
    \hO\left(n^{a+c}+n^{b+c}+n^{(a+b+\mu+\omega(a,b,c))/2}\right).
\]
Let $d:=(a+b+\mu-\omega(a,b,c))/2$. As discussed in the proof of \cref{thm:main}, we have $d\le \mu$.

If $d\ge 0$, our approach runs in time $\hO(Mn^{\omega(a,b,c)}+n^{a+b+\mu}/M)$. Taking $M=\Theta(n^{(a+b+\mu-\omega(a,b,c))/2})$ balances the two terms and gives a running time of $\hO(n^{(a+b+\mu+\omega(a,b,c))/2})$, which is dominated by the claimed bound.

If $d<0$, then $a+b+\mu<\omega(a,b,c)$. In this case, we use a direct approach. For every row of $B$ and $C$, compute its maximal constant intervals by scanning the row; that is, we partition each row into maximal contiguous intervals on which the row value is constant. This takes $\cO(n^{b+c}+n^{a+c})$ time. Since the entries are bounded by $\cO(n^\mu)$ and the rows are monotone, each row has $\cO(n^\mu)$ such intervals. Then, for each pair $(i,k)\in[n^a]\times[n^b]$, scan the two interval decompositions of $B_{k,*}$ and $C_{i,*}$ using two pointers, and consider their common refinement: the maximal intervals obtained by intersecting one constant interval of $B_{k,*}$ with one constant interval of $C_{i,*}$. On each such refined interval, both $B_{k,j}$ and $C_{i,j}$ are constant, so it suffices to check whether $A_{i,k}+B_{k,j}=C_{i,j}$ for one representative $j$ in the interval. Since the common refinement has $\cO(n^\mu)$ intervals, all pairs $(i,k)$ can be processed in $\cO(n^{a+b+\mu})$ time. Thus the total running time is $\tcO(n^{a+c}+n^{b+c}+n^{a+b+\mu})$, and since $a+b+\mu<\omega(a,b,c)$, this running time is dominated by the claimed bound.

%% file: convolution.tex
\section{Deterministic Monotone Min-Plus Convolution}
\label{sec:convolution}

In this section, we present a deterministic algorithm for monotone Min-Plus convolution, whose definition we recall below.

\begin{prob}
\label{prob:monotone-min-plus-conv}
    Let $A$ and $B$ be two monotone arrays of length $n$ with entries in $[n^\mu]$. Given $(A,B)$ as input, compute $A \dia B$.
\end{prob}

\subsection{Reduction to a Verification Problem}
\label{subsec:reductions-conv}
As in \cref{sec:product}, we first reduce \cref{prob:monotone-min-plus-conv} to a corresponding verification problem and then to a promised version in which all entries have small residues modulo a parameter $M$.

\begin{prob}
\label{prob:exact-tri-conv}
    Let $A$ and $B$ be two monotone arrays of length $n$ with nonnegative entries bounded by $\cO(n^{\mu})$. Let $C$ be an array of length-$2n-1$ indexed by $2$ to $2n$ with nonnegative entries bounded by $\cO(n^{\mu})$. Given $(A,B,C)$ as input, for every $k \in [2,2n]$, decide whether there exists $i \in [n]$ such that $k - i \in [n]$ and $A_i + B_{k-i} = C_k$.
\end{prob}

\begin{prob}
\label{prob:exact-tri-mod-M-conv}
    Let $A$, $B$, and $C$ be an instance of \cref{prob:exact-tri-conv}, and let $M$ be a positive integer that is a multiple of $100$ and at most $\cO(n^{\mu})$. Additionally, assume that
    \[
        A_i \bmod M, B_i \bmod M, C_k \bmod M \le \frac{M}{10} \qquad \text{ for } \qquad i \in [n], k \in [2, 2n].
    \]
    Given $(A,B,C)$ as input, for every $k \in [2,2n]$, decide whether there exists $i \in [n]$ such that $k - i \in [n]$ and $A_i + B_{k-i} = C_k$.
\end{prob}

\begin{lemma}
\label{lem:reduction-to-verification-conv}
    If there is a $T(n; n^{\mu}, M)$-time deterministic algorithm for \cref{prob:exact-tri-mod-M-conv} for any choice of $M$, then there is an $\tcO(T(n; n^{\mu}, M))$-time deterministic algorithm for \cref{prob:monotone-min-plus-conv}.
\end{lemma}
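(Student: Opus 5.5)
The reduction mirrors \cref{lem:reduction-to-verification} followed by \cref{lem:reduction-to-small-modM}, adapted to arrays, and proceeds in two stages.

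\emph{First stage: recursive halving to \cref{prob:exact-tri-conv}.} Entries of $A$ and $B$ already lie in $[n^\mu]$ and are nonnegative, so no normalization is needed. Set $A'_i := \lfloor A_i/2\rfloor$ and $B'_i := \lfloor B_i/2\rfloor$. Flooring preserves monotonicity, so $(A',B')$ is again an instance of \cref{prob:monotone-min-plus-conv} with entries bounded by $\cO(n^\mu/2)$. Compute $C' := A' \dia B'$ recursively; the base case is the all-zero arrays, where the answer is trivial.

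For every $k$ with nonempty index range, the same sandwich inequalities as in \cref{lem:reduction-to-verification} give $(A\dia B)_k - 2C'_k \in \{0,1,2\}$. For $s\in\{0,1,2\}$, form $C^{(s)}_k := 2C'_k+s$ and run the verifier for \cref{prob:exact-tri-conv} on $(A,B,C^{(s)})$. Output, for each $k$, the minimum accepted candidate. Any accepted value is at least the true minimum, and the true minimum is among the candidates, so this output is exactly $(A \dia B)_k$.

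Unlike the matrix case, \cref{prob:exact-tri-conv} places no monotonicity requirement on $C$, so nothing further is needed here. The recursion depth is $\cO(\log n)$ with $\cO(1)$ verifier calls per level.

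\emph{Second stage: residue shifting to \cref{prob:exact-tri-mod-M-conv}.} Let $W := M/100$. Add $M$ to every entry of $A$ and $B$, and $2M$ to every entry of $C$; this preserves witnesses and makes all entries at least $M$. For $s,t\in\{0,\dots,99\}$, build $A^{(s)}$, $B^{(t)}$ and $C^{(s,t)}$ with exactly the formulas from \cref{lem:reduction-to-small-modM}: shift by $sW$ when the residue lies in $I_s$, and otherwise round down to a multiple of $M$ plus $3W$ (respectively $7W$ for $C$ with the window $J_{s,t}$). Answer Yes for $k$ if any of the $10^4$ instances does.

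The residue bounds carry over verbatim. Monotonicity of $A^{(s)}$ and $B^{(t)}$ follows from the same two-case argument ($q_j<q_{j+1}$ versus $q_j=q_{j+1}$), which was only ever a statement about a single monotone sequence. The correctness argument of \cref{clm:reduction-correctness} is purely entrywise, relating a triple $(A_i, B_{k-i}, C_k)$ to $(A^{(s)}_i, B^{(t)}_{k-i}, C^{(s,t)}_k)$. It therefore applies unchanged with $(i,k,j)$ replaced by $(k,i,k-i)$.

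The total time is $\cO(1)$ calls to the $T(n;n^\mu,M)$ algorithm per level over $\cO(\log n)$ levels, plus linear overhead, giving $\tcO(T(n;n^\mu,M))$.

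The only delicate point is that entry bounds must stay $\cO(n^\mu)$ after the shifts, which holds since $M=\cO(n^\mu)$. There is otherwise no real obstacle; the main thing to check is that the proof of \cref{clm:reduction-correctness} never used any matrix structure, and it did not.
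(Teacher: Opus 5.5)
Your proposal is correct and follows the paper's proof sketch step for step. It uses recursive halving with the three candidate arrays $C^{(s)}_k = 2C'_k + s$, then the $100^2$ residue-shifted instances $(A^{(s)}, B^{(t)}, C^{(s,t)})$, with monotonicity and correctness carried over from the entrywise arguments of the matrix case; your remark that the convolution verification problem imposes no monotonicity on $C$ is accurate and is why no analogue of the row-monotonicity check for $C^{(s)}$ is needed.
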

\begin{proof}[Proof sketch]
    The proof is similar to the proofs of \cref{lem:reduction-to-verification} and \cref{lem:reduction-to-small-modM}. We briefly describe the constructions.
    
    To reduce \cref{prob:monotone-min-plus-conv} to \cref{prob:exact-tri-conv}, we let $A'_i := \lfloor A_i/2 \rfloor$ and $B'_i := \lfloor B_i/2 \rfloor$. We compute $C' := A' \dia B'$ recursively. Construct three candidate arrays by $ C^{(s)}_k := 2C'_k + s$ for all $s \in \{0,1,2\}$. Note that flooring preserves monotonicity, so $A'$ and $B'$ are also monotone. Therefore, we run the verifier for \cref{prob:exact-tri-conv} on $(A,B,C^{(s)})$ for each $s \in \{0,1,2\}$. By an argument analogous to \cref{lem:reduction-to-verification}, correctness follows.
    
    Next, we reduce \cref{prob:exact-tri-conv} to \cref{prob:exact-tri-mod-M-conv}. Let $W := M/100$. We may assume that all entries of $A$, $B$, and $C$ are at least $M$ by adding $M$ to the entries of $A$ and $B$ and adding $2M$ to the entries of $C$. We partition $\{0,1,\ldots,M-1\}$ into intervals $I_s$ defined by
    \[
        I_s := [sW,(s+1)W) \qquad \text{for } s \in \{0,1,\ldots,99\}.
    \]
    For each $s \in \{0,1,\ldots,99\}$, define arrays $A^{(s)}[1,2, \ldots, n]$ and $B^{(s)}[1,2, \ldots, n]$ by
    \[
        A^{(s)}_{i} := 
        \begin{cases} A_{i} - sW, & \text{if } A_{i} \bmod M \in I_s,\\
     \Big\lfloor \dfrac{A_{i}-sW}{M}\Big\rfloor \cdot M + 3W, & \text{otherwise}
        \end{cases},
    \]
    and define $B^{(s)}$ analogously. Additionally, for all $s,t \in \{0,1,\ldots,99\}$, let
    \[
    J_{s,t} := \bigl\{(s+t)W \bmod M,\ ((s+t)W+1) \bmod M,\ \ldots,\ ((s+t)W + (2W-1))\bmod M\bigr\},
    \]
    and define an array $ C^{(s,t)}[2,3, \ldots, 2n]$ by
    \[
        C^{(s,t)}_{k} := \begin{cases}
                            C_{k} - (s + t)W, & \text{if } C_{k} \bmod M \in J_{s,t},\\
                            \Big\lfloor \dfrac{C_{k}-(s+t)W}{M}\Big\rfloor \cdot M + 7W, & \text{otherwise.}
                        \end{cases}
    \]
    For all $s,t \in \{0,1,\ldots,99\}$, run the algorithm for \cref{prob:exact-tri-mod-M-conv} on the instances $(A^{(s)},B^{(t)},C^{(s,t)})$ and output \textsc{Yes} for index $k$ if any instance outputs \textsc{Yes} for $k$. By an argument analogous to \cref{lem:reduction-to-small-modM}, correctness follows.
    
    For the running time, we make $\cO(1)$ calls to the $T(n; n^{\mu},M)$-time verifier for \cref{prob:exact-tri-mod-M-conv} per recursion level. The recursion depth is $\cO(\log(n^\mu))=\cO(\log n)$, which preserves the $\tcO(T(n; n^{\mu},M))$ bound.
\end{proof}

\subsection{Algorithm for \texorpdfstring{\cref{prob:exact-tri-mod-M-conv}}{Problem \ref*{prob:exact-tri-mod-M-conv}}}
We focus on solving \cref{prob:exact-tri-mod-M-conv} efficiently. Suppose we are given an instance $(A,B,C)$ of \cref{prob:exact-tri-mod-M-conv} with $M = \cO(n^{d})$ for some constant $0 \le d \le \mu$. Define
\[
\Ah_{i}=\left\lfloor \frac{A_{i}}{M}\right\rfloor,
\qquad
\Bh_{j}=\left\lfloor \frac{B_{j}}{M}\right\rfloor,
\qquad
\Ch_{k}=\left\lfloor \frac{C_{k}}{M}\right\rfloor.
\]
We also define
\[
    \Al_{i}=A_{i}\bmod M,
    \qquad
    \Bl_{j}=B_{j}\bmod M,
    \qquad
    \Cl_{k}=C_{k}\bmod M.
\]
The high-level outline of our algorithm is as follows:
\begin{enumerate}
    \item Find a ``good'' modulus $Q$ with $M \le Q \le Mn^{o(1)}$. The definition of ``good'' is deferred to \cref{def:good-modulus-conv}. 
    \item For every $k \in [2,2n]$, compute $s_k$, the number of indices $i \in [n]$ such that $k - i \in [n]$ and
    $A_i + B_{k-i}\equiv C_k \pmod{Q}$.
    \item For every $k \in [2,2n]$, compute $s'_k$, the number of indices $i \in [n]$ such that $k - i \in [n]$ and
    $A_i + B_{k-i}\equiv C_k \pmod{Q}$ and $\Ah_i+\Bh_{k-i}\ne \Ch_k$.
    \item For each $k \in [2,2n]$, output \textsc{Yes} if and only if $s_k>s'_k$.
\end{enumerate}

We will show that $A_i + B_{k-i} = C_k$ holds if and only if $s_k > s'_k$. First, we state two facts without proofs, as they are essentially identical to the proofs of \cref{fact:big-difference} and \cref{fact:small-difference}.

\begin{fact}
\label{fact:big-diff-conv}
    For every $k\in[2,2n]$ and $i\in[n]$ such that $k - i \in [n]$, if $\Ah_i+\Bh_{k-i}\ne \Ch_k$, then $|A_i+B_{k-i}-C_k|\ge 7M/10$.
\end{fact}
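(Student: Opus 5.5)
The plan is to copy the proof of \cref{fact:big-difference}, with $(i,k,j)$ replaced by the convolution triple $(i,k-i,k)$. Fix $k\in[2,2n]$ and $i\in[n]$ with $k-i\in[n]$. Write each entry as high part times $M$ plus low part:
\[
    A_i+B_{k-i}-C_k \;=\; M\cdot\bigl(\Ah_i+\Bh_{k-i}-\Ch_k\bigr) \;+\; \bigl(\Al_i+\Bl_{k-i}-\Cl_k\bigr).
\]
This identity holds because $A_i=M\Ah_i+\Al_i$ by the definitions of floor and mod, and likewise for $B_{k-i}$ and $C_k$.

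Next I would bound the two summands separately. By the hypothesis $\Ah_i+\Bh_{k-i}\ne \Ch_k$, the integer $\Ah_i+\Bh_{k-i}-\Ch_k$ is nonzero, so its absolute value is at least $1$. Hence the first summand has magnitude at least $M$. For the second summand, the promise of \cref{prob:exact-tri-mod-M-conv} gives $0\le \Al_i,\Bl_{k-i},\Cl_k\le M/10$. The triangle inequality then yields $|\Al_i+\Bl_{k-i}-\Cl_k|\le 3M/10$.

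Finally, the reverse triangle inequality $|x+y|\ge |x|-|y|$ gives
\[
    |A_i+B_{k-i}-C_k|\;\ge\; M-\frac{3M}{10}\;=\;\frac{7M}{10}.
\]
There is no real obstacle here. The only point needing care is that the residue bound of the promise is applied at the shifted index $k-i$, which is valid since $k-i\in[n]$. Monotonicity of the arrays plays no role in this argument.
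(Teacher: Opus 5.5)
Your proof is correct and follows the paper's route exactly: the paper omits the proof as essentially identical to that of \cref{fact:big-difference}, which uses the same high/low decomposition, the bound $|\Ah_i+\Bh_{k-i}-\Ch_k|\ge 1$, the $3M/10$ bound on the low parts from the promise, and the reverse triangle inequality. The one adaptation needed is applying the residue promise at index $k-i$, which you correctly justify by $k-i\in[n]$.
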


\begin{fact}
\label{fact:small-diff-conv}
    For every $k\in[2,2n]$ and $i\in[n]$ such that $k - i \in [n]$, if $\Ah_i+\Bh_{k-i}=\Ch_k$, then $|A_i+B_{k-i}-C_k|\le 3M/10$.
\end{fact}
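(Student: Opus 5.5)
The statement to prove is \cref{fact:small-diff-conv}: if $\Ah_i+\Bh_{k-i}=\Ch_k$, then $|A_i+B_{k-i}-C_k|\le 3M/10$. I will mirror the proof of \cref{fact:small-difference}, splitting each entry into its high and low parts.

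Fix $k\in[2,2n]$ and $i\in[n]$ with $k-i\in[n]$. By the definitions of the high and low parts, $A_i=M\Ah_i+\Al_i$, $B_{k-i}=M\Bh_{k-i}+\Bl_{k-i}$ and $C_k=M\Ch_k+\Cl_k$. Subtracting gives
\[
A_i+B_{k-i}-C_k = M\bigl(\Ah_i+\Bh_{k-i}-\Ch_k\bigr)+\bigl(\Al_i+\Bl_{k-i}-\Cl_k\bigr).
\]

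Under the hypothesis $\Ah_i+\Bh_{k-i}=\Ch_k$, the first term vanishes, so the difference equals $\Al_i+\Bl_{k-i}-\Cl_k$.

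Each low part is a residue modulo $M$, hence nonnegative. By the promise of \cref{prob:exact-tri-mod-M-conv}, each low part is also at most $M/10$. The triangle inequality then gives
\[
|\Al_i+\Bl_{k-i}-\Cl_k|\le \Al_i+\Bl_{k-i}+\Cl_k\le 3M/10,
\]
which is the claim. A sharper bound of $2M/10$ holds here because the low parts are nonnegative, but $3M/10$ is what the statement asks for.

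The argument is routine and has no real obstacle. The only point needing care is that the promise is stated for indices of $A$ and $B$ in $[n]$ and of $C$ in $[2,2n]$. The index constraint $k-i\in[n]$ guarantees that $B_{k-i}$ is a valid entry of $B$, so the promise applies to $\Bl_{k-i}$.
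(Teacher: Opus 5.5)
Your proof is correct and is the same argument the paper uses, which defers to the proof of \cref{fact:small-difference}: write each entry as $M$ times its high part plus its low part, cancel the high parts using the hypothesis, and bound the remaining low-part difference by $3M/10$ using the residue promise. Your side remark that nonnegativity of the residues actually gives the sharper bound $2M/10$ is also correct.
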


\begin{lemma}
\label{lem:s-vs-sprime-conv}
    For every $k \in [2,2n]$, there exists $i \in [n]$ such that $k - i \in[n]$ and $A_i+B_{k-i}=C_k$ if and only if $s_k>s'_k$.
\end{lemma}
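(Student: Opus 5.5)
The plan is to mirror the proof of \cref{lem:s-vs-sprime}, replacing the triple $(i,k,j)$ by the pair $(i,k-i)$ and invoking \cref{fact:small-diff-conv} in place of \cref{fact:small-difference}. Throughout, fix $k\in[2,2n]$ and restrict attention to indices $i\in[n]$ with $k-i\in[n]$. These are exactly the indices counted by $s_k$ and $s'_k$. By definition, $s'_k$ counts a subset of the indices counted by $s_k$. Hence $s_k>s'_k$ holds if and only if some $i$ is counted in $s_k$ but not in $s'_k$, that is, some $i$ satisfies $A_i+B_{k-i}\equiv C_k\pmod Q$ and $\Ah_i+\Bh_{k-i}=\Ch_k$. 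It therefore suffices to show that, for each admissible $i$, the equality $A_i+B_{k-i}=C_k$ is equivalent to the conjunction of these two conditions.

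For the forward direction, suppose $A_i+B_{k-i}=C_k$. The congruence modulo $Q$ is then immediate. For the high parts, write $A_i=M\Ah_i+\Al_i$ and $B_{k-i}=M\Bh_{k-i}+\Bl_{k-i}$. The promise of \cref{prob:exact-tri-mod-M-conv} gives $\Al_i+\Bl_{k-i}\le 2M/10<M$, so no carry occurs. Consequently $\Ch_k=\lfloor (A_i+B_{k-i})/M\rfloor=\Ah_i+\Bh_{k-i}$. This index $i$ is counted in $s_k$ but not in $s'_k$.

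For the converse, suppose $i$ satisfies both the congruence and $\Ah_i+\Bh_{k-i}=\Ch_k$. By \cref{fact:small-diff-conv}, $|A_i+B_{k-i}-C_k|\le 3M/10<M\le Q$. Here we use that the good modulus satisfies $Q\ge M$, as guaranteed in step 1 of the outline. A multiple of $Q$ with absolute value less than $Q$ must be zero, so $A_i+B_{k-i}=C_k$.

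No step here is a genuine obstacle. The one point needing care is to carry the side condition $k-i\in[n]$ through every count. The other is to make sure the bound $Q\ge M$ from the definition of a good modulus (\cref{def:good-modulus-conv}) is actually available; if it is not, the whole argument fails. Everything else is the same bookkeeping as in \cref{lem:s-vs-sprime}.
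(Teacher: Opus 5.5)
Your proposal is correct and follows the paper's proof essentially verbatim. The forward direction uses the no-carry argument from $\Al_i+\Bl_{k-i}\le 2M/10$, and the converse uses \cref{fact:small-diff-conv} with $Q\ge M$ to force the multiple of $Q$ to be zero; your remark that $s'_k$ counts a subset of the indices counted by $s_k$ just makes explicit a step the paper leaves implicit.
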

\begin{proof} 
    Suppose there exists $i\in[n]$ such that $k-i\in[n]$ and $A_i + B_{k-i} = C_k$. Then $A_i + B_{k-i} \equiv C_k \pmod{Q}$, and $\Ah_i + \Bh_{k-i} = \Ch_k$ because adding the low parts creates no carry across a multiple of $M$. Hence, $i$ contributes to $s_k$ but not to $s'_k$. Conversely, suppose $s_k > s'_k$. Then there exists $i\in[n]$ such that $A_i + B_{k-i} \equiv C_k \pmod Q$ and $\Ah_i + \Bh_{k-i} = \Ch_k$. Then $A_i + B_{k-i} - C_k$ is a multiple of $Q$, and $|A_i + B_{k-i} - C_k| < Q$ by \cref{fact:small-diff-conv}. This implies $A_i + B_{k-i} = C_k$.
\end{proof}

Let $\lm$ be such that $M/20 \le 2^{\lm} < M/10$. We define segments, active segments, and good modulus analogously to \cref{sec:product}.

\begin{definition}[Segments] 
\label{def:segments-conv} 
    For $0 \le \ell \le \lm$, a level-$\ell$ segment is a pair $([i_0, i_1], k)$ such that $[i_0,i_1]\subseteq [\max\{1,k-n\},\min\{n,k-1\}]$, for every $ i \in [i_0, i_1]$, $\lfloor A_{i_0} / 2^\ell\rfloor = \lfloor A_{i} / 2^\ell\rfloor$, and $\lfloor B_{k -i_0} / 2^\ell\rfloor = \lfloor B_{k -i} / 2^\ell\rfloor$, and $[i_0, i_1]$ cannot be extended further. 
\end{definition}

\begin{definition}[Active Segment]
\label{def:active-segment-conv}
    For $0 \le \ell \le \lm$, a level-$\ell$ segment $([i_0,i_1],k)$ is called \emph{active} with respect to $Q$ if $\Ah_{i_0}+\Bh_{k-i_0}-\Ch_k \ne 0$ and there exists $s\in[-4\cdot 2^\ell,\,4\cdot 2^\ell]$ such that $A_{i_0}+B_{k-i_0}-C_k \equiv s \pmod Q$.
\end{definition}

\begin{fact}
\label{fact:num-segments-conv}
    For $0 \le \ell \le \lm$, the number of level-$\ell$ segments is $\cO(n^{1+\mu}/2^\ell)$.
\end{fact}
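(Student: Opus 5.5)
The plan is to mirror the proof of \cref{fact:num-segments}: fix $k\in[2,2n]$ and count the level-$\ell$ segments $([i_0,i_1],k)$ with that second coordinate. Let $I_k:=[\max\{1,k-n\},\min\{n,k-1\}]$ be the admissible range of $i$. I would define two sequences indexed by $i\in I_k$:
\[
    a_i:=\lfloor A_i/2^\ell\rfloor,\qquad b_i:=\lfloor B_{k-i}/2^\ell\rfloor .
\]
By \cref{def:segments-conv}, a level-$\ell$ segment with this $k$ is exactly a maximal subinterval of $I_k$ on which both sequences are constant. So the segments with this $k$ partition $I_k$.

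The key observation is monotonicity. Since $A$ is monotone, $(a_i)$ is non-decreasing in $i$. Since $B$ is monotone and $k-i$ decreases as $i$ increases, $(b_i)$ is non-increasing in $i$. In either case, a monotone integer sequence with values in $[0,\cO(n^\mu)/2^\ell]$ has at most $\cO(n^\mu/2^\ell)+1$ maximal constant blocks, because each change of value moves it by at least one in a fixed direction. The common refinement of the two block decompositions has at most (number of blocks of $a$) $+$ (number of blocks of $b$) $-1$ pieces. Hence there are $\cO(n^\mu/2^\ell+1)$ segments for each $k$.

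Summing over the $2n-1$ values of $k$ gives $\cO(n^{1+\mu}/2^\ell+n)$. The additive $+1$ per $k$ is the only point needing care, and I expect it to be the main (mild) obstacle. Since $\ell\le\lm$ and $2^{\lm}<M/10=\cO(n^\mu)$, we have $n^\mu/2^\ell=\Omega(1)$. Therefore the $n$ term is absorbed into $\cO(n^{1+\mu}/2^\ell)$, which gives the claimed bound.
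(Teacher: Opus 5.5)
Your proposal is correct and follows the same argument the paper relies on. The paper states this fact without a separate proof, as the direct analogue of \cref{fact:num-segments}: fix the outer index, count the maximal constant blocks of two monotone floored sequences, and sum. Your explicit absorption of the additive $+1$ per $k$, using $2^{\ell}\le 2^{\lm}<M/10=\cO(n^\mu)$, makes precise a step the paper leaves implicit.
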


\begin{definition}[Set of Active Segments]
\label{def:active-segments-conv}
    For $0 \le \ell \le \lm$, let $S_{\ell}(Q)$ be the set of all active level-$\ell$ segments with respect to $Q$.
\end{definition}

\begin{definition}[Good Modulus]
\label{def:good-modulus-conv}
    We say that an integer $Q$ is a \emph{good modulus} if $M \le Q \le Mn^{o(1)}$ and $|S_{\ell}(Q)| \le \hO(n^{1+\mu}/Q)$ for all $0\le \ell \le \lm$.
\end{definition}

\subsection{Computing \texorpdfstring{$s_k$}{sk} and \texorpdfstring{$s_k'$}{sk'}}
Recall that $s_k$ is the number of indices $i \in [n]$ such that $k - i \in [n]$ and $A_i + B_{k-i} \equiv C_k \pmod{Q}$. We compute $s_k$ for all $k \in [2,2n]$ using polynomial multiplication.

\begin{lemma}[Computing $s_k$]
\label{lem:compute-s-conv}
    We can compute $s_k$ for every $k \in [2,2n]$ in $\tcO(Qn)$ time.
\end{lemma}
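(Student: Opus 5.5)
We will mirror the proof of \cref{lem:compute-s}, replacing matrix multiplication by a single bivariate polynomial multiplication. The idea is to encode the index $i$ in one variable and the residue of the entry modulo $Q$ in another. Multiplying then realizes the convolution in the index and the modular addition in the residue simultaneously.

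Concretely, we work over the ring $\mathbb{Z}[x,y]/(y^Q-1)$, or simply over $\mathbb{Z}[x,y]$ with $y$-degree below $2Q$, folding afterwards. Define
\[
    P_A(x,y) := \sum_{i=1}^{n} x^{i}\, y^{A_i \bmod Q},
    \qquad
    P_B(x,y) := \sum_{j=1}^{n} x^{j}\, y^{B_j \bmod Q}.
\]
Both polynomials have $x$-degree at most $n$ and $y$-degree at most $Q-1$. By the polynomial multiplication fact in the preliminaries, their product $P := P_A\cdot P_B$ can be computed in $\tcO(nQ)$ time. The coefficients are nonnegative integers bounded by $n$, so exact integer FFT with small word size suffices.

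Write $[x^k y^r]P$ for the coefficient of $x^k y^r$ in $P$. It counts pairs $(i,j)\in[n]^2$ with $i+j=k$ and $(A_i\bmod Q)+(B_j\bmod Q)=r$, where $r\in[0,2Q-2]$. Therefore
\[
    s_k = [x^k y^{r_0}]P + [x^k y^{r_0+Q}]P,
    \qquad\text{where } r_0 := C_k \bmod Q.
\]
This is because $A_i+B_{k-i}\equiv C_k \pmod Q$ holds exactly when the sum of the reduced residues equals $r_0$ or $r_0+Q$. The constraint $i, k-i\in[n]$ is enforced automatically, since both index sums range over $[n]$. Reading off these two coefficients for every $k\in[2,2n]$ takes $\cO(n)$ time, so the total is $\tcO(Qn)$.

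There is no real obstacle here. The only point needing care is the wraparound in the $y$-exponent: we either reduce modulo $y^Q-1$ or add the two coefficients as above. One should also note that the FFT cost $\tcO(d_xd_y)$ applies with $d_x=\cO(n)$ and $d_y=\cO(Q)$.
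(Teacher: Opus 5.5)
Your proposal is correct and takes essentially the same route as the paper. Both encode the index in one variable and the residue modulo $Q$ in the other; the paper works in $\mathbb{F}[x]/(x^Q-1)$ with the roles of $x$ and $y$ swapped. Both then perform a single bivariate FFT multiplication in $\tcO(Qn)$ time and read off the coefficient for index $k$ and residue $C_k \bmod Q$. Your summing of the coefficients at $r_0$ and $r_0+Q$ is just the unreduced form of the paper's reduction modulo $x^Q-1$.
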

\begin{proof}
    Work over the ring $\mathcal{R} := \mathbb{F}[x]/(x^Q-1)$, where $\mathbb{F}$ has characteristic greater than $n$. Define two bivariate polynomials in $\mathcal{R}[y]$ by
    \[
        P_A(x,y) := \sum_{i=1}^n x^{A_i \bmod Q} y^i,
        \qquad
        P_B(x,y) := \sum_{j=1}^n x^{B_j \bmod Q} y^j.
    \]
    Compute the product $P(x,y) := P_A(x,y)\cdot P_B(x,y)$ in $\mathcal{R} [y]$.
    
    For correctness, fix $k\in[2,2n]$ and $r\in\{0,1,\ldots,Q-1\}$. The coefficient of $x^r y^k$ in $P(x,y)$ equals the number of pairs $(i,j)\in[n]\times[n]$ such that $i+j=k$ and $A_i+B_j\equiv r \pmod Q$. Equivalently, it equals the number of indices $i\in[n]$ such that $k-i\in [n]$ and $A_i+B_{k-i}\equiv r \pmod Q$. Hence, for each $k\in[2,2n]$, we set $s_k$ to be the coefficient of $x^{C_k \bmod Q}y^k$ in $P(x,y)$.
    
    For the running time, the degrees in $x$ and $y$ of $P_A$ and $P_B$ are at most $Q -1$ and $n$. Therefore, using fast polynomial multiplication, we can compute $P$ in $\tcO(Qn)$ time.
\end{proof}

We now describe how to compute $s'_k$. Recall that $s'_k$ is the number of indices $i\in[n]$ such that $k - i \in [n]$, $A_i+B_{k-i}\equiv C_k \pmod Q$ and $\Ah_i+\Bh_{k-i}\ne \Ch_k$. Assume that $Q$ is a good modulus, as in \cref{def:good-modulus-conv}. For simplicity, let $S_{\ell}$ denote the set of all active level-$\ell$ segments, suppressing the dependence on $Q$. We first prove a property that we will use to compute $s'_k$.

\begin{lemma}
\label{lem:nesting-active-conv}
    For all $0 \le \ell \le \lm - 1$, and $([i_0', i_1'],k) \in S_{\ell}$, the unique level-$(\ell + 1)$ segment $([i_0, i_1],k)$  such that $[i_0', i_1'] \subseteq [i_0, i_1]$ is active, i.e., $([i_0, i_1],k) \in S_{\ell + 1}$.
\end{lemma}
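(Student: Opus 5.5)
We mirror the proof of \cref{lem:nesting-active}, with the roles of $B_{k,\cdot}$ and $C_{i,\cdot}$ played by $A_{\cdot}$ and $B_{k-\cdot}$, and with $C_k$ now a single fixed value. The argument has two parts: first the high-part condition, then the congruence condition.

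\textbf{High parts are constant.} First we show that $\Ah_i$ and $\Bh_{k-i}$ are constant for $i\in[i_0,i_1]$. Fix $i\in[i_0,i_1]$. By the definition of a level-$(\ell+1)$ segment, $\lfloor A_i/2^{\ell+1}\rfloor=\lfloor A_{i_0}/2^{\ell+1}\rfloor$, so
\[
|A_i-A_{i_0}|<2^{\ell+1}\le 2^{\lm}<M/10 .
\]
By the promise of \cref{prob:exact-tri-mod-M-conv}, $|\Al_i-\Al_{i_0}|\le M/10$. If $\Ah_i\ne\Ah_{i_0}$, then writing $A_i-A_{i_0}=(\Ah_i-\Ah_{i_0})M+(\Al_i-\Al_{i_0})$ gives $|A_i-A_{i_0}|\ge 9M/10$, a contradiction. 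The same argument applies to $B_{k-i}$ versus $B_{k-i_0}$, so $\Bh_{k-i}$ is constant as well. Since $\Ch_k$ does not depend on $i$, and since $i_0'\in[i_0,i_1]$ with $\Ah_{i_0'}+\Bh_{k-i_0'}\ne\Ch_k$ (activity at level $\ell$), we conclude $\Ah_{i_0}+\Bh_{k-i_0}\ne\Ch_k$. This is the high-part condition.

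\textbf{Congruence condition.} Because $[i_0',i_1']\subseteq[i_0,i_1]$, we have
\[
|A_{i_0}-A_{i_0'}|<2^{\ell+1}\quad\text{and}\quad|B_{k-i_0}-B_{k-i_0'}|<2^{\ell+1}.
\]
Hence the two differences $A_{i_0}+B_{k-i_0}-C_k$ and $A_{i_0'}+B_{k-i_0'}-C_k$ differ by less than $2\cdot 2^{\ell+1}$. Here $C_k$ cancels exactly, which makes this case slightly easier than the matrix case. Take $s'\in[-4\cdot2^\ell,4\cdot2^\ell]$ witnessing activity of $([i_0',i_1'],k)$, and set
\[
s=s'+(A_{i_0}+B_{k-i_0})-(A_{i_0'}+B_{k-i_0'}).
\]
Then $|s|<4\cdot2^\ell+4\cdot 2^\ell=4\cdot2^{\ell+1}$ and $A_{i_0}+B_{k-i_0}-C_k\equiv s\pmod Q$. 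So $([i_0,i_1],k)$ is active at level $\ell+1$.

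\textbf{Containing segment and main obstacle.} One more point needs checking: the unique containing level-$(\ell+1)$ segment exists. Level-$(\ell+1)$ constancy of $\lfloor\cdot/2^{\ell+1}\rfloor$ implies level-$\ell$ constancy, and all segments for a fixed $k$ live in the same index range $[\max\{1,k-n\},\min\{n,k-1\}]$. Hence level-$\ell$ segments refine level-$(\ell+1)$ segments, which gives existence and uniqueness. There is no real obstacle in this proof; the only care needed is the bound $2^{\ell+1}\le 2^{\lm}<M/10$, which holds because $\ell<\lm$.
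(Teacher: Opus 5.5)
Your proof is correct and follows the paper's argument. Constancy of $\Ah_i$ and $\Bh_{k-i}$ across the level-$(\ell+1)$ segment gives the high-part condition, and shifting the level-$\ell$ witness $s'$ by a difference of magnitude less than $2\cdot 2^{\ell+1}$ gives the congruence condition.

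One small slip is in your side remark on the existence of the containing segment: the implication runs the other way. Constancy of $\lfloor\cdot/2^{\ell}\rfloor$ implies constancy of $\lfloor\cdot/2^{\ell+1}\rfloor$, since $\lfloor x/2^{\ell+1}\rfloor=\lfloor\lfloor x/2^{\ell}\rfloor/2\rfloor$, and this is what actually yields your (correct) conclusion that level-$\ell$ segments refine level-$(\ell+1)$ segments.
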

\begin{proof}[Proof sketch]
    By a similar argument to that of \cref{lem:nesting-active}, we have $\Ah_{i}$ and $\Bh_{k-i}$ are constant on $[i_0,i_1]$. Since $\Ah_{i_0'}+\Bh_{k - i_0'}\ne \Ch_{k}$ and $i_0' \in [i_0, i_1]$, we conclude that $\Ah_{i_0}+\Bh_{k - i_0} \ne \Ch_{k}$, satisfying the first high-part condition of being active.

    For the congruence condition, we note that $[i_0', i_1'] \subseteq [i_0, i_1]$ and $([i_0, i_1],k)$ is a level-$(\ell + 1)$ segment.  By a similar argument to that of \cref{lem:nesting-active}, we have
    
    \[
    \bigl|(A_{i_0}+B_{k - i_0}-C_{k})-(A_{i_0'}+B_{k -i_0'}-C_{k})\bigr|< 2\cdot 2^{\ell + 1}.
    \]
    Since $([i_0', i_1'],k) \in S_{\ell}$, we have $A_{i_0'}+B_{k - i_0'}-C_{k}\equiv s'\pmod Q$ for some $s' \in [-4\cdot 2^\ell,\ 4\cdot 2^\ell]$. Let 
    \[
        s = s' + (A_{i_0}+B_{k - i_0}-C_{k})-(A_{i_0'}+B_{k - i_0'}-C_{k}).
    \]
    Then, $A_{i_0} + B_{k-i_0} - C_k \equiv s \pmod Q$ and $s \in [-4 \cdot 2^{\ell +1}, 4 \cdot 2^{\ell +1}]$.
\end{proof}

\begin{lemma}[Computing $s'_k$]
\label{lem:compute-sprime-conv}
    We can compute $s'_k$ for all $k\in[2,2n]$ in $\hO\!\left(n^{1+\mu}/Q\right)$ time.
\end{lemma}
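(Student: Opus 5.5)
The plan mirrors the proof of \cref{lem:compute-sprime}: first compute $S_0$ by top-down refinement as in \cref{lem:compute-S}, then aggregate the contributions of the active level-$0$ segments. The only differences are that segments are now pairs $([i_0,i_1],k)$ indexed by the output position $k$, and that each segment contributes to a single counter $s'_k$ rather than to a range of entries.

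\emph{Computing $S_{\lm}$.} For each $k\in[2,2n]$, restrict $i$ to the interval $[\max\{1,k-n\},\min\{n,k-1\}]$. On this interval, $\lfloor A_i/2^{\lm}\rfloor$ is non-decreasing in $i$, and $\lfloor B_{k-i}/2^{\lm}\rfloor$ is non-increasing in $i$. Each takes $\cO(n^\mu/2^{\lm})$ distinct values, so binary search finds all level-$\lm$ segments for this $k$. By \cref{fact:num-segments-conv}, this gives $\tcO(n^{1+\mu}/2^{\lm})=\tcO(n^{1+\mu}/M)=\hO(n^{1+\mu}/Q)$ segments overall, using $M\le Q\le Mn^{o(1)}$. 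We test each one for activity in $\cO(1)$ time by checking $\Ah_{i_0}+\Bh_{k-i_0}\ne \Ch_k$ and whether $(A_{i_0}+B_{k-i_0}-C_k)\bmod Q$ falls in the residue window of $[-4\cdot 2^{\lm},4\cdot 2^{\lm}]$. Since $Q\ge M>8\cdot 2^{\lm}$, this window is a union of at most two intervals of residues.

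\emph{Refinement.} For $\ell=\lm-1,\dots,0$, take each segment in $S_{\ell+1}$ and split it into its level-$\ell$ subsegments. Halving the granularity splits each constant block of $\lfloor A_i/2^{\ell+1}\rfloor$, and likewise of $\lfloor B_{k-i}/2^{\ell+1}\rfloor$, into at most two blocks. So there are $\cO(1)$ subsegments, found by binary search. Each subsegment is tested for activity and kept if active.

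\emph{Correctness of the refinement.} We need every active level-$\ell$ segment to appear as a subsegment of an active level-$(\ell+1)$ segment. This is exactly \cref{lem:nesting-active-conv}, and induction on $\ell$ downward gives that the procedure outputs exactly $S_\ell$. The good-modulus property (\cref{def:good-modulus-conv}) bounds $|S_\ell|=\hO(n^{1+\mu}/Q)$, and there are $\cO(\log n)$ levels. Hence computing $S_0$ takes $\hO(n^{1+\mu}/Q)$ time in total.

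\emph{Aggregation.} For each $([i_0,i_1],k)\in S_0$, test whether $A_{i_0}+B_{k-i_0}\equiv C_k\pmod Q$, and if so add $i_1-i_0+1$ to $s'_k$. Correctness follows the argument of \cref{lem:compute-sprime}:
\begin{itemize}
\item Fix an index $i$ counted in $s'_k$, and let $([i_0,i_1],k)$ be the level-$0$ segment with $i\in[i_0,i_1]$. On a level-$0$ segment both $A_i$ and $B_{k-i}$ are constant. So the congruence and the high-part inequality transfer from $i$ to $i_0$, and the segment lies in $S_0$ (taking $s=0$) and passes the test.
\item Conversely, every $i$ in a segment that passes the test has the same values $A_i$ and $B_{k-i}$ as $i_0$. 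So each such $i$ satisfies both the congruence and the high-part inequality, and is correctly counted.
\end{itemize}
Level-$0$ segments for a fixed $k$ are disjoint, so nothing is double-counted. Initializing the $2n-1$ counters takes $\cO(n)$ time, which is $\hO(n^{1+\mu}/Q)$ because $Q=\cO(n^{\mu+o(1)})$.

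\emph{Main obstacle.} The step needing the most care is the segment enumeration in the convolution setting. The two sequences $A_i$ and $B_{k-i}$ are monotone in opposite directions, so we must confirm that their common refinement still has $\cO(n^\mu/2^\ell)$ blocks per $k$. It does, because each of the two sequences alone has that many blocks.
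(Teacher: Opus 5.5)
Your proposal is correct and follows essentially the same route as the paper. Both compute $S_{\lm}$ by binary search, then refine top-down to $S_0$, justified by \cref{lem:nesting-active-conv} and the good-modulus bound. Both then add $i_1-i_0+1$ to $s'_k$ for each segment of $S_0$ that passes the congruence test modulo $Q$. Your remark that the $\cO(n)$ counter initialization is absorbed (since $Q \le Mn^{o(1)} = \cO(n^{\mu+o(1)})$) is a small detail the paper leaves implicit.
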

\begin{proof}
    As in \cref{lem:compute-sprime}, we first compute $S_0$ and then compute the values $s'_k$ using $S_0$.
    
    \begin{enumerate}
        \item \textbf{Phase 1: Computing $S_0$.}
        We compute all level-$\lm$ segments via binary search and keep only the active ones to obtain $S_{\lm}$. Next, for $\ell=\lm-1,\ldots,0$, we construct $S_{\ell}$ from $S_{\ell+1}$ as follows. Each segment in $S_{\ell+1}$ refines into $\cO(1)$ level-$\ell$ subsegments, since decreasing $\ell$ by $1$ can split a maximal constant block into at most two blocks for $A$ and at most two blocks for $B$. We use binary search to find the boundary indices and compute these subsegments. For each resulting subsegment, we test whether it is active, and, if so, insert it into $S_{\ell}$.
    
        \item \textbf{Phase 2: Computing $s'_k$.}
        For each segment $([i_0,i_1],k)\in S_0$, we test whether $A_{i_0}+B_{k-i_0}\equiv C_k\pmod Q$. If the test succeeds, we increment $s'_k$ by $i_1-i_0+1$.
    \end{enumerate}
    
    The correctness of Phase~1 follows inductively using \cref{lem:nesting-active-conv}. For Phase~2, fix any $i \in [n]$ such that $k-i \in [n]$, $A_i+B_{k-i}\equiv C_k \pmod Q$, and $\Ah_i+\Bh_{k-i}\ne \Ch_k$. Let $([i_0,i_1],k)$ be the unique level-$0$ segment containing $(i,k)$. Since $([i_0,i_1],k)$ is a level-$0$ segment, the values $A_i$ and $B_{k-i}$ are constant for all $i\in[i_0,i_1]$. Because $i\in[i_0,i_1]$, we have
    \[
        \Ah_{i_0}+\Bh_{k-i_0}=\Ah_i+\Bh_{k-i}\ne \Ch_k
    \]
    and
    \[
        A_{i_0}+B_{k-i_0}\equiv A_i+B_{k-i}\equiv C_k \pmod Q.
    \]
    Hence, $([i_0,i_1],k)$ is an active level-$0$ segment and thus belongs to $S_0$.
    
    Moreover, the above identities hold for every $i\in[i_0,i_1]$. For a fixed $k$, the level-$0$ segments form disjoint maximal intervals, so exactly $i_1-i_0+1$ indices $i\in[n]$ in this segment satisfy these conditions. Therefore, it is correct that the algorithm increments $s'_k$ by exactly $(i_1-i_0+1)$.
    
    For the running time, \cref{fact:num-segments-conv} implies that the number of level-$\lm$ segments is $\cO(n^{1+\mu}/2^{\lm})$. Thus, computing $S_{\lm}$ takes $\tcO(n^{1+\mu}/2^{\lm})= \tcO(n^{1+\mu}/M)= \hO(n^{1+\mu}/Q)$ time. For the refinement step, by the guarantee on $Q$, for each $\ell$ we have $|S_\ell|= \hO(n^{1+\mu}/Q)$, and each segment in $S_{\ell+1}$ generates only $\cO(1)$ subsegments. Since there are $\cO(\log M)=\cO(\log n)$ levels, the total time for this refinement step is $\hO(n^{1+\mu}/Q)$. Finally, we scan all segments in $S_0$ and perform a constant-time test and update for each segment, so Phase~2 takes $\hO(|S_0|)= \hO(n^{1+\mu}/Q)$ time.
\end{proof}

\subsection{Finding a Good Modulus \texorpdfstring{$Q$}{Q}}
We now construct a \emph{good modulus} $Q$, that is, an integer $Q$ with
$M \le Q \le Mn^{o(1)}$ such that, for every level $0 \le \ell \le \lm$,
the number of active level-$\ell$ segments is at most $\hO(n^{1+\mu}/Q)$. The idea is essentially the same as in \cref{sec:find-Q}.

Let $4 \le R \le n^{o(1)}$ be a parameter to be fixed later, and let $\cP$ be the set of primes in $[R/2,R]$. We take $Q$ to be a product of primes $p_1,p_2,\ldots,p_T$, where $p_t \in \cP$ for all $1 \le t \le T$. We choose these primes inductively: at step $t \ge 1$, we select a prime $p_t \in \cP$ and multiply it into the current
modulus $Q_{t-1}$, obtaining
\[
Q_t := \prod_{i=1}^t p_i, \qquad \text{where } Q_0 := 1.
\]

For the analysis, it is convenient to track the following quantities:
\begin{enumerate}
    \item Let $X_{\ell,t}$ denote the number of pairs $(([i_0,i_1],k),s)$, where
    $([i_0,i_1],k)$ is a level-$\ell$ segment and $s\in[-4\cdot 2^\ell,\,4\cdot 2^\ell]$ such that
    \[
        A_{i_0}+B_{k-i_0}-C_k \equiv s \pmod{Q_t}
        \quad\text{and}\quad
        A_{i_0}+B_{k-i_0}-C_k \ne s.
    \]

    \item Let $Y_{\ell,t}$ denote the number of pairs $(([i_0,i_1],k),s)$, where
    $([i_0,i_1],k)$ is a level-$\ell$ segment and $s\in[-4\cdot 2^\ell,\,4\cdot 2^\ell]$, such that
    \[
        A_{i_0}+B_{k-i_0}-C_k \equiv s \pmod{Q_t}.
    \]
\end{enumerate}

Our inductive invariant is that
\begin{equation}
\label{eq:Xlt-conv}
    X_{\ell,t} = n^{1+\mu}\cdot \cO\!\left(\frac{\log^2 n}{R}\right)^t
    \qquad \text{for all } 0 \le \ell \le \lm.
\end{equation}
Later, we will choose $R$ so that $X_{\ell,T} = \hO(n^{1+\mu}/Q_T)$. By the following lemma, $Q_T$ is then a good modulus.

\begin{lemma}
\label{lem:bound-active-seg-conv}
    For every $0 \le \ell \le \lm$, we have $|S_{\ell}(Q_T)| \le X_{\ell,T}$.
\end{lemma}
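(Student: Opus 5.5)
The plan is to mirror the proof of \cref{lem:bound-active-seg}: build an injection from $S_\ell(Q_T)$ into the set of pairs counted by $X_{\ell,T}$. Each active segment will be mapped to a pair consisting of itself and a suitable shift. Since distinct segments give distinct pairs, injectivity is automatic, and the whole task reduces to checking that the image pair actually satisfies both defining conditions of $X_{\ell,T}$.

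Fix $0 \le \ell \le \lm$ and a segment $([i_0,i_1],k) \in S_\ell(Q_T)$. By \cref{def:active-segment-conv}, we have $\Ah_{i_0}+\Bh_{k-i_0} \ne \Ch_k$, and there is some $s \in [-4\cdot 2^\ell, 4\cdot 2^\ell]$ with $A_{i_0}+B_{k-i_0}-C_k \equiv s \pmod{Q_T}$. We map the segment to the pair $(([i_0,i_1],k),s)$. The congruence condition in the definition of $X_{\ell,T}$ then holds by construction.

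It remains to show that $A_{i_0}+B_{k-i_0}-C_k \ne s$. This is the only step needing an argument. First check that $i_0$ and $k-i_0$ are valid indices: \cref{def:segments-conv} gives $[i_0,i_1] \subseteq [\max\{1,k-n\},\min\{n,k-1\}]$, so $i_0 \in [n]$ and $k-i_0 \in [n]$. Hence \cref{fact:big-diff-conv} applies, and the unequal high parts give $|A_{i_0}+B_{k-i_0}-C_k| \ge 7M/10$. On the other hand, the choice $2^{\lm} < M/10$ gives $|s| \le 4\cdot 2^\ell \le 4\cdot 2^{\lm} < 4M/10 < 7M/10$. So the two quantities cannot be equal.

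Therefore every active segment contributes a distinct pair counted by $X_{\ell,T}$, which yields $|S_\ell(Q_T)| \le X_{\ell,T}$. There is no real obstacle here. The only care needed is to confirm that \cref{fact:big-diff-conv} applies at the index $i_0$, which the range condition built into \cref{def:segments-conv} guarantees.
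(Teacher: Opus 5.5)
Your proposal is correct and follows the paper's proof essentially verbatim: you map each active segment to the pair formed by the segment and its witnessing shift $s$, then combine \cref{fact:big-diff-conv} ($|A_{i_0}+B_{k-i_0}-C_k|\ge 7M/10$) with $|s|\le 4\cdot 2^{\lm}<4M/10$ to rule out equality. Your extra check that $i_0$ and $k-i_0$ are valid indices, via the range condition in \cref{def:segments-conv}, is a small detail the paper leaves implicit.
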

\begin{proof}
    Let $([i_0,i_1],k)$ be an active level-$\ell$ segment with respect to $Q_T$. By \cref{def:active-segment-conv}, there exists $s\in[-4\cdot 2^\ell,\,4\cdot 2^\ell]$ such that $A_{i_0}+B_{k-i_0}-C_k \equiv s \pmod{Q_T}$ and $\Ah_{i_0}+\Bh_{k-i_0}-\Ch_k \ne 0$. We have $A_{i_0}+B_{k-i_0}-C_k \ne s$. Indeed $|A_{i_0}+B_{k-i_0}-C_k| \ge 7M/10$ by \cref{fact:big-diff-conv}, while $|s| \le 4\cdot 2^{\lm} < 4M/10$. Thus, each active segment contributes to $X_{\ell,T}$, and therefore $|S_{\ell}(Q_T)| \le X_{\ell,T}$.
\end{proof}

We now choose $p_t$ so that the invariant~\cref{eq:Xlt-conv} continues to hold. The base case $t=0$ holds for $Q_0:=1$ by \cref{fact:num-segments-conv}. Now assume~\cref{eq:Xlt-conv} holds for $t-1$ and that $Q_{t-1}$ has been fixed. For any prime $p\in\cP$, let $X_{\ell,t}(p)$ (resp., $Y_{\ell,t}(p)$) denote the value of $X_{\ell,t}$ (resp., $Y_{\ell,t}$) when $Q_t := Q_{t-1}\cdot p$. Our goal is to find a prime $p_t\in\cP$ such that~\cref{eq:Xlt-conv} holds at step $t$. Applying the following lemmas inductively yields the existence of the entire sequence $p_1,\ldots,p_T$.

\begin{lemma}
\label{lem:exists-pstar-conv}
    There exists $p^*\in\cP$ such that
    \begin{equation}
    \label{eq:p-star-conv}
        X_{\ell,t}(p^*) = X_{\ell,t-1}\cdot \cO\!\left(\frac{\log^2 n}{R}\right) \qquad\text{for all } 0\le \ell \le \lm.
    \end{equation}
    Moreover, for every $0\le \ell \le \lm$, if $p$ is chosen uniformly at random from $\cP$, then
    \[
        \Ex_{p\sim\cP}\!\bigl[X_{\ell,t}(p)\bigr] = X_{\ell,t-1}\cdot \cO\left(\frac{\log n}{R}\right).
    \]
\end{lemma}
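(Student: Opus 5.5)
The plan mirrors the proof of \cref{lem:exists-pstar}: first bound the expectation by linearity, then get a single prime good for all levels via Markov's inequality and a union bound over the $\lm+1$ levels.

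Fix a level $\ell$. Let $\cX_{\ell,t-1}$ denote the set of pairs $(([i_0,i_1],k),s)$ counted by $X_{\ell,t-1}$. A pair is counted in $X_{\ell,t}(p)$ exactly when it lies in $\cX_{\ell,t-1}$ and additionally $p$ divides
\[
\Delta := A_{i_0}+B_{k-i_0}-C_k-s.
\]
Hence
\[
X_{\ell,t}(p)=\sum_{\cX_{\ell,t-1}}\mathbf{1}[p\mid \Delta].
\]
For each such pair we have $\Delta\neq 0$ by the definition of $X_{\ell,t-1}$. Since all entries are $\cO(n^\mu)$ and $|s|\le 4\cdot 2^{\lm}=\cO(M)$, we get $|\Delta|=n^{O(1)}$.

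A nonzero integer of this size has at most $\log|\Delta|/\log(R/2)=\cO(\log n/\log R)$ prime divisors in $[R/2,R]$. By the prime number theorem (as recalled in the preliminaries), $|\cP|=\Theta(R/\log R)$. Using $\log R/\log(R/2)\le 2$ for $R\ge 4$, this gives
\[
\Pr_{p\sim\cP}[p\mid\Delta]=\cO(\log n/R).
\]
Linearity of expectation over $\cX_{\ell,t-1}$ then yields the stated expectation bound.

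For the existence of $p^*$, set $L:=\lm+1=\cO(\log n)$; this holds because $2^{\lm}=\Theta(M)$ and $M=n^{O(1)}$. By Markov's inequality, for each $\ell$ the probability that $X_{\ell,t}(p)$ exceeds $2L$ times its expectation bound $C_0 X_{\ell,t-1}\log n/R$ is at most $1/(2L)$. A union bound over the $L$ levels shows that a random $p$ fails some level with probability at most $1/2$. So some $p^*\in\cP$ satisfies
\[
X_{\ell,t}(p^*)\le 2LC_0\frac{\log n}{R}X_{\ell,t-1}=X_{\ell,t-1}\cdot\cO\!\left(\frac{\log^2 n}{R}\right)
\]
simultaneously for all $\ell$.

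There is no real obstacle, since the argument is a verbatim transfer of \cref{lem:exists-pstar}. The only points to check are these. First, the counted differences are nonzero, which holds by the definition of $X_{\ell,t-1}$. Second, their magnitudes are polynomial in $n$, which holds because the convolution arrays have entries $\cO(n^\mu)$, $C$ is indexed over $[2,2n]$, and $2^{\lm}<M/10$.
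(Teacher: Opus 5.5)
Your proposal is correct and follows the paper's argument essentially verbatim. The paper's proof likewise writes $X_{\ell,t}(p)$ as a sum of indicators $\mathbf{1}[p \mid \Delta]$ over $\cX_{\ell,t-1}$, and then transfers the prime-divisor counting, linearity of expectation, Markov, and union-bound steps from \cref{lem:exists-pstar}; one small remark is that if $p$ already divides $Q_{t-1}$, your ``counted exactly when $p \mid \Delta$'' is only an upper bound, but that direction is all the argument needs.
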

\begin{proof}[Proof sketch]
    The proof is almost identical to that of \cref{lem:exists-pstar}. For completeness, we briefly indicate the changes.

    Fix a level $\ell$. Let $\cX_{\ell,t-1}$ be the set of pairs $(([i_0,i_1],k),s)$ counted by $X_{\ell,t-1}$, that is, those satisfying
    \[
        A_{i_0}+B_{k-i_0}-C_k \equiv s \pmod{Q_{t-1}}
        \qquad\text{and}\qquad
        A_{i_0}+B_{k-i_0}-C_k \ne s,
    \]
    where $([i_0,i_1],k)$ is a level-$\ell$ segment and $s\in[-4\cdot 2^\ell,\,4\cdot 2^\ell]$. For a prime $p\in\cP$, such a pair is counted in $X_{\ell,t}(p)$ if and only if, additionally, $p$ divides $\Delta := A_{i_0}+B_{k-i_0}-C_k-s$. Thus,
    \[
        X_{\ell,t}(p)
        =\sum_{(([i_0,i_1],k),s)\in \cX_{\ell,t-1}}
        \mathbf{1}\left[p \mid \bigl(A_{i_0}+B_{k-i_0}-C_k-s\bigr)\right].
    \]
    The remainder of the argument---bounding $\Pr_{p\sim\cP}[p\mid \Delta]$, taking expectations, and applying Markov's inequality with a union bound over $\ell\in\{0,\ldots,\lm\}$---is identical to that in \cref{lem:exists-pstar}, yielding a prime $p^*\in\cP$ that satisfies \cref{eq:p-star-conv} and the claimed expectation.
\end{proof}

\begin{lemma}[Computing $Y_{\ell,t}(p)$]
\label{lem:compute_Ylt-conv}
We can compute $Y_{\ell,t}(p)$ for all primes $p \in \cP$ and all levels $0 \le \ell \le \lm$ in $\tcO(R^2 Q_{t-1} n)$ time.
\end{lemma}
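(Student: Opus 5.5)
We follow the proof of \cref{lem:compute_Ylt} and replace matrix multiplication with bivariate polynomial multiplication, as in \cref{lem:compute-s-conv}. Fix a prime $p\in\cP$ and a level $\ell$, and set $Q' := Q_{t-1}\cdot p \le RQ_{t-1}$. We will compute $Y_{\ell,t}(p)$ in $\tcO(Q' n)$ time. There are $|\cP|=\tcO(R)$ primes and $\lm+1=\cO(\log n)$ levels, so the total cost is $\tcO(R^2 Q_{t-1} n)$, as claimed.

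\textbf{Step 1: counting each segment once.} We count every level-$\ell$ segment $([i_0,i_1],k)$ exactly once, through its start index $i_0$. For a fixed $k$, index $i$ is the start of a segment if and only if one of the following holds:
\begin{itemize}
\item $i=\max\{1,k-n\}$, the first admissible index;
\item $\lfloor A_{i}/2^\ell\rfloor\ne\lfloor A_{i-1}/2^\ell\rfloor$;
\item $\lfloor B_{k-i}/2^\ell\rfloor\ne\lfloor B_{k-i+1}/2^\ell\rfloor$.
\end{itemize}
Set $j=k-i$. The first condition does not factor over $(i,j)$, so we handle it separately and directly: for each $k$ there is one such index, which gives $\cO(n)$ pairs in total. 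For the other two conditions, define indicators $I^A_i:=\mathbf 1[\lfloor A_i/2^\ell\rfloor\ne\lfloor A_{i-1}/2^\ell\rfloor]$ for $i\ge2$ (with $I^A_1:=0$), and $I^B_j:=\mathbf 1[\lfloor B_j/2^\ell\rfloor\ne\lfloor B_{j+1}/2^\ell\rfloor]$ for $j\le n-1$ (with $I^B_n:=0$).

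The pairs $(i,j)$ with $I^A_i=1$ or $I^B_j=1$ are counted by inclusion–exclusion. Work over $\mathcal R:=\mathbb F[x]/(x^{Q'}-1)$ and form the bivariate polynomials
\[
P_A=\sum_i x^{A_i\bmod Q'}y^i,\qquad P_A^{\mathrm{b}}=\sum_i I^A_i x^{A_i\bmod Q'}y^i,
\]
and define $P_B$ and $P_B^{\mathrm b}$ analogously. Then
\[
D:=P_A^{\mathrm b}P_B+P_AP_B^{\mathrm b}-P_A^{\mathrm b}P_B^{\mathrm b}.
\]
The coefficient $U_k(r)$ of $x^ry^k$ in $D$ is the number of indices $i$ with $k-i\in[n]$, $A_i+B_{k-i}\equiv r\pmod{Q'}$, and ($I^A_i=1$ or $I^B_{k-i}=1$). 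We then add the first-index starts by hand: for each $k$, let $i^*=\max\{1,k-n\}$; if neither indicator is set at $i^*$, increment $U_k((A_{i^*}+B_{k-i^*})\bmod Q')$ by one. If an indicator is set there, the pair was already counted. After this correction, $U_k(r)$ counts exactly the level-$\ell$ segments $([i_0,i_1],k)$ with $A_{i_0}+B_{k-i_0}\equiv r\pmod{Q'}$.

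\textbf{Step 2: weighting by shifts.} Define $W(r):=|\{s\in[-4\cdot2^\ell,4\cdot2^\ell]: s\equiv r\pmod{Q'}\}|$. Then
\[
Y_{\ell,t}(p)=\sum_{k=2}^{2n}\sum_{r=0}^{Q'-1}U_k(r)\,W\bigl((r-C_k)\bmod Q'\bigr).
\]
We compute the three bivariate products with FFT, with degree $Q'-1$ in $x$ and $2n$ in $y$, in $\tcO(Q'n)$ time. Computing the indicators, computing $W$, and evaluating the double sum take $\cO(Q'n)$ time.

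\textbf{Main obstacle.} The only delicate part is the boundary bookkeeping in Step 1. In the matrix setting, the segment-start condition split cleanly into a $B$-part and a $C$-part. Here it depends on both $i$ and $k-i$, and the first admissible index $\max\{1,k-n\}$ depends on $k$. The inclusion–exclusion over the two indicators and the explicit $\cO(n)$-time correction for the first index per $k$ handle this. The remaining care is to verify that the indicators at the array ends ($I^A_1$ and $I^B_n$) are set so that no segment is counted twice.
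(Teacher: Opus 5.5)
Your proposal is correct and follows the paper's proof: bivariate products over $\mathbb{F}[x]/(x^{Q'}-1)$, inclusion--exclusion over the two boundary indicators, weighting each residue by $W$, and the same $\tcO(R^2 Q_{t-1} n)$ accounting. The only difference is at the first admissible index. The paper sets $I^A_1 = I^B_n = 1$, which captures $i = \max\{1,k-n\}$ automatically, because an admissible $i$ equals $\max\{1,k-n\}$ exactly when $i=1$ or $k-i=n$. So that condition does factor over $(i,j)$, contrary to your remark, and your per-$k$ correction is unnecessary, though harmless.
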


\begin{proof}
    Fix a prime $p\in\cP$ and a level $\ell$, and let $Q':=Q_{t-1}\cdot p$. We give an algorithm to compute $Y_{\ell,t}(p)$ for modulus $Q'$ in $\tcO(Q'n)$ time. Since $\lm=\cO(\log n)$, $|\cP|=\cO(R)$, and $Q'\le Q_{t-1}R$, the total time over all $p\in\cP$ and $\ell$ is 
    \[
         \tcO\bigl(|\cP|\cdot \lm \cdot Q' n\bigr) = \tcO\bigl(R^2 Q_{t-1} n\bigr).
    \]

    Recall that a level-$\ell$ segment $([i_0,i_1],k)$ is a maximal interval of valid indices $i$ on which both $\lfloor A_i/2^\ell\rfloor$ and $\lfloor B_{k-i}/2^\ell\rfloor$ are constant. Thus, for a fixed $k$, the segment start indices $i_0$ are exactly those valid indices $i\in[\max\{1,k-n\},\min\{n,k-1\}]$ such that either $i=\max\{1,k-n\}$, or $\lfloor A_{i-1}/2^\ell\rfloor\ne \lfloor A_i/2^\ell\rfloor$, or $\lfloor B_{k-i+1}/2^\ell\rfloor\ne \lfloor B_{k-i}/2^\ell\rfloor$. More concretely, define a boundary indicator array $I^A$ of size $n$ by
    \[
        I^A_1:=1,\qquad
        I^A_i:=\mathbf{1}\!\left[\left\lfloor A_{i-1}/2^\ell\right\rfloor
        \ne \left\lfloor A_i/2^\ell\right\rfloor\right]
        \quad (i\ge 2),
    \]
    and a boundary indicator array $I^B$ of size $n$ by
    \[
        I^B_n:=1,\qquad
        I^B_j:=\mathbf{1}\!\left[\left\lfloor B_{j+1}/2^\ell\right\rfloor
        \ne \left\lfloor B_j/2^\ell\right\rfloor\right]
        \quad (j\le n-1).
    \]
    Then, for a fixed $k$, the segment start indices $i_0$ are exactly those indices $i\in[\max\{1,k-n\},\min\{n,k-1\}]$ such that $I^A_i=1$ or $I^B_{k-i}=1$. We handle this OR condition by inclusion--exclusion.

    Work over the ring $\mathcal{R}:=\mathbb{F}[x]/(x^{Q'}-1)$, where $\mathbb{F}$ has characteristic greater than $n$. Define the following polynomials in $\mathcal{R}[y]$:
    \begin{align*}
        P_A^{\mathrm{all}}(x,y) &:= \sum_{i=1}^n x^{A_i\bmod Q'}\,y^i,
        &
        P_A^{\mathrm{bdry}}(x,y) &:= \sum_{i=1}^n I^A_i\,x^{A_i\bmod Q'}\,y^i,\\
        P_B^{\mathrm{all}}(x,y) &:= \sum_{j=1}^n x^{B_j\bmod Q'}\,y^j,
        &
        P_B^{\mathrm{bdry}}(x,y) &:= \sum_{j=1}^n I^B_j\,x^{B_j\bmod Q'}\,y^j.
    \end{align*}
    Compute the three products in $\mathcal{R}[y]$:
    \[
        P^{A}:=P_A^{\mathrm{bdry}}\cdot P_B^{\mathrm{all}},\qquad
        P^{B}:=P_A^{\mathrm{all}}\cdot P_B^{\mathrm{bdry}},\qquad
        P^{AB}:=P_A^{\mathrm{bdry}}\cdot P_B^{\mathrm{bdry}}.
    \]
    For any $k\in[2,2n]$ and $r\in\{0,\ldots,Q'-1\}$, let $U_k(r)$ be the coefficient of $x^r y^k$ in the polynomial $P^{A}+P^{B}-P^{AB}$. Then $U_k(r)$ equals the number of segment start indices $i_0\in[n]$ such that $A_{i_0}+B_{k-i_0}\equiv r\pmod{Q'}$ and either $I^A_{i_0}=1$ or $I^B_{k-i_0}=1$. Next define
    \[
        W(r):=\left|\left\{s\in[-4\cdot 2^\ell,\,4\cdot 2^\ell]:\ s\equiv r \pmod{Q'}\right\}\right| \qquad \text{for all } r\in\{0,\ldots,Q'-1\}.
    \]
    A level-$\ell$ segment $([i_0,i_1],k)$ contributes to $Y_{\ell,t}(p)$ if and only if
    \[
        A_{i_0}+B_{k-i_0}-C_k\equiv s \pmod{Q'}.
    \]
    Equivalently, if $r\equiv A_{i_0}+B_{k-i_0}\pmod{Q'}$, then $s\equiv r-C_k\pmod{Q'}$. Thus, grouping by $r$, the contribution of a fixed $k$ is
    \[
        \sum_{r=0}^{Q'-1} U_k(r)\cdot W\!\bigl((r-C_k)\bmod Q'\bigr),
    \]
    and therefore
    \[
        Y_{\ell,t}(p) =
        \sum_{k=2}^{2n}\ \sum_{r=0}^{Q'-1} U_k(r)\cdot W\!\bigl((r-C_k)\bmod Q'\bigr).
    \]

    For the running time, each polynomial product takes $\tcO(Q'n)$ time, and we compute only $\cO(1)$ such products. Computing $W$ costs $\cO(Q')$, and evaluating the double sum costs $\cO(Q'n)$. Hence, for fixed $p$ and $\ell$, the total time is $\tcO(Q'n)$.
\end{proof}

\begin{lemma}
\label{lem:finding-pt-conv}
    We can find a prime $p_t\in\cP$ satisfying \cref{eq:Xlt-conv}; that is $X_{\ell,t}(p_t) = n^{1 + \mu} \cdot \cO(\log^2 n/R)^t$ in $\tcO\bigl(R^2 Q_{t-1} n\bigr)$ time.
\end{lemma}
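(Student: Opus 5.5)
The plan is to mirror the proof of \cref{lem:finding-pt} verbatim, replacing each ingredient with its convolution analogue. First, I will run the algorithm of \cref{lem:compute_Ylt-conv} to obtain $Y_{\ell,t}(p)$ for every $p\in\cP$ and every level $0\le\ell\le\lm$. For each $\ell$, I set $Y^*_{\ell,t}:=\min_{p\in\cP}Y_{\ell,t}(p)$. I then choose $p_t$ minimizing
\[
\Phi(p):=\max_{0\le\ell\le\lm}\bigl\{Y_{\ell,t}(p)-Y^*_{\ell,t}\bigr\}.
\]
The running time is dominated by \cref{lem:compute_Ylt-conv}, namely $\tcO(R^2Q_{t-1}n)$. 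The remaining minimizations cost only $\cO(|\cP|\lm)$.

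For correctness, the key observation is the $p$-independent offset. Let $Z_\ell$ be the number of pairs $(([i_0,i_1],k),s)$ with $s\in[-4\cdot2^\ell,4\cdot2^\ell]$ and $A_{i_0}+B_{k-i_0}-C_k=s$. Every such pair is congruent modulo any modulus, so it is counted in $Y_{\ell,t}(p)$ but not in $X_{\ell,t}(p)$. Directly from the definitions, this gives $X_{\ell,t}(p)=Y_{\ell,t}(p)-Z_\ell$ for all $p$. Consequently $Y^*_{\ell,t}\ge Z_\ell$, and $Y^*_{\ell,t}-Z_\ell=X^*_{\ell,t}:=\min_p X_{\ell,t}(p)$.

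Now take $p^*$ from \cref{lem:exists-pstar-conv}. Using $Y^*_{\ell,t}\ge Z_\ell$ together with the inductive hypothesis \cref{eq:Xlt-conv} for $t-1$, we get
\[
\Phi(p_t)\le\Phi(p^*)\le\max_\ell X_{\ell,t}(p^*)=n^{1+\mu}\cdot\cO(\log^2n/R)^t.
\]
The expectation bound in \cref{lem:exists-pstar-conv} gives $X^*_{\ell,t}\le\Ex_p[X_{\ell,t}(p)]=n^{1+\mu}\cdot\cO(\log^2 n/R)^t$. Combining the two bounds,
\[
X_{\ell,t}(p_t)=(Y^*_{\ell,t}-Z_\ell)+(Y_{\ell,t}(p_t)-Y^*_{\ell,t})\le X^*_{\ell,t}+\Phi(p_t),
\]
which has the claimed form for every $\ell$.

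The only real subtlety, and the step I expect to need care, is keeping the hidden constants uniform across the induction. The $\cO(\cdot)^t$ notation must mean that a single absolute constant $C_1$ is raised to the $t$-th power. Adding two quantities of the form $(C_1\log^2 n/R)^t n^{1+\mu}$ must not inflate this constant in a way that compounds over $t$. To handle this, I would state the invariant explicitly as $X_{\ell,t}\le n^{1+\mu}(C\log^2n/R)^t$ with $C$ chosen at least twice the constant from \cref{lem:exists-pstar-conv}. The sum of the two terms, each at most $\tfrac12 C\cdot(\log^2 n/R)\cdot X_{\ell,t-1}$-scale, then stays within the same $C$. This is exactly the same bookkeeping as in \cref{lem:finding-pt}.
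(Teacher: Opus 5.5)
Your proposal is correct and follows the paper's proof. The paper's proof is only a sketch deferring to \cref{lem:finding-pt}: compute all $Y_{\ell,t}(p)$ via \cref{lem:compute_Ylt-conv}, choose $p_t$ minimizing $\Phi$, and combine the $p$-independent offset $Z_\ell$ (so $X_{\ell,t}(p)=Y_{\ell,t}(p)-Z_\ell$) with both parts of \cref{lem:exists-pstar-conv}, which is exactly what you do. Your explicit level-uniform invariant $X_{\ell,t}\le n^{1+\mu}(C\log^2 n/R)^t$, with $C$ large relative to the constants of \cref{lem:exists-pstar-conv}, is a valid way to make the $\cO(\cdot)^t$ notation rigorous. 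Even a constant-factor loss per step would only cost $2^{\cO(T)}=n^{o(1)}$.
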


\begin{proof}[Proof sketch]
    We run the algorithm in \cref{lem:compute_Ylt-conv} to compute $Y_{\ell,t}(p)$ for all primes $p\in\cP$ and all levels $0\le \ell\le \lm$. For each $\ell$, define $Y^*_{\ell,t}:=\min_{p\in\cP} Y_{\ell,t}(p)$, and choose $p_t\in\cP$ that minimizes
    \[
        \Phi(p):=\max_{0\le \ell \le \lm}\bigl\{Y_{\ell,t}(p)-Y^*_{\ell,t}\bigr\}.
    \]
    The same argument as in \cref{lem:finding-pt} shows that this choice of $p_t$ ensures \cref{eq:Xlt-conv}. Moreover, the running time is dominated by computing all values $Y_{\ell,t}(p)$, which takes $\tcO\bigl(R^2 Q_{t-1} n\bigr)$ time by \cref{lem:compute_Ylt-conv}.
\end{proof}

It remains to set the parameters so that the above procedure yields a good modulus.

\begin{lemma}[Computing a Good Modulus]
\label{lem:compute-Q-conv}
We can compute a good modulus $Q$ in $\hO(Mn)$ time.
\end{lemma}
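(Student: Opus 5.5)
We will mirror the proof of \cref{lem:compute-Q} essentially verbatim, replacing \cref{lem:finding-pt} by \cref{lem:finding-pt-conv} and the factor $n^{a+b+\mu}$ by $n^{1+\mu}$. Concretely, we fix $R=2^{\Theta(\sqrt{\log n/\log\log n})}$, so that $4\le R\le n^{o(1)}$ for large $n$. Starting from $Q_0=1$, which satisfies \cref{eq:Xlt-conv} by \cref{fact:num-segments-conv}, we apply \cref{lem:finding-pt-conv} repeatedly to obtain primes $p_1,p_2,\ldots\in\cP$. We stop at the first index $T$ with $Q_T\ge M$. Since every $p_t\ge R/2$, this gives $T=\cO(\log M/\log R)=\cO(\log n/\log R)$. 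Minimality of $T$ gives $Q_{T-1}<M$, hence $M\le Q_T<MR=Mn^{o(1)}$.

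Next we verify goodness. By the invariant, $X_{\ell,T}=n^{1+\mu}\cdot(c\log^2 n/R)^T$ for some absolute constant $c$ and every $\ell$. We have $c^T\le 2^{\cO(\log n/\log R)}=n^{o(1)}$, and $(\log n)^{2T}=n^{o(1)}$ by \cref{fact:subpoly}. Since $Q_T\le R^T$, we get $R^{-T}\le 1/Q_T$, so $X_{\ell,T}\le n^{1+\mu+o(1)}/Q_T$. Then \cref{lem:bound-active-seg-conv} gives $|S_\ell(Q_T)|\le X_{\ell,T}=\hO(n^{1+\mu}/Q_T)$ for all $0\le\ell\le\lm$. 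Hence $Q_T$ is a good modulus in the sense of \cref{def:good-modulus-conv}.

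For the running time, step $t$ costs $\tcO(R^2Q_{t-1}n)$ by \cref{lem:finding-pt-conv}. Since $Q_{t-1}\le Q_{T-1}<M$ for every $t\le T$, the total is $\tcO(TR^2Mn)$. We have $TR^2=n^{o(1)}$, so the total is $\hO(Mn)$. The one point needing care is that the hidden constant in the $\cO(\log^2 n/R)$ factor of the invariant compounds over $T$ steps. The bound $c^T=n^{o(1)}$ above covers this, because $T=o(\log n)$ for our choice of $R$; I expect this to be the only nontrivial check.
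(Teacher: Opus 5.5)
Your proposal is correct and follows the paper's proof, which simply invokes the argument of \cref{lem:compute-Q}: stop at the first $T$ with $Q_T\ge M$, so that $M\le Q_T<MR$, and sum the per-step costs $\tcO(R^2Q_{t-1}n)$ using $Q_{t-1}<M$. Your explicit checks, that the compounded constant satisfies $c^T=2^{\cO(T)}=n^{o(1)}$ because $T=o(\log n)$, and that the running time is $\tcO(TR^2Mn)$, fill in details the paper's proof sketch leaves implicit.
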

\begin{proof}
    By inductively applying \cref{lem:finding-pt-conv} for $T$ steps, we obtain primes $p_1,\ldots,p_T\in\cP$ and a modulus $Q_T:=\prod_{t=1}^T p_t$ such that
    \[
        X_{\ell,T}= n^{1+\mu}\cdot \cO\!\left(\frac{\log^2 n}{R}\right)^T \qquad\text{for all } 0\le \ell\le \lm.
    \]
    We take $T=\Theta(\log M/\log R)$ and $R=2^{\Theta\!\left(\sqrt{\log n/\log\log n}\right)}$. Then $M \le Q \le M n^{o(1)}$ and 
    \[
        X_{\ell,T}
        \le n^{1+\mu}\cdot \cO\!\left(\frac{\log^2 n}{R}\right)^T
        \le \frac{n^{1+\mu+o(1)}}{Q_T}
        \qquad\text{for all } 0\le \ell\le \lm,
    \]
    by the same argument as in \cref{lem:compute-Q}. Hence, $Q_T$ is a good modulus.
\end{proof}

\subsection{Combining the Results}

\begin{theorem}
\label{thm:exact-tri-mod-M-conv}
    There is a deterministic algorithm that solves \cref{prob:exact-tri-mod-M-conv} in time $\hO( Mn +n^{1+\mu}/M)$.
\end{theorem}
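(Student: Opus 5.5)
The plan is to combine the subroutines of this section, exactly as in the proof of \cref{thm:exact-tri-mod-M}. Given an instance $(A,B,C)$ of \cref{prob:exact-tri-mod-M-conv}, we first invoke \cref{lem:compute-Q-conv} to obtain a good modulus $Q$ in the sense of \cref{def:good-modulus-conv}. This takes $\hO(Mn)$ time, and the resulting $Q$ satisfies $M\le Q\le Mn^{o(1)}$. One detail to check here: the running-time accounting in the proof of \cref{lem:compute-Q-conv} was only sketched. It should follow as in \cref{lem:compute-Q}. Taking $T$ to be the first index with $Q_T\ge M$ gives $Q_{t-1}<M$ for every step, so the total cost of the $T$ applications of \cref{lem:finding-pt-conv} is $\tcO(TR^2Mn)=\hO(Mn)$, since $TR^2=n^{o(1)}$.

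Next, we compute $s_k$ for all $k\in[2,2n]$ by \cref{lem:compute-s-conv}, which takes $\tcO(Qn)=\hO(Mn)$ time because $Q\le Mn^{o(1)}$. We then compute $s'_k$ for all $k$ by \cref{lem:compute-sprime-conv}. Its proof uses that $Q$ is good, namely that $|S_\ell(Q)|\le\hO(n^{1+\mu}/Q)$ for every level, and it runs in $\hO(n^{1+\mu}/Q)\le\hO(n^{1+\mu}/M)$ time since $Q\ge M$. Finally, for each $k$ we output \textsc{Yes} iff $s_k>s'_k$; this takes $\cO(n)$ time, and correctness is exactly \cref{lem:s-vs-sprime-conv}.

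Summing the three phases gives $\hO(Mn+n^{1+\mu}/M)$. The argument is purely compositional, so there is no real obstacle. The only point requiring care is confirming that the bound on $|S_\ell(Q)|$ guaranteed by the good-modulus construction (\cref{lem:bound-active-seg-conv} together with the invariant \cref{eq:Xlt-conv}) is what feeds the refinement cost in \cref{lem:compute-sprime-conv}, and that the level-$\lm$ enumeration there costs $\tcO(n^{1+\mu}/2^{\lm})=\tcO(n^{1+\mu}/M)$ because $2^{\lm}\ge M/20$.
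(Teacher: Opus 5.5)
Your proposal is correct and matches the paper's proof. Both compute a good modulus via \cref{lem:compute-Q-conv}, then $s_k$ via \cref{lem:compute-s-conv} and $s'_k$ via \cref{lem:compute-sprime-conv}, and output \textsc{Yes} iff $s_k>s'_k$, with correctness from \cref{lem:s-vs-sprime-conv}; your explicit cost accounting for building $Q$ (using $Q_{t-1}<M$ and $TR^2=n^{o(1)}$) correctly supplies the detail that the paper only sketches by pointing to \cref{lem:compute-Q}.
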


\begin{proof}
    We first compute a good modulus $Q$ via \cref{lem:compute-Q-conv} in time $\hO(Mn)$. We then compute $s_{k}$ for all $k \in [2, 2n]$ via \cref{lem:compute-s-conv} in time $ \tcO(Qn) = \hO(Mn)$. Finally, we compute $s'_{k}$ for all $k \in [2, 2n]$ via \cref{lem:compute-sprime-conv} in time $\hO\left(n^{1+\mu}/Q \right) = \hO\left(n^{1+\mu}/M \right)$. For each $k \in [2, 2n]$, we output \textsc{Yes} if and only if $s_{k} > s'_{k}$ and correctness follows from \cref{lem:s-vs-sprime-conv}. The total running time is $\hO( Mn +n^{1+\mu}/M)$.
\end{proof}

\begin{proof}[Proof of \cref{thm:main-conv}]
    By the reductions in \cref{lem:reduction-to-verification-conv}, it suffices to solve \cref{prob:exact-tri-mod-M-conv} efficiently. By \cref{thm:exact-tri-mod-M-conv}, there is a deterministic algorithm that solves \cref{prob:exact-tri-mod-M-conv} in time $\hO( Mn +n^{1+\mu}/M)$. Take $M=\Theta\!\left(n^{\mu/2}\right)$ by setting $d = \mu/2$. Then, the two terms balance, yielding the claimed time bound $\hO\!\left(n^{\mu/2 + 1}\right)$.
\end{proof}

\section*{Acknowledgments} 
The authors would like to thank Shyan Akmal and Adam Polak for pointing out issues in a previous version of this paper.

%% file: appendix.tex
\section{Derandomizing the +2 APSP algorithm  \texorpdfstring{of \cite{additiveapsp2026}}{}}

\label{appendix:apsp}

The only place where \cite{additiveapsp2026} used randomness is in the following \cref{lem:minplusgroups}.
Let $\MM(a,b,c)$ denote the time complexity of multiplying an $a\times b$ matrix with an $b\times c$ matrix. Using this notation, we have $\MM(n^{\gamma_1}, n^{\gamma_2},n^{\gamma_3}) = n^{\omega(\gamma_1,\gamma_2,\gamma_3)+o(1)}$.

\begin{lemma}[{\cite[Lemma 3.4]{additiveapsp2026}}]
   \label{lem:minplusgroups}
   Let integer parameter $L\ge 1$.
   Let $A\in \Z^{hd\times s}, B\in \Z^{s\times hd}$ be input matrices with entries in $[-U,U]$. 
   Partition the indices $i\in \{1,2,\dots,hd\}$ into $h$ contiguous groups each of size $d$ (so that $i$ belongs to the $\lceil i/d\rceil$-th group).

   Suppose $|A_{i,k}-A_{i',k}|\le L$ holds for all $k\in [s]$ and all $i,i'$ in the same group, and 
    $|B_{k,j}-B_{k,j'}|\le L$ holds for all $k\in [s]$ and all $j,j'$ in the same group.

Then, for any parameter $q\ge 1$, we can compute the Min-Plus product $A\star B$ by a randomized algorithm in time complexity
      \[\tcO\left (h^2 s + qL \cdot  \MM(hd,s,hd) + h^2 L\cdot  \MM(d,s/q,d)\right )\cdot \polylog(U). \]
   \end{lemma}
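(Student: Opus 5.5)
The plan is to follow the verification-plus-modular-counting strategy of \cref{sec:product}, using a \emph{random} prime in place of a deterministically constructed modulus. The group structure plays the role that segments played there.

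\textbf{Reduction to verification.} First I would reduce to a verification problem by recursive halving, exactly as in \cref{lem:reduction-to-verification}. Halving entries preserves the group-closeness hypothesis, with $L$ replaced by roughly $L/2+1$, which is still $O(L)$. Given the recursive answer $C'$, each $(A\star B)_{i,j}$ lies in $\{2C'_{i,j},2C'_{i,j}+1,2C'_{i,j}+2\}$. The recursion depth is $O(\log U)$, which accounts for the $\polylog(U)$ factor. Each candidate matrix $C$ also has group structure: for $i,i'$ in one group and $j,j'$ in one group, $|C_{i,j}-C_{i',j'}|=O(L)$. This holds because each entry of $A\star B$ changes by at most $2L$ when $(i,j)$ moves within a group pair. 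So it suffices to decide, for every $(i,j)$, whether some $k$ satisfies $A_{i,k}+B_{k,j}=C_{i,j}$.

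\textbf{Modular counting and sparse active triples.} Pick a random prime $Q\in[qL,2qL]$. Using polynomial matrix multiplication over $\mathbb{F}[x]/(x^Q-1)$ as in \cref{lem:compute-s}, compute for every $(i,j)$ the number $s_{i,j}$ of $k$ with $A_{i,k}+B_{k,j}\equiv C_{i,j}\pmod Q$. This costs $\tcO(qL\cdot\MM(hd,s,hd))$. It remains to subtract the spurious matches, those with $A_{i,k}+B_{k,j}-C_{i,j}\equiv 0\pmod Q$ but $\neq 0$. For a group pair $(g,g')$ with representatives $i_g,j_{g'}$, define
\[
\Delta_{g,g',k}:=A_{i_g,k}+B_{k,j_{g'}}-C_{i_g,j_{g'}}.
\]
By the group hypothesis, $A_{i,k}+B_{k,j}-C_{i,j}=\Delta_{g,g',k}+\sigma$ for some $|\sigma|\le cL$, with $c$ an absolute constant. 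Call $(g,g',k)$ \emph{active} if $\Delta_{g,g',k}\bmod Q$ lies in $[-cL,cL]$ and $|\Delta_{g,g',k}|>cL$. Every spurious match lies in an active triple. In an inactive triple with $|\Delta_{g,g',k}|\le cL$, a congruence is a true equality, because $Q>2cL$ once $q$ exceeds a suitable constant. The number of triples is $h^2s$, and they can be enumerated in $O(h^2s)$ time. For a fixed triple with $\Delta\neq s'$, a random prime of size $\approx qL$ divides $\Delta-s'$ with probability $\tcO(1/(qL))\cdot\log U$. A union bound over the $O(L)$ shifts $s'$ then gives an expected number of active triples of $\tcO(h^2s/q)$.

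\textbf{Handling active triples.} For each group pair $(g,g')$, let $K_{g,g'}$ be its set of active $k$. For the $d\times d$ block, count pairs $(i,j)$ and $k\in K_{g,g'}$ with $A_{i,k}+B_{k,j}\equiv C_{i,j}\pmod Q$ and $A_{i,k}+B_{k,j}\ne C_{i,j}$. After subtracting representatives, the relevant part of the quantity $A_{i,k}+B_{k,j}-C_{i,j}-\Delta_{g,g',k}$ has range $O(L)$. Since $k$ is active, the congruence forces $A_{i,k}+B_{k,j}-C_{i,j}$ to equal one specific nonzero value determined by $\Delta\bmod Q$ and the shift. The count therefore becomes a bounded-range counting product, computable via polynomials of degree $O(L)$ in $O(L\cdot\MM(d,|K_{g,g'}|,d))$ time. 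To match the stated bound, I would cut each $K_{g,g'}$ into chunks of size $s/q$. The number of chunks is at most $h^2+(\#\text{active})\cdot q/s=O(h^2)$ in expectation, so this step costs $\tcO(h^2L\cdot\MM(d,s/q,d))$. Subtracting these counts from $s_{i,j}$, as in \cref{lem:s-vs-sprime}, decides existence of a witness.

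\textbf{Main obstacle.} The delicate step is making the bounded-range reduction inside an active triple exact. The low-order offsets $A_{i,k}-A_{i_g,k}$, $B_{k,j}-B_{k,j_{g'}}$ and $C_{i,j}-C_{i_g,j_{g'}}$ must be encoded as polynomial exponents so that one degree-$O(L)$ product simultaneously counts matches modulo $Q$ and excludes true equalities. It also needs to be checked that the expectation bound converts into the stated running time, for instance by restarting whenever the number of active triples exceeds a constant times its expectation.
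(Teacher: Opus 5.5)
Your proof is correct, but it takes a different route from the source. This paper does not reprove the lemma (it is quoted from \cite{additiveapsp2026}); its appendix only sketches the original's randomized structure.

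\textbf{The original route.} It rounds group minima down to multiples of $L$ to get $A'\in\Z^{h\times s}$ and $B'\in\Z^{s\times h}$. It computes the coarse product $C'=A'\star B'$ by brute force in $O(h^2s)$ time and defines false positives relative to $C'$, with a random prime $p=\Theta(q\log U)$ acting on the rounded values. It then requires that \emph{each} group pair have at most $s/q$ false positives. A single prime achieves this only with constant probability per pair, so $O(\log h)$ independent moduli are drawn so that every pair is successful for at least one of them.

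\textbf{Your route.} You localize via the halving recursion plus verification, so your reference values are the candidate entries $C_{i_g,j_{g'}}$ rather than $C'$. You apply a prime $Q\approx qL$ directly to the original values. Most importantly, you control only the \emph{total} number of active triples. You convert it into the $h^2L\cdot\MM(d,s/q,d)$ term by cutting each $K_{g,g'}$ into chunks of size $s/q$, giving at most $h^2+(\#\text{active})\,q/s$ chunks.

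\textbf{What each buys.} Chunking is what lets you use one modulus per verifier call, with an $O(h^2s)$ count-and-restart turning the expectation bound into a running-time bound. The price is an extra $O(\log U)$-depth recursion with three verifier calls per level, which the $\polylog(U)$ factor absorbs. The original instead gets its $O(L)$-window localization directly from the brute-force $C'$.

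\textbf{The step you flag as delicate.} It is immediate. For active $k$, $|\Delta_{g,g',k}|>cL\ge|\sigma|$, so every match in the block is automatically spurious. The congruence then pins $\sigma$ to the unique $r_k\in[-cL,cL]$ with $r_k\equiv-\Delta_{g,g',k}\pmod Q$. Shifting the exponent encoding $A_{i,k}-A_{i_g,k}$ by $-r_k$ gives an ordinary degree-$O(L)$ counting product. You read its coefficient at $C_{i,j}-C_{i_g,j_{g'}}$, up to a fixed offset.
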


   Now we derandomize \cref{lem:minplusgroups}, thereby derandomizing the $+2$-approximate APSP algorithm of \cite{additiveapsp2026} in $O(n^{2.22548})$ time. For convenience, assume without loss of generality that $U\ge hds$, so that $\polylog(hds)$ can be suppressed by the $U^{o(1)}$ notation.
   \begin{lemma}
      \cref{lem:minplusgroups} can be derandomized with only $U^{o(1)}$-factor slowdown.
   \end{lemma}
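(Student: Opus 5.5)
The plan is to open up the randomized proof of \cref{lem:minplusgroups} and apply the same two ingredients as in \cref{sec:find-Q}. First, replace the random prime by a product of small primes chosen one at a time. Second, select each prime by comparing computable counts $Y$ that contain the unknown count $X$ of spurious congruences plus a $p$-independent offset $Z$.

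\textbf{Step 1: reconstruct the randomized algorithm.} I expect it to have the following shape. Fix representatives $r_g$ for each row group and $r'_{g'}$ for each column group. Pick a random prime $p$ of magnitude roughly $qL$. Using polynomial matrix multiplication over $\mathbb{F}[x]/(x^{p}-1)$, compute residue counts; this is the $qL\cdot\MM(hd,s,hd)$ term. For each group pair $(g,g')$, the only remaining $k\in[s]$ that must be handled explicitly are the \emph{spurious} ones. These are the $k$ for which $A_{r_g,k}+B_{k,r'_{g'}}$ minus a candidate target is $\equiv\sigma\pmod p$ for some shift $\sigma\in[-cL,cL]$, although the integer difference is not in $[-cL,cL]$. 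On those $k$, a bounded-difference product of size $d\times(\#\text{spurious})\times d$ is run in $\tcO(L\cdot\MM)$ time. A random $p$ makes the expected number of spurious $k$ per pair about $s/q$, which gives the $h^2L\cdot\MM(d,s/q,d)$ term. I will verify this reading against \cite{additiveapsp2026} and check that randomness enters only here.

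\textbf{Step 2: reduce to a total-count guarantee.} Let $X$ be the total number of triples (group pair, $k$, shift $\sigma$) that are spurious modulo $Q$. I aim only for $X\le U^{o(1)}h^2 s/q$; there is no need to bound each pair separately. For each pair $(g,g')$ with $s_{gg'}$ spurious indices, split them into $\lceil s_{gg'}q/s\rceil$ chunks of size at most $s/q$. Each chunk costs $L\cdot\MM(d,s/q,d)$, and the number of chunks is at most $h^2+qX/s=U^{o(1)}h^2$. This reproduces the third term up to $U^{o(1)}$.

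\textbf{Step 3: choose the primes deterministically.} Take $R=2^{\Theta(\sqrt{\log U/\log\log U})}$, let $\cP$ be the set of primes in $[R/2,R]$, and build $Q=p_1\cdots p_T\in[qL,\,qL\cdot R]$. Define $X_t(p)$ as the spurious count for modulus $Q_{t-1}p$. Let $Y_t(p)$ count all congruent triples, so that $Y_t(p)=X_t(p)+Z$, where $Z$ counts exact equalities and does not depend on $p$.
\begin{itemize}
\item Computing $Y_t(p)$ uses the method of \cref{lem:compute_Ylt}. Multiply the $h\times s$ representative matrix of $A$ by the $s\times h$ representative matrix of $B$ as polynomial matrices modulo $x^{Q_{t-1}p}-1$. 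Then weight each residue class by its number of admissible shifts. This costs $\hO(Q_{t-1}R\cdot\MM(h,s,h))\le\hO(R\,qL\cdot\MM(hd,s,hd))$.
\item Summing over the $\tcO(R)$ primes and $T=\cO(\log U/\log R)$ steps gives a total slowdown of $U^{o(1)}$, by the geometric growth of $Q_{t-1}$ and \cref{fact:subpoly}.
\item At each step, choose $p_t$ minimizing $Y_t(p)$. Since $Z$ is the same for every $p$, this also minimizes $X_t(p)$.
\item An averaging argument over the $\cO(\log U/\log R)$ prime factors of each nonzero difference $\Delta$ with $|\Delta|\le\poly(U)$ gives $X_t(p_t)\le X_{t-1}\cdot\cO(\log U/R)$. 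Since $X$ is a single quantity, no union bound over levels is needed here, which makes this simpler than \cref{lem:finding-pt}. Iterating yields $X_T\le h^2s\cdot(\log U)^{\cO(T)}/Q=U^{o(1)}h^2s/(qL)$, which is even stronger than Step 2 requires.
\end{itemize}

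\textbf{Main obstacle.} The key risk is Step 1. The $+2$ algorithm may use the random prime in a second role, such as a high/low split or a multi-level shift structure like the segments of \cref{sec:product}. It may also require a per-pair rather than a total guarantee. If it needs several shift scales, I will track one count $X_\ell$ per scale and select primes with the $\max_\ell\{Y_\ell(p)-Y_\ell^*\}$ rule of \cref{lem:finding-pt}, which costs only an extra $\polylog U$ factor. If it needs a per-pair bound, I will use the chunking of Step 2 to argue that a total bound suffices.
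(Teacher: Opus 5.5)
Your argument is correct, but it is organized differently from the paper's.

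\textbf{The paper's route.} The paper keeps the original algorithm's \emph{per-pair} requirement: a modulus $m$ is successful for $(\hat i,\hat j)$ if $|F^{(m)}_{\hat i,\hat j}|\le s/q$. It covers all pairs greedily with $U^{o(1)}$ moduli. Each modulus is a product of $I=\lceil\log_M q\rceil$ primes from $[CM\log U,2CM\log U]$, where $M=\exp(\sqrt{\log U})$. Each prime is chosen so that at least $90\%$ of the surviving pairs keep $|F|\le s/M^i$. Hence one modulus is successful on a $0.9^I\ge U^{-o(1)}$ fraction of the uncovered pairs.

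\textbf{Your route.} You build a single modulus and control only the \emph{total} number of false positives. You then convert this into the per-pair cost by splitting each list $F_{\hat i,\hat j}$ into blocks of size $s/q$. That step is sound provided the false-positive correction in the original algorithm is a computation summed over $k\in F_{\hat i,\hat j}$. A subtraction of counts is such a computation, and it is what the $h^2L\cdot\MM(d,s/q,d)$ term pays for. Any such computation splits over a partition of $F_{\hat i,\hat j}$. You should confirm this form when you open the original proof.

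\textbf{Comparison.} Your route needs one modulus instead of $U^{o(1)}$ of them, and a one-line selection rule with no covering argument. The paper's route uses the original algorithm unchanged apart from the number of moduli. Your $Y$/$Z$ offset trick and the polynomial-matrix computation of $Y_t(p)$ are unnecessary here. The false positives live on the group-level instance $A'$, $B'$, $C'=A'\star B'$, which has only $h^2s$ triples. So, as the paper does, you can evaluate $X_t(p)$ directly by brute force in $\cO(h^2s)$ time per candidate prime.

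\textbf{Two details of your reconstruction to fix.}
\begin{itemize}
\item The original prime has size $\Theta(q\log U)$. It acts on the rounded values $A'_{\hat i,k}=\lfloor \frac1L\min_{(\hat i-1)d<i\le \hat i d}A_{i,k}\rfloor$ with shifts in $\{-3,\ldots,3\}$. The factor $L$ in the first term of the running time is therefore not the modulus. Your product of primes should be within a $U^{o(1)}$ factor of $q$, not of $qL$.
\item Your final bound should be $X_T\le U^{o(1)}h^2s/q$, not $U^{o(1)}h^2s/(qL)$. The initial count $X_0$ already carries a factor equal to the number of shifts; in your own reading there are $\Theta(L)$ of them. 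So the bound is exactly what your chunking step needs, not something stronger.
\end{itemize}
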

   \begin{proof}[Proof]
   The only randomized part in the original proof of \cite{additiveapsp2026} is to pick a random prime modulus $p \in \Theta(q\log U)$. Below we describe how to derandomize this step using the standard recipe which finds a good modulus by iteratively deciding its small prime factors \cite{DBLP:conf/stoc/ChanL15}.
   We omit other parts of the original proof that are unrelated to derandomization.

    Following the original notations, use        the shorthand $\hat i = \lceil i/d\rceil \in [h]$ and $\hat j = \lceil  j/d\rceil\in [h]$ to denote the groups that contain indices $i\in [hd]$ and $j\in [hd]$ respectively. 
    Define matrix $A'\in \Z^{h\times s}$ by \[A'_{\hat i,k} \coloneqq \left\lfloor \tfrac{1}{L} \min_{(\hat i-1)d<i \le \hat id}  A_{i,k}\right\rfloor.\]
    Define the matrix $B' \in \Z^{s\times h}$ analogously.
    Let $C'\in \Z^{h\times h}$ be their Min-Plus  product $A'\star B'$, which can be computed by brute force in time $O(h^2s)$. 

    For a modulus $p$, \cite{additiveapsp2026} defined a triple $(\hat i,k,\hat j) \in [h]\times [s]\times [h]$ to be a \emph{false positive} (modulo $p$) if it does \emph{not} satisfy 
    \begin{equation}
   \left \lvert A'_{\hat i,k}+B'_{k,\hat j}-C'_{\hat i,\hat j}\right \rvert < 4, 
      \label{eqn:roundadditiveerr}
    \end{equation}
   but satisfies
   \begin{equation}
 A'_{ \hat i , k} + B'_{k, \hat j } - C'_{ \hat i, \hat j}\in   \{-3,\dots,3\} \pmod{p}. 
 \label{eqn:modpeq}
\end{equation}

We say a modulus $p$ is \emph{successful for} $(\hat i,\hat j)\in [h]\times [h]$ if 
\[|F_{\hat i,\hat j}^{(p)}|\coloneqq \Big \lvert \Big \{k\in [s]: (\hat i,k,\hat j)\text{ is a false positive modulo $p$}\Big \}\Big \rvert \le \frac{s}{q}.\]
One can compute $|F_{\hat i,\hat j}^{(p)}|$ for all $(\hat i,\hat j)\in [h]\times [h]$ in $O(h^2 s)$ total time by brute force.

\cite{additiveapsp2026} relied on the fact that,
for fixed $(\hat i,\hat j)\in [h]\times [h]$,
a random prime $p \in \Theta(q\log U)$ is successful with $\ge 0.99$ success probability. 
Therefore, \cite{additiveapsp2026} chose a random sequence of $O(\log h)$ many such moduli which can (with high probability)  make every pair $(\hat i,\hat j)\in [h]\times [h]$ successful at least once. 

We shall now deterministically choose $U^{o(1)}$ many moduli $m_1,m_2,\dots, m_{U^{o(1)}}\in O(q\cdot U^{o(1)})$ in $\tcO(h^2 s\cdot U^{o(1)})$ time to make every pair $(\hat i,\hat j)\in [h]\times [h]$  successful at least once. This would be sufficient for derandomizing the lemma (with $U^{o(1)}$-factor slowdown compared to the original time bound).

Initialize $P \coloneqq [h]\times [h]$, which denote the set of pairs for which no moduli have been successful yet. We repeat the following procedure $U^{o(1)}$ times, each time finding a modulus $m$ that makes at least $1/U^{o(1)}$ fraction of pairs in $P$ successful (and removing them from $P$): 

\begin{itemize}
    \item  Let $M = \exp({\sqrt{\log U}})$.  Initialize $Q_0 \coloneqq P$. Perform the following step for $I\coloneqq \lceil \log_M q\rceil $ iterations: In the $i$-th iteration,
    \begin{itemize}
        \item  Find a prime (by brute-force enumeration) $p_i \in [CM\log U,2CM\log U]$ (where $C>1$ is a sufficiently large absolute constant) such that  the set 
        \begin{equation}
        \label{eqn:qi}
         Q_{i}\coloneqq  \big \{(\hat i,\hat j)\in Q_{i-1} : |F_{\hat i,\hat j}^{(p_1p_2\cdots p_i)}| \le \frac{s}{M^{i}} \big \}
        \end{equation}
        has size
        \begin{equation}
        \label{eqn:qibound}
         | Q_{i}| \ge 0.9|Q_{i-1}|.
        \end{equation}
        Checking whether a prime $p_i$ is satisfactory can be done by brute force in in $O(h^2 s)$ time.
        The existence of such $p_i$ can be proved as follows: 
        By definition, we have 
        \begin{equation}
        F_{\hat i,\hat j}^{(p_1\cdots p_{i})} \subseteq \big\{ k \in F_{\hat i,\hat j}^{(p_1\cdots p_{i-1})} :  A'_{ \hat i , k} + B'_{k, \hat j } \in C'_{ \hat i, \hat j} + \{-3,\dots,3\} \pmod{p_i}.  \big\}.
        \label{eqn:fpi}
        \end{equation}
        For fixed $(\hat i,\hat j)\in Q_{i-1}$ and $k\in F_{\hat i,\hat j}^{(p_1\cdots p_{i-1})}$, we know  \cref{eqn:roundadditiveerr} must be violated, so $\{A'_{\hat i,k}+B'_{k,\hat j} - C'_{\hat i,\hat j} - r\}_{r\in \{-3,\dots,3\}}$ are all non-zero integers bounded by $O(U)$, and thus together have at most $O(\log_M U)$ prime factors larger than $M$. Since the number of primes in the interval $[CM\log U,2CM\log U]$ is at least $\Omega((M\log U)/\log (M\log U))$ by the prime number theorem,  the probability that a random prime $p_i$ from this interval divides any of $\{A'_{\hat i,k}+B'_{k,\hat j} - C'_{\hat i,\hat j} - r\}_{r\in \{-3,\dots,3\}}$ is at most $\frac{1}{10M}$ (for large enough constant $C$). Consequently, by \cref{eqn:fpi} and linearity of expectation, 
        \[\Ex_{p_i}[|F_{\hat i,\hat j}^{(p_1\cdots p_{i})}|] \le \tfrac{1}{10M}|F_{\hat i,\hat j}^{(p_1\cdots p_{i-1})}|.\]
        By Markov's inequality,
        \[\Pr_{p_i}\Big[|F_{\hat i,\hat j}^{(p_1\cdots p_{i})}|\le \tfrac{1}{M}|F_{\hat i,\hat j}^{(p_1\cdots p_{i-1})}|\Big ] \ge 0.9.\]
        Recall that $|F_{\hat i,\hat j}^{(p_1\cdots p_{i-1})}| \le \frac{s}{M^{i-1}}$ for every $(\hat i,\hat j)\in Q_{i-1}$. Hence, by  \cref{eqn:qi} and linearity of expectation, we have
        \[\Ex_{p_i}[|Q_i|]\ge 0.9 |Q_{i-1}|.\]
        In particular, there exists a prime $p_i$ such that $|Q_i|\ge 0.9|Q_{i-1}|$ as claimed.
        \end{itemize}
        After $I=\lceil \log M_q\rceil$ iterations, we have found primes $p_1,\dots,p_I$. Define modulus $m = p_1p_2\cdots p_I$. By \cref{eqn:qibound} (recalling $Q_0=P$) we have 
        \[\big \lvert\big\{(\hat i,\hat j)\in P: |F^{(m)}_{\hat i,\hat j}|  \le \frac{s}{M^I}\big \}\big \rvert \ge |Q_{I}|\ge |P|\cdot 0.9^{I}.  \]
        Note that $0.9^I = 0.9^{\lceil \log_M q\rceil } \ge 0.9\cdot 0.9^{\log q/\log M} = 0.9\cdot 0.9^{\log q/\sqrt{\log U}} \ge 0.9\cdot 0.9^{\sqrt{\log U}} \ge 1/U^{o(1)}$.
        Note that $M^I = M^{\lceil \log_{M} q\rceil} \ge q$.  So
        \[\big \lvert\big\{(\hat i,\hat j)\in P: |F^{(m)}_{\hat i,\hat j}|  \le \frac{s}{q}\big \}\big \rvert \ge |Q_{I}|\ge |P|/U^{o(1)},  \]
        that is, the modulus $m$ makes at least $1/U^{o(1)}$ fraction of the pairs in $P$ successful, as wanted.
        Note that the modulus $m \le (2CM\log U)^I \le  (2CM\log U)^{1 + \log_{M} q} = qM \cdot (2C\log U)^{1 + \log_{M} q} \le qM \cdot (2C\log U)^{1+\sqrt{\log U}} \le q \cdot U^{o(1)}$ as required.
\end{itemize}
The total time complexity of deterministically finding the moduli is obviously $O(h^2 s\cdot U^{o(1)})$.
\end{proof}